\pdfoutput=1
\documentclass[a4paper,USenglish,cleveref,thm-restate]{lipics-v2021}

\bibliographystyle{plainurl}

\title{As Soon as Possible but Rationally}

\author{Véronique Bruyère}{Université de Mons (UMONS), Belgium \and \url{https://informatique-umons.be/bruyere-veronique/}}{veronique.bruyere@umons.ac.be}{https://orcid.org/0000-0002-9680-9140}{}
\author{Christophe Grandmont}{Université de Mons (UMONS), Belgium \and Université libre de Bruxelles (ULB), Belgium \and \url{https://chrisgdt.github.io/}}{christophe.grandmont@umons.ac.be}{https://orcid.org/0009-0009-4573-0123}{}
\author{Jean-François Raskin}{Université Libre de Bruxelles (ULB), Belgium \and \url{https://verif.ulb.ac.be/jfr/}}{jean-francois.raskin@ulb.be}{https://orcid.org/0000-0002-3673-1097}{Supported by Fondation ULB (\url{https://www.fondationulb.be/en/})}

\authorrunning{V.\ Bruyère, C.\ Grandmont, and J.-F.\ Raskin}

\ccsdesc[300]{Software and its engineering~Formal methods}
\ccsdesc[500]{Theory of computation~Solution concepts in game theory}
\ccsdesc[300]{Theory of computation~Logic and verification}

\keywords{Games played on graphs, rational verification, rational synthesis, Nash equilibrium, Pareto-optimality, quantitative reachability objectives}

\funding{This work has been supported by the Fonds de la Recherche Scientifique – FNRS under Grant n° T.0023.22 (PDR Rational)}

\nolinenumbers
\hideLIPIcs

\usepackage{stmaryrd}
\usepackage{tikz}
\usepackage{graphics}

\newtheorem{problem}[theorem]{Problem}

\crefname{enumi}{Step}{Step}

\mathchardef\mhyphen="2D

\newcommand{\EE}{\emph{Eve}}
\renewcommand{\AA}{\emph{Adam}}
\newcommand{\A}{\text{\AA}}
\newcommand{\E}{\text{\EE}}

\newcommand{\playersquare}{\text{\scriptsize$\square$\normalsize}}
\newcommand{\playerdiamond}{\text{\large$\diamond$\normalsize}}

\newcommand{\Players}{\mathcal{P}}

\newcommand{\Visit}[1]{\mathsf{Visit}(#1)}



\newcommand{\N}{\mathbb{N}}
\newcommand{\Z}{\mathbb{Z}}

\newcommand{\Nzero}{\N^{>0}}

\newcommand{\ssetminus}{\! \setminus \!}


\newcommand{\Plays}{\mathsf{Plays}}
\newcommand{\Hist}{\mathsf{Hist}}
\newcommand{\occ}[1]{\mathsf{Occ}({#1})}

\newcommand{\arena}{\mathcal{A}}

\newcommand{\game}{\mathcal{G}}


\newcommand{\upperVal}[2]{\overline{\mathsf{Val}_{#1}}({#2})}
\newcommand{\lowerVal}[2]{\underline{\mathsf{Val}_{#1}}({#2})}

\newcommand{\ValPlayer}[2]{\mathsf{Val}_{#1}({#2})}
\newcommand{\ValStar}[1]{\mathsf{Val}^*({#1})}
\newcommand{\ValStarFunc}{$\mathsf{Val}^*$}


\newcommand{\payoff}[1]{\mathsf{cost}_{\textnormal{env}}({#1})}
\newcommand{\payoffprim}[1]{\mathsf{cost'}_{\textnormal{env}}({#1})}

\newcommand{\costfunc}[1]{\mathsf{cost}_{#1}}
\newcommand{\costfuncprim}[1]{\mathsf{cost'}_{#1}}
\newcommand{\cost}[2]{\costfunc{#1}(#2)}
\newcommand{\costprim}[2]{\costfuncprim{#1}(#2)}


\newcommand{\strategyfor}[1]{\sigma_{#1}}


\newcommand{\machine}[1]{\mathcal{M}_{#1}}
\newcommand{\strategiesmachine}[1]{\llbracket\mathcal{M}_{#1}\rrbracket}

\newcommand{\strategyProfile}{\sigma}
\newcommand{\outcome}[1]{\langle #1 \rangle}
\newcommand{\outcomefrom}[2]{\outcome{#1}_{#2}}


\newcommand{\Histsigma}[1]{\mathsf{Hist}_{#1}}




\newcommand{\reach}[1]{T_{#1}}

\newcommand{\reachBounded}[2]{\mathsf{Reach}_{< #1}(#2)}
 


\newcommand{\safetyBounded}[2]{\mathsf{Safe}_{\geq #1}(#2)}



\newcommand{\np}{$\mathsf{NP}$}
\newcommand{\npHard}{\np{}-hard}
\newcommand{\npComplete}{\np{}-complete}



\newcommand{\pspace}{$\mathsf{PSPACE}$}
\newcommand{\pspaceHard}{\pspace{}-hard}
\newcommand{\pspaceComplete}{\pspace{}-complete}
\newcommand{\nexptime}{$\mathsf{NEXPTIME}$}
\newcommand{\nexptimeHard}{\nexptime{}-hard}
\newcommand{\nexptimeComplete}{\nexptime{}-complete}
\newcommand{\exptime}{$\mathsf{EXPTIME}$}
\newcommand{\exptimeHard}{\exptime{}-hard}
\newcommand{\exptimeComplete}{\exptime{}-complete}


\newcommand{\conp}{$\mathsf{coNP}$}
\newcommand{\conpHard}{\conp{}-hard}
\newcommand{\conpComplete}{\conp{}-complete}

\newcommand{\sigmaClass}{$\Sigma_2^\mathsf{P}$}

\newcommand{\sigmaComplete}{\sigmaClass{}-complete}

\newcommand{\piClass}{$\Pi_2^\mathsf{P}$}
\newcommand{\piHard}{\piClass{}-hard}
\newcommand{\piComplete}{\piClass{}-complete}


\newcommand{\npConp}{$\mathsf{NP}^{\mathsf{coNP}}$}


\newcommand{\coQBF}{coQBF}
\newcommand{\QBF}{QBF}

\newcommand{\sigmaQBF}{$\Sigma_2$QBF}
\newcommand{\sigmaQBFneg}{$\Sigma_2$QBF$^{\mathsf{neg}}$}

\newcommand{\probname}[4]{%
        \ifx#1\empty \else Universal \fi%
        \ifx#2CCooperative \else Non-Cooperative \fi%
        \ifx#3PPareto \else\ifx#3NNash \else Subgame Perfect \fi\fi%
        \ifx#4SSynthesis\else Verification\fi
}
\newcommand{\prob}[1]{$\mathsf{#1}$}
\newcommand{\coprob}[1]{$\mathsf{co#1}$}

\newcommand{\fixed}{$0$-fixed}
\newcommand{\fixedEquilibria}{\fixed{} NE}
\newcommand{\fixedStrategy}{$\sigma_0$-fixed}
\newcommand{\fixedEquilibriaStrategy}{\fixedStrategy{} NE}

\newcommand{\last}[1]{\mathsf{last}({#1})}
\newcommand{\first}[1]{\mathsf{first}({#1})}
\newcommand{\succc}[1]{\mathsf{succ}({#1})}

\usetikzlibrary{
  automata,
  arrows,
  positioning,
  shapes.geometric,
  patterns
}
\tikzset{
edge with arrows/.style = {
    ->,
    >=stealth,
    shorten >=1pt,
},
directed/.style = {
    edge with arrows,
    node distance=2.3cm,
    on grid,
    semithick,
    double distance=1.5pt,
},
automaton/.style = {
    directed,
    auto,
    initial text={},
    pin distance = 1ex,
    every pin edge/.style = {
        draw=none
    },
    every state/.style={
      minimum size=1.5mm,
      inner sep=1pt,
    }
},
system/.style = {
    state,
    circle,
    minimum size=0mm,
    inner sep=5pt,
},
system2/.style={
    state,
    ellipse,
    minimum size=0mm,
    inner sep=2pt,
},
environment/.style = {
    state,
    rectangle,
    minimum size=0mm,
    inner sep=8pt,
},
environment2/.style = {
    state,
    diamond,
    minimum size=0mm,
    inner sep=4pt,
},
}

\begin{document}

\maketitle

\begin{abstract}
This paper addresses complexity problems in rational verification and synthesis for multi-player games played on weighted graphs, where the objective of each player is to minimize the cost of reaching a specific set of target vertices. In these games, one player, referred to as the system, declares his strategy upfront. The other players, composing the environment, then rationally make their moves according to their objectives. The rational behavior of these responding players is captured through two models: they opt for strategies that either represent a Nash equilibrium or lead to a play with a Pareto-optimal cost tuple.
\end{abstract}

\section{Introduction}

Nowadays, formal methods play a crucial role in ensuring the reliability of critical computer systems. Still, the application of formal methods on a large scale remains elusive in certain areas, notably in multi-agent systems. Those systems pose a significant challenge for formal verification and automatic synthesis because of their heterogeneous nature, encompassing everything from conventional reactive code segments to fully autonomous robots and even human operators. Crafting formal models that accurately represent the varied components within these systems is often a too complex task.

Although constructing detailed operational models for humans or sophisticated autonomous robots might be problematic, it is often more feasible to identify the \emph{overarching goals} that those agents pursue. Incorporating these goals is crucial in the design and validation process of systems that interact with such entities. Typically, a system is not expected to function flawlessly under all conditions but rather in scenarios where the agents it interacts with act in alignment with their objectives, i.e., they \emph{behave rationally}. \emph{Rational synthesis} focuses on creating a system that meets its specifications against any behavior of environmental agents that is guided by their goals (and not against any of their behaviors).
\emph{Rational verification} studies the problem of ensuring that a system enforces certain correctness properties, not in all conceivable 
scenarios, but specifically in scenarios where environmental agents behave in accordance with their objectives.

Rationality can be modeled in various ways. In this paper, we focus on two general approaches. The first approach comes from game theory where rationality is modeled by the concept of equilibrium, such as \emph{Nash equilibria} (NE)~\cite{Nash50} or \emph{subgame perfect equilibria} (SPE), a refinement of NEs~\cite{osbornebook}. The second approach treats the environment as a single agent but with multiple, sometimes conflicting, goals, aiming for behaviors that achieve a \emph{Pareto-optimal} balance among these objectives. The concept of Pareto-optimality (PO) and its application in multi-objective analysis have been explored primarily in the field of optimization~\cite{DBLP:conf/focs/PapadimitriouY00}, but also in formal methods~\cite{DBLP:conf/fossacs/AlurDMW09,DBLP:conf/cav/BrenguierR15}.
These two notions of rationality are different in nature: in the first setting, each component of the environment playing an equilibrium is considered to be an independent selfish individual, excluding cooperation scenarios, while in the second setting, several components of the environment can collaborate and agree on trade-offs. The challenge lies in adapting these concepts to \emph{reactive systems} characterized by ongoing, non-terminating interactions with their environment. This necessitates the transition from two-player zero-sum games on graphs, the classical approach used to model a fully adversarial environment (see e.g.~\cite{PnueliR89}), to the more complex setting of \emph{multi-player non zero-sum games on graphs} used to model environments composed of various rational agents. 

Rational synthesis and rational verification have attracted large attention recently, see e.g.~\cite{DBLP:conf/mfcs/BriceRB23, Pareto-Rational-Verification, ConduracheFGR16, FismanKL10-crsp-kupferman, DBLP:journals/ai/GutierrezNPW20, DBLP:journals/amai/GutierrezNPW23, KupfermanPV14-ncrsp, tacas2022}. But the results obtained so far, with a few exceptions that we detail below, are limited to the \emph{qualitative} setting formalized as Boolean outcomes associated with $\omega$-regular objectives. Those objectives are either specified using linear temporal specifications or automata over infinite words (like parity automata). The complexity of those problems is now well understood (with only a few complexity gaps remaining, see e.g.~\cite{ConduracheFGR16,tacas2022}). The methods to solve those problems and get completeness results for worst-case complexity are either based on automata theory (using mainly automata over infinite trees) or by reduction to zero-sum games. \emph{Quantitative} objectives are less studied and revealed to be much more challenging. For instance, it is only very recently that the rational verification problem was studied, for SPEs in non zero-sum games with mean-payoff, energy, and discounted-sum objectives in~\cite{DBLP:conf/mfcs/BriceRB23}, for an LTL specification in multi-agent systems that behave according to an NE with mean-payoff objectives in~\cite{DBLP:journals/amai/GutierrezNPW23} or with quantitative probabilistic LTL objectives in~\cite{DBLP:conf/atal/Hyland0KW24}. In~\cite{AlmagorKP18}, the rational synthesis problem was studied for the quantitative extension LTL[$\mathcal F$] of LTL where the Boolean operators are replaced with arbitrary functions mapping binary tuples into the interval $[0, 1]$.

In this paper, we consider \emph{quantitative reachability} objectives. Our choice for studying these objectives was guided by their fundamental nature and also by their relative simplicity. Nevertheless, as we will see, they are challenging for both rational synthesis and rational verification. Indeed, to obtain worst-case optimal algorithms and establish completeness results, 
we had to resort to the use of \emph{innovative} theoretical tools, more advanced than those necessary for the qualitative framework. In our endeavor, we have established the exact complexity of most studied decision problems in rational synthesis and rational verification.

\begin{table}
\resizebox{\columnwidth}{!}{%
\begin{threeparttable}[b]
\centering
\caption{Summary of complexity results.}

\begin{tabular}{l|l|l|l|l}
                             & \textbf{Non-coop. verif.} & \textbf{Universal non-coop. verif.} & \textbf{Coop. synthesis}  & \textbf{Non-coop. synthesis} \\ \hline
\textbf{PO, weights}     & \piComplete{}       & \pspaceComplete{}  & \pspaceComplete{} & Unknown, \nexptimeHard{}\tnote{1}~~\cite{gaspard-pareto-quanti-reach}                              \\
\textbf{PO, qualitative} & \piComplete{}       & \pspaceComplete{}  & \pspaceComplete{} & \nexptimeComplete{}~\cite{Stackelberg-Pareto-Synthesis}   \\

\textbf{NE, weights}     & \conpComplete{}     & \conpComplete{}    & \npComplete{}     & Unknown, \exptimeHard{}\tnote{2} \\
\textbf{NE, qualitative}   &  \conpComplete{}~\cite{grandmont23} & \conpComplete{}~\cite{grandmont23}   & \npComplete{}~\cite{ConduracheFGR16}          & \pspaceComplete{}~\cite{ConduracheFGR16}                               
\end{tabular}
\begin{tablenotes}
\item [1] In~\cite{gaspard-pareto-quanti-reach}, the authors show the \nexptimeComplete{}ness when the weight function is the same for each player, we do not have the result for the general case of multiple weight functions.
\item [2] For the important special case of one-player environments, we provide an algorithm that runs in \exptime{} and we can prove \pspaceHard{}ness. The \exptimeHard{}ness of the general case already holds for two-player environments.
\setcounter{footnote}{1} 
\end{tablenotes}
\end{threeparttable}}
\label{table:results}
\end{table}

\subparagraph*{Technical Contributions.}
In this work, we explore both verification and synthesis problems through the lens of rationality, defined by Pareto-optimality and Nash equilibria, for quantitative reachability objectives. For the synthesis problem, we also consider the \emph{cooperative} variant where the environment cooperates with the system: we want to decide whether the system has a strategy and the environment a rational response to this strategy such that the objective of the system is enforced. Our results are presented in \cref{table:results}, noting that all results lacking explicit references are, to our knowledge, novel contributions.
For completeness, the table includes (new and known) results for the qualitative scenario. 

The results for \emph{PO rationality} are as follows. (1) For the verification problems, we assume that the behavior of the system is formalized by a \emph{nondeterministic Mealy machine}, used to represent a (usually infinite) set of its possible implementations. For each of those implementations, we verify that the quantitative reachability objective of the system is met against any rational behavior of the environment. We establish that this problem is \pspaceComplete{}. To obtain the upper bound, we rely on a \emph{genuine combination of techniques} based on Parikh automata and a recursive \pspace{} algorithm (for positive Boolean combinations of bounded safety objectives, a problem of independent interest).
Parikh automata are used to guess a compact representation of certificates which are paths of possibly exponential length in the size of the problem input.
When the Mealy machine is deterministic, we show that the complexity goes down to \piComplete{}ness, as the previous \pspace{} algorithm is replaced by a \conp{} oracle. (2) For the synthesis problems, we only consider the cooperative version which we prove to be \pspaceComplete{}, as the non-cooperative version was established to be \nexptimeComplete{} in~\cite{gaspard-pareto-quanti-reach} when there is a common weight function for each player.

The results for \emph{NE rationality} are as follows. (1) We establish that, surprisingly, the verification problems are \conpComplete{} both for the general case of a nondeterministic Mealy machine and for the special case where it is deterministic. The upper bounds for those problems are again based on Parikh automata certificates but here there is no need to use a \conp{} oracle. (2) For the synthesis problems, the landscape is more challenging. For the cooperative case, we were able to establish that the problem is \npComplete{}. For the non-cooperative case, we have partially solved the problem and established the following results. When the environment is composed of a single rational player, the problem is in \exptime{} and \pspaceHard{}. For an environment with at least two players, we show that the problem is \exptimeHard{} but we leave its decidability open. The lower bounds are obtained using an elegant reduction from countdown games~\cite{countdown-game}. We give indications in the paper why the problem is difficult to solve and why classical automata-theoretic methods \emph{may not be sufficient} (if the problem is decidable).

In this paper, we focus on nonnegative weights as we show that considering arbitrary weights leads to undecidability of the synthesis problems. We also focus on NEs instead of SPEs, even if the latter are a better concept to model rational behavior in games played on graphs. Indeed, it is well-known that SPEs pose greater challenges than NEs. So, starting with NEs offers a better initial step for the algorithmic treatment of rational verification and synthesis in quantitative scenarios, an area that remains largely unexplored.

\subparagraph*{Related Work.} 
The survey~\cite{DBLP:conf/fsttcs/BrihayeGMR23} presents several results about different game models and different kinds of objectives related to reachability. Quantitative objectives in \emph{two-player zero-sum games} were largely studied, see e.g.~\cite{DBLP:conf/concur/BrihayeGHM15,DBLP:conf/fsttcs/ChatterjeeDHR10,ehrenfeucht1979positional}, even if exact complexity results are often elusive due to the intricate nature of the problems (e.g. the exact complexity of solving mean-payoff games is still an open problem). 
In multi-player non zero-sum games, the \emph{(constrained) existence} of equilibria is also well studied. The existence of simple NEs was established in~\cite{DBLP:conf/lfcs/BrihayePS13} for mean-payoff and discounted-sum objectives. No decision problem is considered in that paper. The constrained existence of SPEs in quantitative reachability games was proved \pspaceComplete{} in~\cite{DBLP:journals/iandc/BrihayeBGR21}. We prove here that the complexity is lower when we use NEs to model rationality, as we obtain \npComplete{}ness for the related cooperative synthesis problem. Deciding the constrained existence of SPEs was recently solved for quantitative reachability games in~\cite{DBLP:conf/concur/BrihayeBGRB19} and for mean-payoff games in~\cite{DBLP:conf/concur/BriceRB21, DBLP:conf/icalp/BriceRB22}.
The cooperative and non-cooperative rational \emph{synthesis problems} were studied in~\cite{NoncoopSynth_Meanpayoff_2players} for games with mean-payoff and discounted-sum objectives when the environment is composed of a single player. The mean-payoff case was proved to be \npComplete{} and the discounted-sum case was linked to the open target discounted sum problem, which explains the difficulty of solving the problem in this case.

\subparagraph*{Structure of the Paper.} The background is given in \cref{section:background}. The formal definitions of the studied problems and our main complexity results are stated in \cref{section:studied-problems}. The proofs of our results are given for PO rationality in \cref{section:pareto}, and for NE rationality in \cref{section:Nash}. We give a conclusion and future work in \cref{sec:conclusion}.

\section{Background}\label{section:background}

\subparagraph*{Arenas and Plays.}

A (finite) \emph{arena} $\arena{}$ is a tuple $(V,E,\Players,(V_i)_{i\in\Players})$ where $V$ is a finite set of \emph{vertices}, $E \subseteq V \times V$ is a set of \emph{edges}, $\Players$ is a finite set of \emph{players}, and $(V_i)_{i\in\Players}$ is a partition of $V$, where $V_i$ is the set of vertices \emph{owned} by player~$i$. We assume that $v \in V$ has at least one \emph{successor}, i.e., the set $\succc{v}=\{v'\in V\mid (v,v')\in E\}$ is nonempty.

We define a \emph{play} $\pi\in V^\omega$ (resp.\ a \emph{history} $h\in V^*$) as an infinite (resp.\ finite) sequence of vertices $\pi_0\pi_1\dots$ such that $(\pi_i,\pi_{i+1})\in E$ for any two consecutive vertices $\pi_i, \pi_{i+1}$. The \emph{length} $|h|$ of a history $h$ is the number of its vertices. The empty history is denoted $\varepsilon$. Given a play $\pi$ and two indexes $k < k'$, we write $\pi_{\leq k}$ the prefix $\pi_0\dots\pi_k$ of $\pi$, $\pi_{\geq k}$ the suffix $\pi_{k} \pi_{k+1} \dots$ of $\pi$, and $\pi_{[k,k'[}$ for $\pi_k\dots\pi_{k'-1}$. We denote the first vertex of $\pi$ by $\first{\pi}$. These notations are naturally adapted to histories. We also write $\last{h}$ for the last vertex of a history $h \neq \varepsilon$. The set of all plays (resp.\ histories) of an arena $\arena{}$ is denoted $\Plays_{\arena}\subseteq V^\omega$ (resp.\ $\Hist_{\arena}\subseteq V^*$), and we write $\Plays$ (resp.\ $\Hist$) when the context is clear. For $i \in \Players$, the set $\Histsigma{i}\subseteq V^*V_i$ represents all histories ending in a vertex $v\in V_i$. 
That is, $\Histsigma{i}=\{h\in\Hist\mid h\neq\varepsilon \text{ and } \last{h}\in V_i\}$. 

We can \emph{concatenate} two nonempty histories $h_1$ and $h_2$ into a single one, denoted $h_1\cdot h_2$ or $h_1h_2$ if $(\last{h_1},\first{h_2})\in E$. When a history can be concatenated to itself, we call it a \emph{cycle}. Furthermore, a play $\pi= \mu\nu\nu \dots = \mu(\nu)^\omega$ where $\mu\nu\in\Hist$ with $\nu$ a cycle, is called a \emph{lasso}. The \emph{length} of $\pi$ is then the length of $\mu\nu$.\footnote{To have a well-defined length for a lasso $\pi$, we assume that $\pi = \mu(\nu)^\omega$ with $\mu\nu$ of minimal length.} Given a play $\pi$, a \emph{cycle along $\pi$} refers to a sequence $\pi_{[m,n[}$ with $\pi_m = \pi_n$.
We denote $\occ{\pi}=\{v\in V\mid \exists n\in\N,\; v=\pi_n\}$ the set of all vertices occurring along $\pi$, and we say that $\pi$ \emph{visits} or \emph{reaches} a vertex $v\in\occ{\pi}$ or a set $T$ such that $T \cap \occ{\pi} \neq \emptyset$. The previous notions extend to histories.

Given an arena $\arena$, if we fix an \emph{initial vertex} $v_0\in V$, we say that $\arena$ is \emph{initialized} and we denote by $\Plays(v_0)$ (resp.\ $\Hist(v_0)$) all its plays (resp.\ nonempty histories) starting with $v_0$.
An arena is called \emph{weighted} if it is augmented with a non-negative \emph{weight function} $w_i: E \rightarrow \N$ for each player~$i$. We denote by $W$ the greatest weight, i.e., $W = \max\{w_i(e) \mid e \in E, i \in \Players\}$. We extend $w_i$ to any history $h = \pi_0 \dots \pi_n$ such that $w_i(h) = \sum_{j = 1}^n \, w_i((\pi_{j-1},\pi_{j}))$.

\subparagraph*{Reachability Games.}

A \emph{reachability game} is a tuple $\game{}=(\arena, (\reach{i})_{i \in \Players})$ where $\arena$ is a weighted arena and $\reach{i}\subseteq V$ is a \emph{target set} for each $i\in\Players$. We define a \emph{cost function} $\costfunc{i}:\Plays\rightarrow\N\cup\{+\infty\}$ for each player~$i$, such that for all plays $\pi=\pi_0\pi_1\dots\in\Plays$,
$\cost{i}{\pi}=w_i(\pi_{0} \dots \pi_{n})$ with $n$ the smallest index such that $\pi_n \in \reach{i}$, if it exists and $\cost{i}{\pi}=+\infty$ otherwise.

The \emph{reachability objective} of player~$i$ is to \emph{minimize} this cost as much as possible, i.e., given two plays $\pi, \pi'$ such that $\costfunc{i}(\pi) < \costfunc{i}(\pi')$, player~$i$ prefers $\pi$ to $\pi'$. We extend $<$ to tuples of costs as follows: $(\costfunc{i}(\pi))_{i \in \Players} < (\costfunc{i}(\pi'))_{i \in \Players}$
if $\costfunc{i}(\pi) \leq \costfunc{i}(\pi')$ for all $i \in \Players$ and there exists some $j \in \Players$ such that $\costfunc{j}(\pi) < \costfunc{j}(\pi')$. Given a play $\pi$, we denote by $\Visit{\pi}$ the set of players~$i$ such that $\pi$ visits $\reach{i}$, i.e., $\Visit{\pi} = \{i \in \Players \mid \cost{i}{\pi} < + \infty \}$. When for all $i \in \Players$ and $e\in E$, $w_i(e) = 0$, we speak of \emph{qualitative} reachability games, since $\costfunc{i}(\pi) = 0$ if $\occ{\pi} \cap \reach{i} \neq \emptyset$ and $+\infty$ otherwise.

\subparagraph*{Strategies and Mealy Machines.}

Let $\arena{}=(V,E,\Players,(V_i)_{i\in\Players})$ be an arena. A \emph{strategy} $\strategyfor{i}:\Histsigma{i}\rightarrow V$ for player~$i$ maps any history $h\in \Histsigma{i}$ to a vertex $v\in \succc{\last{h}}$, which is the next vertex that player~$i$ chooses to move to after reaching the last vertex in $h$. The set of all strategies of player~$i$ is denoted $\Sigma_i$.
A play
$\pi = \pi_0\pi_1 \dots$ is \emph{consistent} with $\strategyfor{i}$ if $\pi_{k+1} = \strategyfor{i}(\pi_0 \dots \pi_k)$ for all $k \in \N$ such that $\pi_k \in V_i$.
Consistency is naturally extended to histories. A tuple of strategies $\sigma = (\strategyfor{i})_{i\in\Players}$ with $\strategyfor{i} \in \Sigma_i$, is called a \emph{strategy profile}. In an arena initialized at $v_0$, we limit the domain of each strategy $\strategyfor{i}$ to $\Histsigma{i}(v_0)$; the play $\pi$ starting from $v_0$ and consistent with each $\strategyfor{i}$ is denoted $\outcomefrom{\strategyProfile}{v_0}$ and called \emph{outcome}.

Given an initialized arena $\arena$, we can encode a strategy or a set of strategies by a (finite) \emph{nondeterministic Mealy machine}~\cite{DBLP:conf/mfcs/BriceRB23,Pareto-Rational-Verification} $\machine{}=(M,m_0,\delta,\tau)$ on $\arena$, where $M$ is a finite set of \emph{memory states}, $m_0\in M$ is the \emph{initial state}, $\delta:M\times V\rightarrow 2^M$ is the \emph{update} function, and $\tau:M\times V_i\rightarrow 2^V$ is the \emph{next-move} function. Such a machine embeds a (possibly infinite) set of strategies $\strategyfor{i}$ for player~$i$, called \emph{compatible} strategies. Formally, $\strategyfor{i}$ is compatible with $\machine{}$ if there exists a mapping $h \mapsto m_h$ such that $m_{hv} \in \delta(m_h,v)$ for every $hv \in \Hist(v_0)$ (with $m_h = m_0$ when $h$ is empty), and when $v \in V_i$, $\strategyfor{i}(hv) \in \tau(m_h,v)$. An example of such a machine~$\machine{}$ is given in \cref{appendix:exMealyMachine}. We denote by $\llbracket \machine{} \rrbracket$ the set of all strategies compatible with~$\machine{}$. The \emph{memory size} of $\machine{}$ is equal to $|M|$. We say that $\machine{}$ is \emph{deterministic} when the image of both functions $\delta$ and $\tau$ is a singleton. Thus when $\machine{}$ is deterministic, $\llbracket \machine{} \rrbracket = \{\strategyfor{i}\}$ and $\strategyfor{i}$ is called \emph{finite-memory}, and when additionally $|M| = 1$, $\strategyfor{i}$ is called \emph{memoryless}.

\section{Studied Problems}\label{section:studied-problems}

In this section, within the context of rational synthesis and verification, we consider a reachability game $\game = (\arena, (\reach{i})_{i \in \Players})$ with $\arena$ an initialized weighted arena and $\Players = \{0,1, \dots, t\}$ such that player~$0$ is a specific player, often called \emph{system} or \emph{leader}, and the other players~$1,\dots,t$ compose the \emph{environment} and are called \emph{followers}. Player~$0$ announces his strategy $\strategyfor{0}$ at the beginning of the game and is not allowed to change it according to the behavior of the other players. The response of those players to $\strategyfor{0}$ is supposed to be \emph{rational}, where the rationality can be described as the outcome of a \emph{Nash equilibrium}~\cite{Nash50} or as a \emph{Pareto-optimal} play~\cite{Stackelberg-Pareto-Synthesis}.

\subparagraph*{Nash Equilibria.}
A strategy profile for the environment is a Nash equilibrium if no player has an incentive to unilaterally deviate from this profile. In other words, no player can improve his cost by switching to a different strategy, assuming that the other players stick to their current strategies.
Formally, given the initial vertex $v_0$ and a strategy $\strategyfor{0}$ announced by player~$0$, a strategy profile $\strategyProfile = (\strategyfor{i})_{i\in\Players}$ is called a \emph{\fixed{} Nash equilibrium} (\fixedEquilibria{}) if for every player~$i\in\Players\ssetminus\{0\}$ and every strategy $\tau_i \in \Sigma_i$, it holds that $\cost{i}{\outcomefrom{\strategyProfile}{v_0}} \leq \cost{i}{\outcomefrom{\tau_{i},\strategyfor{-i}}{v_0}}$, where $\strategyfor{-i}$ denotes $(\strategyfor{j})_{j \in \Players \setminus \{i\}}$, i.e., $\tau_i$ is not a profitable deviation.
We also say that $\strategyProfile$ is a \emph{\fixedEquilibriaStrategy{}} to emphasize the strategy $\strategyfor{0}$ of player~$0$.

\subparagraph*{Pareto-Optimality.}

When all players collaborate to obtain a best cost for everyone, we need another concept of rationality. In that case, we suppose that the players in $\Players \ssetminus \{0\}$ form a \emph{single} player, player~$1$, that has a \emph{tuple} of targets sets $(\reach{i})_{i \in \{1,\dots,t\}}$. For each play~$\pi \in \Plays(v_0)$, player~$1$ gets a cost tuple $\payoff{\pi} = (\costfunc{i}(\pi))_{i \in \{1,\dots,t\}}$, and prefers $\pi$ to $\pi'$ if $\payoff{\pi} < \payoff{\pi'}$ for the componentwise partial order $<$ over $(\N \cup \{+ \infty\})^t$. Given such a modified game and a strategy $\strategyfor{0}$ announced by player~$0$, we consider the set $C_{\strategyfor{0}}$ of cost tuples of plays consistent with $\strategyfor{0}$ that are \emph{Pareto-optimal} for player~$1$, i.e., minimal with respect to $<$. Hence, $C_{\strategyfor{0}} = \min\{\payoff{\pi} \mid \pi \in \Plays(v_0) \text{ consistent with } \strategyfor{0}\}$. Notice that $C_{\strategyfor{0}}$ is an antichain. A cost tuple $p$ (called cost in the sequel) is said to be $\strategyfor{0}$\emph{-fixed Pareto-optimal} ($\strategyfor{0}$-fixed PO or simply $0$-fixed PO) if $p \in C_{\strategyfor{0}}$. Similarly, a play is said to be $\strategyfor{0}$-fixed PO if its cost is $\strategyfor{0}$-fixed PO.

In some problems studied in this paper, we will have to consider games such that all vertices owned by player~$0$ have only one successor, which means that player~$0$ has no choice but to choose this successor. In this case, we say that \emph{player~$1$ is the only one to play}.

\subparagraph*{Rational Verification.}

We now present the studied decision problems related to the concept of \emph{rational verification}. Given some threshold $c \in \N$, the goal is to verify that a strategy $\strategyfor{0}$ announced by player~$0$ guarantees him a cost $\cost{0}{\pi}\leq c$ whatever the rational response $\pi$ of the environment. By rational response, we mean either a \fixedEquilibriaStrategy{} outcome $\pi$, or a $\strategyfor{0}$-fixed PO play $\pi$.
The strategy $\strategyfor{0}$ is usually given as a deterministic Mealy machine. We can go further: with a nondeterministic Mealy machine, we want to verify whether all strategies $\strategyfor{0} \in \llbracket \mathcal{M} \rrbracket$ are solutions. In the latter case, we speak about \emph{universal} verification.

\begin{problem} \label{problem:verification}
    Given a reachability game $\game$ with an initialized arena, a nondeterministic Mealy machine $\machine{0}$ for player~$0$, and a threshold $c \in \N$,
\begin{itemize}
\item If $\llbracket\machine{0}\rrbracket = \{\strategyfor{0}\}$, the \emph{\probname{}{NC}{N}{V} problem (\prob{NCNV})} asks whether for all \fixedEquilibriaStrategy s $\sigma$, it holds that $\cost{0}{\outcomefrom{\sigma}{v_0}}\leq c$.
\item The
\emph{\probname{U}{NC}{N}{V} problem (\prob{UNCNV})} asks whether for all $\strategyfor{0} \in \llbracket \machine{0}\rrbracket$ and all \fixedEquilibriaStrategy s $\sigma$, it holds that $\cost{0}{\outcomefrom{\sigma}{v_0}}\leq c$.
\item If $\llbracket\machine{0}\rrbracket = \{\strategyfor{0}\}$, the
\emph{\probname{}{NC}{P}{V} problem (\prob{NCPV})} asks whether for all $\strategyfor{0}$-fixed PO plays $\pi$, it holds that $\cost{0}{\pi}\leq c$.
\item The
\emph{\probname{U}{NC}{P}{V} problem (\prob{UNCPV})} asks whether for all $\strategyfor{0} \in \llbracket \machine{0} \rrbracket$ and all $\strategyfor{0}$-fixed PO plays $\pi$, it holds that $\cost{0}{\pi}\leq c$.
\end{itemize}
\end{problem}

\subparagraph*{Rational Synthesis.}

We consider the more challenging problem of \emph{rational synthesis}. 
Given a threshold $c \in \N$, the goal is to synthesize a strategy $\strategyfor{0}$ for player~$0$ (instead of verifying some $\strategyfor{0}$) that guarantees him a cost $\cost{0}{\pi}\leq c$ whatever the rational response $\pi$ of the environment. We also consider the simpler problem where the environment \emph{cooperates} with the leader by proposing \emph{some} rational response $\pi$ that guarantees him a cost $\cost{0}{\pi}\leq c$.

\begin{problem} \label{problem:synthesis}
    Given a reachability game $\game$ with an initialized arena and a threshold $c \in \N$,
    \begin{itemize}
        \item The \emph{\probname{}{C}{N}{S} (\prob{CNS})} problem asks whether there exists $\strategyfor{0} \in \Sigma_0$ and a \fixedEquilibriaStrategy{} $\sigma$ such that $\cost{0}{\outcomefrom{\sigma}{v_0}}\leq c$.
        \item The \emph{\probname{}{NC}{N}{S} (\prob{NCNS})} problem asks whether there exists $\strategyfor{0} \in \Sigma_0$ such that for all \fixedEquilibriaStrategy s $\sigma$, it holds that $\cost{0}{\outcomefrom{\sigma}{v_0}}\leq c$.
        \item The \emph{\probname{}{C}{P}{S} (\prob{CPS})} problem asks whether there exists $\strategyfor{0} \in \Sigma_0$ and a $\strategyfor{0}$-fixed PO play $\pi$ such that $\cost{0}{\pi}\leq c$.
        \item The \emph{\probname{}{NC}{P}{S} (\prob{NCPS})} problem asks whether there exists $\strategyfor{0} \in \Sigma_0$ such that for all $\strategyfor{0}$-fixed PO plays $\pi$, it holds that $\cost{0}{\pi}\leq c$.
    \end{itemize}
\end{problem}

\begin{figure}[t]
  \centering
  \begin{minipage}[b]{0.42\textwidth}
    \centering
    \begin{tikzpicture}[automaton,scale=0.9,every node/.style={scale=0.6},node distance=1.5]
      \node[initial,state,environment] (v0)                     {$v_0$};
      \node[state,system,label=above:{$\playerdiamond$}]         (v1) [below right=.7cm and 1cm of v0] {$v_1$};
      \node[state,environment2]   (v2) [right=2.5cm of v0] {$v_2$};
      \node[state,system,label=below:{$0,\playersquare$}]         (v3) [right=of v1] {$v_3$};
      \node[state,system,label=below:{$0,\playerdiamond$}]         (v4) [below right=.7cm and 1.5cm of v2] {$v_4$};
      \node[state,system,label=above:{$\playersquare$}]         (v5) [right=1.5cm of v2] {$v_5$};

      \path (v0) edge (v1)
                 edge (v2)
            (v1) edge (v3)
                 edge (v2)
            (v2) edge (v4)
                 edge (v5)
            (v3) edge[loop right] (v3)
            (v4) edge[loop right] (v4)
            (v5) edge[loop right] (v5);
    \end{tikzpicture}
    \caption{An example illustrating the two concepts of rational response.}
    \label{fig:example-nash-pareto}
  \end{minipage}
  \hfill
  \begin{minipage}[b]{0.56\textwidth}
    \centering
    \begin{tikzpicture}[automaton,scale=1,every node/.style={scale=0.7},node distance=1.5]
      \node[initial,state,environment] (v0)                     {$v_0$};
      \node[state,system]         (T) [right=of v0] {$v_1$};

      \path (v0) edge             node {$(1,0)$} (T)
                 edge[loop above] node {$(1,0)$} (v0)
            (T)  edge[loop above] node {$(1,0)$} (T);
    \end{tikzpicture}
    \caption{An example showing that PO lasso plays in the \coprob{NCPV} problem may have an exponential length.}
    \label{fig:example-pareto-ncpv-exponential-witness-c}
  \end{minipage}
\end{figure}

\begin{example}

To illustrate these problems, let us study a simple example depicted in \cref{fig:example-nash-pareto} with three players: the system, player~$0$, and two players in the environment, players~$\playersquare$ and~$\playerdiamond$. Player~$0$ owns the circle vertices, player~$\playersquare$ owns the square initial vertex $v_0$, and player~$\playerdiamond$ owns the diamond vertex $v_2$.
Each player~$i$ has a target set, $\reach{0} = \{v_3,v_4\}$, $\reach{\playersquare} = \{v_3,v_5\}$ and $\reach{\playerdiamond} = \{v_1,v_4\}$, and a constant weight $w_i(e) = 1$ for all $e \in E$. When a vertex $v$ is in $\reach{i}$, we depict the symbol of player~$i$ nearby $v$. As the graph is acyclic, the possible player strategies are all memoryless. In the sequel, we thus only indicate the successor chosen by the player.

Let us show that $\strategyfor{0}$ defined by $\strategyfor{0}(v_1) = v_2$ is a solution to the \prob{NCNS} problem with the threshold $c = 3$. Given $\strategyfor{0}$, there exist four distinct strategy profiles $\sigma = (\strategyfor{0},\strategyfor{\playersquare},\strategyfor{\playerdiamond})$. When, for example, $\strategyfor{\playersquare}(v_0)=v_2$ and $\strategyfor{\playerdiamond}(v_2)=v_5$, we abusively denote $\sigma$ as $\{v_0 \rightarrow v_2, v_2 \rightarrow v_5\}$:
    \begin{itemize}
    \item $\{v_0 \rightarrow v_2, v_2 \rightarrow v_5\}$ is not a \fixedEquilibriaStrategy{} because its outcome $\pi_1 = v_0v_2(v_5)^\omega$ has a infinite cost for player~$\playerdiamond$ who will deviate from $v_2$ to $v_4$ to get a cost of $2$; 
    \item similarly, $\{v_0 \rightarrow v_1, v_2 \rightarrow v_5\}$ with outcome $\pi_2 = v_0v_1v_2(v_5)^\omega$ is not a \fixedEquilibriaStrategy{};
    \item the profile $\{v_0 \rightarrow v_1, v_2 \rightarrow v_4\}$ is a \fixedEquilibriaStrategy{}, its outcome is $\pi_3 = v_0v_1v_2(v_4)^\omega$ with $\cost{\playersquare}{\pi_3} = +\infty$, $\cost{\playerdiamond}{\pi_3} = 1$ and $\cost{0}{\pi_3} = 3 \leq c$, so if player~$\playersquare$ deviates from $v_1$ to $v_2$, his cost is still $+\infty$, and player~$\playerdiamond$ has no incentive to deviate since $\cost{\playerdiamond}{\pi_3}$ is already the smallest available;
    \item the profile $\{v_0 \rightarrow v_2, v_2 \rightarrow v_4\}$ with the outcome $\pi_4 = v_0v_2(v_4)^\omega$ is also a \fixedEquilibriaStrategy{} and $\cost{0}{\pi_4} = 2 \leq c$.
\end{itemize}

So, $\strategyfor{0}$ is a solution to the \prob{NCNS} problem with $c = 3$, but not with $c = 2$. It is also a solution for the \prob{CNS} problem.
One can verify that $\strategyfor{0}'$ such that $\strategyfor{0}'(v_1) = v_3$ is a solution to the \prob{NCNS} problem with $c = 2$, since the only $\strategyfor{0}'$-fixed NE outcome is $\pi_5 = v_0v_1(v_3)^\omega$.

We now show that $\strategyfor{0}$ is not a solution to the \prob{CPS} problem with $c = 2$. Let us consider the same four outcomes as before. Their cost for the environment are: $\payoff{\pi_1} = (2,+\infty)$, $\payoff{\pi_2} = (3,1)$, $\payoff{\pi_3} = (+\infty,1)$, and $\payoff{\pi_4} = (+\infty,2)$, meaning that $C_{\strategyfor{0}}= \{(2,+\infty), (3,1)\}$.
Consequently, the only $\strategyfor{0}$-fixed PO plays are $\pi_1$ and $\pi_2$, both giving a cost of $+\infty$ to player~$0$. However, the strategy $\strategyfor{0}'$ is a solution, as there is only one $\strategyfor{0}'$-fixed PO play, $\pi_5 = v_0v_1(v_3)^\omega$, with $\payoff{\pi_5} = (2,1)$ and $\cost{0}{\pi_5} = 2$.
\end{example}

\subparagraph*{Main Results.}

Our main results for Problems~\ref{problem:verification}-\ref{problem:synthesis} are the following ones when the rational responses of the environment are $0$-fixed PO plays. One problem, the \probname{}{NC}{P}{S} problem, was already partially solved in~\cite{gaspard-pareto-quanti-reach}\footnote{In~\cite{gaspard-pareto-quanti-reach}, the authors show the \nexptimeComplete{}ness when there is one single common weight function for all players. We have no result on an \nexptime{} membership for the general case so far.}.

\begin{theorem}\label{theorem:pareto-results}
    \begin{enumerate}[(a)]
       \item \label{thm-NCPV} The \probname{}{NC}{P}{V} problem is \piComplete{}.
        \item \label{thm-UNCPV} The \probname{U}{NC}{P}{V} problem is \pspaceComplete{}.
        \item \label{thm-CPS} The \probname{}{C}{P}{S} problem is \pspaceComplete{}{}.
    \end{enumerate}
\end{theorem}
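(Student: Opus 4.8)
The plan is to establish the three items through one combinatorial characterization of ``bad'' Pareto-optimal plays, adapting the search machinery to the power granted to player~$0$: \sigmaClass{} machinery for~(\ref{thm-NCPV}), and \pspace{}-style machinery for~(\ref{thm-UNCPV}) and~(\ref{thm-CPS}). The common observation is that, for a fixed $\strategyfor{0}$, a play $\pi$ consistent with $\strategyfor{0}$ witnesses a \emph{no}-instance of the verification problems (resp.\ a \emph{yes}-instance of \prob{CPS}) exactly when, writing $p=\payoff{\pi}$: (i)~$\payoff{\pi}=p$; (ii)~no play consistent with $\strategyfor{0}$ has a cost tuple $<p$, which given~(i) amounts to $p\in C_{\strategyfor{0}}$; and (iii)~$\cost{0}{\pi}>c$ (resp.\ $\cost{0}{\pi}\le c$). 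A cut-and-paste argument on a minimal play realizing a Pareto-minimal tuple shows that no state of the arena --- of its product with $\machine{0}$ when $\strategyfor{0}$ comes from a machine --- can repeat strictly between two consecutive first-visits of the target sets; hence every finite coordinate of a candidate $p$ is bounded polynomially in $|V|\cdot|M|$, $t$ and $W$, so $p$ is guessable with polynomially many bits and a dominating play, if one exists, can be sought among paths of polynomial length. The witness $\pi$ itself may nonetheless have to be exponentially long, e.g.\ to raise $\cost{0}{\pi}$ past a threshold $c$ written in binary as in \cref{fig:example-pareto-ncpv-exponential-witness-c}; here I would follow the paper and guess a \emph{Parikh automaton} whose accepting runs encode, in binary, the edge-multiplicities of such a lasso, so that checking~(i) and~(iii) --- linear constraints on these multiplicities, together with a record in the finite control of which targets were already visited --- reduces to Parikh-automaton non-emptiness, which lies in \np{}.

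For~(\ref{thm-NCPV}) I would first fold the deterministic machine into the arena, so that the plays consistent with $\strategyfor{0}$ are exactly those of the product. The complement of \prob{NCPV} then reads: guess $p$ and a Parikh certificate for a play realizing $p$ and giving player~$0$ a cost $>c$ (an \np{} step), then call a \conp{} oracle verifying that \emph{no} play realizes a cost tuple $<p$ (its negation being a polynomial-length path search, hence in \np{}). This places the complement in \npConp{}, which coincides with \sigmaClass{}; equivalently, \prob{NCPV} is in \piClass{}. For \piClass{}-hardness I would reduce from \piQBF{} (truth of $\forall X\,\exists Y\,\varphi$), already in the qualitative setting: player~$0$ follows a fixed strategy while the environment first selects a valuation of~$X$, encoded so that the $2^{|X|}$ possible choices yield pairwise incomparable --- hence all Pareto-optimal --- cost tuples, and then, driven by Pareto-minimality within the chosen $X$-block, a best response~$Y$ maximizing the number of satisfied clause gadgets; the gadget for player~$0$ is built so that his cost stays $\le c$ exactly when that response satisfies~$\varphi$. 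Thus ``all $\strategyfor{0}$-fixed PO plays have cost $\le c$'' holds iff $\forall X\,\exists Y\,\varphi$ is true, in the spirit of the Stackelberg--Pareto hardness arguments of \cite{Stackelberg-Pareto-Synthesis,gaspard-pareto-quanti-reach}.

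For~(\ref{thm-UNCPV}) and~(\ref{thm-CPS}) the crux --- and, I expect, the main obstacle --- is that the \emph{same} $\strategyfor{0}$ must simultaneously realize the witness $\pi$ and block every play that would dominate $p$, and when $\strategyfor{0}$ ranges over $\llbracket\machine{0}\rrbracket$ (or is unrestricted) this blocking is a genuine game rather than a one-shot test. I would guess $p$ together with a Parikh-compressed lasso $\pi$ annotated with memory states, verify~(i), (iii) and compatibility with $\machine{0}$ in \np{} as above, and then check that the partial strategy pinned along $\pi$ extends --- within $\llbracket\machine{0}\rrbracket$, resp.\ arbitrarily for \prob{CPS} --- to a strategy all of whose branches have cost tuple $\not<p$. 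Since $\pi$ is a lasso, it visits only polynomially many distinct product states, so this reduces to: from each single-step deviation off $\pi$, player~$0$ must win the game whose objective is the negation of ``the environment's cost tuple is $<p$''; and this negation unfolds as a \emph{positive Boolean combination of bounded safety objectives} of the form $\safetyBounded{p_i+1}{\reach{i}}$, the Boolean pattern recording which targets have already been visited along $\pi$. Such games I would solve by a recursive \pspace{} procedure --- the ``problem of independent interest'' announced in the introduction --- recursing on sub-games of the same arena with progressively fixed safety conjuncts and decremented budgets; since all budgets are polynomially bounded, a straightforward analysis keeps the recursion within polynomial space. Folding the \np{} guess of $(p,\pi)$ and this procedure into a single \npspace{} algorithm, and using that \npspace{} collapses to \pspace{}, then shows that \prob{UNCPV} is in \pspace{} (through its complement) and that \prob{CPS} is in \pspace{}.

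Finally, for the matching \pspace{}-hardness of \prob{UNCPV} and \prob{CPS} I would reduce from the truth problem of quantified Boolean formulas: the existentially quantified variables are fixed by player~$0$ (along and off $\pi$) and the universally quantified ones by the environment's search for a dominating play, so that the quantifier alternation of the formula is mirrored by the move alternation of the blocking game above; the reduction already works for qualitative reachability, hence a fortiori for weighted arenas. Putting everything together hinges on two genuinely new ingredients --- the Parikh-automaton encoding of the exponentially long certificates, and the recursive \pspace{} procedure for positive Boolean combinations of bounded safety objectives --- and getting the reduction of the ``block all dominators'' requirement to the latter exactly right is the step I expect to be the most delicate.
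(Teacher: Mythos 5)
Your route is essentially the paper's: lasso witnesses (\cref{lem:witness_size_pareto}), Parikh-automaton compression of the exponentially long certificate, an \np{} guess plus a \conp{} oracle placing \coprob{NCPV} in \npConp{} $=$ \sigmaClass{}, a recursive \pspace{} procedure for positive Boolean combinations of bounded safety objectives used to check that player~$0$ can block every dominating play from deviations off $\pi$, and QBF-style reductions for the lower bounds. Two steps, however, are not justified as written. First, for \prob{UNCPV} and \prob{CPS} you claim that, since the lasso visits only polynomially many distinct product states, the blocking requirement reduces to polynomially many single-step deviation games. This does not follow: the objective of the deviation game at $hv$ is determined not only by the state and the set of already-visited targets but also by the remaining budgets $p_i - q_i$ with $q_i = w_i(hv)$, and these budgets vary along the (exponentially long) prefix, so a priori there are exponentially many distinct games; moreover a plain Parikh certificate records only portion totals, not the accumulated weight at the first occurrence of each potential deviation vertex. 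The paper closes this gap either by a monotonicity argument (in \cref{lem:pareto-optimal-check-p1-p2}: for a fixed deviation vertex and visited set, winning the earliest deviation game implies winning all later ones, so polynomially many checks suffice) or, for \coprob{UNCPV}, by dropping the Parikh encoding altogether and guessing $\pi$ vertex by vertex in polynomial space, invoking the \pspace{} check of \cref{lem:dev-pareto-zero-sum-pspace} at every step; your \npspace{}-folding can absorb the latter, but you need one of these ingredients explicitly. (Two minor slips in the same vein: for \coprob{NCPV} the bound on the finite coordinates of $p$ involves $c$, not just $|V|$, $t$, $W$; and for \prob{CPS} no Parikh compression is needed since $\cost{0}{\pi}\le c$ already yields a polynomial lasso.)

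Second, your \piClass{}-hardness gadget is dubious as described: Pareto-minimality does not select ``a best response $Y$ maximizing the number of satisfied clause gadgets''---it yields an antichain, and distinct $Y$-choices typically produce incomparable tuples, so within an $X$-block non-satisfying responses would remain Pareto-optimal and the claimed equivalence with $\forall X\,\exists Y\,\varphi$ breaks. The paper's reduction instead plays two subarenas against each other: an $X$-only arena whose plays visit every clause target (but neither $T_0$ nor $T_1$) and an $(X,Y)$-arena whose plays visit $T_0=T_1$, so that the $X$-play is Pareto-optimal exactly when no same-$X$ completion satisfies all clauses. Some such asymmetric device is needed to turn incomparability into the desired quantifier alternation; with it, the rest of your argument goes through as in the paper.
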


For \fixedEquilibria{} responses of the environment, we obtain the next main results.

\begin{theorem}\label{theorem:Nash-results}
    \begin{enumerate}[(a)]
        \item \label{thm-NCNV} The \probname{}{NC}{N}{V} problem is \conpComplete{}.
        \item \label{thm-UNCNV} The \probname{U}{NC}{N}{V} problem is \conpComplete{}.
        \item \label{thm-CNS} The \probname{}{C}{N}{S} problem is \npComplete{}.
        \item \label{thm-NCNS} The \probname{}{NC}{N}{S} problem is \exptimeHard{}, already with a two-player environment.
        With a one-player environment, it is in \exptime{} and \pspaceHard{}.
    \end{enumerate}
\end{theorem}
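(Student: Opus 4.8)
The plan is to handle the four items with a shared engine for the membership bounds and dedicated game-theoretic reductions for the hardness bounds. The common idea for every upper bound is to reduce to non-emptiness of a polynomial-size \emph{Parikh automaton}, exploiting that the only genuinely global quantities involved are values of ordinary finite two-player zero-sum quantitative-reachability games on products of the arena with a Mealy machine, which are computable in polynomial time for our nonnegative weights.

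For \prob{NCNV} and \prob{UNCNV}, I would first prove a certificate characterization of \fixedEquilibria{} outcomes: a play $\pi$ from $v_0$ consistent with $\machine{0}$ is a \fixedEquilibriaStrategy{} outcome for some $\strategyfor{0}\in\llbracket\machine{0}\rrbracket$ (for the unique one when $\machine{0}$ is deterministic) if and only if, for every follower $i\in\Players\setminus\{0\}$ and every position $k$ with $\pi_k\in V_i$ occurring before $\pi$ first visits $i$'s target, player~$i$ cannot strictly decrease his cost by branching off $\pi$ at $\pi_k$ while player~$0$ keeps playing $\strategyfor{0}$ (or, when $\machine{0}$ is nondeterministic, the most harmful completion allowed by $\machine{0}$) and the remaining followers punish him. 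These ``deviation values'' depend only on the current vertex and memory state of $\machine{0}$ and are values of a finite zero-sum quantitative-reachability game on the product arena, hence computable in polynomial time. I would then build a Parikh automaton $\mathcal{B}$ of polynomial size recognizing exactly the encodings of plays consistent with $\machine{0}$ that satisfy all these follower inequalities and give player~$0$ a cost strictly above $c$; its counter component performs the binary arithmetic relating the accumulated weights, the final costs, and $c$, and lets the encoded witness be a lasso of possibly exponential length. Since non-emptiness of a Parikh automaton is in \np{}, the complements of \prob{NCNV} and \prob{UNCNV} are in \np{}, which gives the \conp{} upper bounds; the matching \conpHard{}ness is inherited from the qualitative versions (the specialization $W=0$), shown \conpComplete{} in~\cite{grandmont23}.

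For \prob{CNS}, the same machinery is used existentially: guess a polynomial-size certificate — a compact (Parikh-automaton) description of a lasso $\pi$ from $v_0$ with $\cost{0}{\pi}\leq c$, together with the bookkeeping needed to recover on the fly a strategy $\strategyfor{0}$ and a \fixedEquilibriaStrategy{} profile with outcome $\pi$; off $\pi$, both $\strategyfor{0}$ and the followers punish any deviator, so that checking the non-deviation inequalities again reduces to polynomially many zero-sum game values and the certificate is verifiable in \p{}. Hence \prob{CNS} is in \np{}, and \npHard{}ness follows from the qualitative case~\cite{ConduracheFGR16}.

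For \prob{NCNS}, the \exptime{} membership with a one-follower environment would proceed by passing to the exponential-size \emph{weighted configuration arena}, which tracks the weight accumulated by the follower — capped at a threshold exceeding the worst-case cost of the follower's best response, a value bounded by $|V|\cdot W$ and thus exponential in the binary encoding — and the weight accumulated by player~$0$, capped at $c+1$. On this arena the single-deviator \fixedEquilibria{} condition collapses to a best-response constraint, and \prob{NCNS} reduces to solving a game with a positive Boolean combination of reachability/safety objectives, doable in time polynomial in the size of this (exponentially large) arena. For the lower bounds I would reduce from countdown games~\cite{countdown-game}: the countdown counter is simulated by accumulated weights, and one follower acts as a ``watchdog'' whose unique best response forces player~$0$ to simulate the game faithfully — any cheating move becomes a profitable deviation and destroys the equilibrium — so that player~$0$ solves the \prob{NCNS} instance exactly when he wins the countdown game; as countdown games are \exptimeComplete{}, this yields \exptimeHard{}ness already with two followers, and a bounded variant of the construction (or a direct reduction from \QBF) yields \pspaceHard{}ness with a single follower. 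The main obstacle is precisely this last item: engineering the watchdog gadget so that equilibrium play is necessarily a faithful simulation, and, for the one-follower upper bound, proving that the weight caps retain all information an optimal $\strategyfor{0}$ needs and that the resulting exponential game can be built and solved within \exptime{}.
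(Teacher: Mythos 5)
For items (a)--(c) your plan is essentially the paper's: $0$-fixed NE outcomes are characterized by per-position punishment values computed in polynomial time on the product with $\machine{0}$ (for the nondeterministic case, player~$0$ joins the punishing coalition restricted to $\machine{0}$-compatible moves), exponentially long lassos are certified succinctly with Parikh automata, and the hardness is inherited from the qualitative case. Two small remarks: for \prob{CNS} the Parikh machinery is not even needed, since $\cost{0}{\pi}\leq c$ allows all cycles to be removed and the lasso becomes polynomial; and for the verification problems you gloss over how a single semilinear acceptance condition can check the per-position inequalities along an exponential play --- one must reduce them to polynomially many ``marker'' constraints (with nonnegative weights, only the first occurrence of each vertex of player~$i$ before $\reach{i}$ is binding) and track the relevant partial sums in separate Parikh dimensions.

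The genuine gaps are in item (d), and you flag the main one yourself. For the \exptimeHard{}ness, the watchdog mechanism you sketch points the wrong way: in \prob{NCNS} a deviation that ``destroys the equilibrium'' \emph{helps} player~$0$, since it removes an outcome from the universal quantification, so making cheating a profitable deviation does not penalize him; moreover player~$0$ is himself a player of the simulated countdown game, so there is no simulation whose faithfulness needs enforcing. The construction that works gives the countdown adversary to a follower whose target is trivially reached (so \emph{every} behavior of his is NE-compatible --- this is how the $\forall$-player is captured by the quantification over NEs), adds an exit move for him after each doubled step with an extra $+1$ on the watchdog's weight, and gives the watchdog an alternative branch of cost exactly $2c$ at the start: an exit play is then an NE outcome precisely when the accumulated value has not yet reached $c$, so a bad NE outcome with cost $>2c$ for player~$0$ exists exactly when $\strategyfor{0}$ is not winning in the countdown game. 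For the one-follower \exptime{} upper bound you are missing the structural lemma that makes the ``collapse to a best-response constraint'' true and usable: with a single follower, all $\strategyfor{0}$-fixed NE outcomes share the same follower cost $d$, namely the minimum over plays consistent with $\strategyfor{0}$, and $\strategyfor{0}$ is a solution iff there is a witness play $\pi$ with $\cost{0}{\pi}\leq c$ such that from every deviation player~$0$ can enforce either cost at most $c$ for himself or cost strictly above $d=\cost{1}{\pi}$ for the follower. Since $d$ is a global quantity of $\pi$, it must be bounded (by $2|V|W$, or $+\infty$) and guessed/iterated over; for each $d$ one solves a qualitative game on the capped configuration arena and then searches for a witness path inside the winning region --- this is not literally ``solving a game with a positive Boolean combination of reachability/safety objectives''. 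Finally, \pspaceHard{}ness with one follower via ``a direct reduction from QBF'' is unsupported (the quantitative weights are essential there); the route that works is the reduction from subset-sum games, i.e., the ``bounded variant'' you mention.
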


These complexity results depend on the size $|V|$ of the arena, the number $t$ of players~$i$ (resp.\ target sets $\reach{i}$) in case of \fixedEquilibria{} responses (resp.\ $0$-fixed PO responses), the maximal weight $W$ encoded in binary appearing in the functions $w_i$, the threshold $c$ encoded in binary, and the size $|M|$ of the Mealy machine $\machine{0}$ (for the verification problems). Note that for all problems except the NCNS problem, the complexity classes are the same for both qualitative and quantitative frameworks (see \cref{table:results}). Hence, in the case of a \emph{unary} encoding of the weights and the threshold $c$, we get the same complexity classes. Only the most challenging proofs are provided in the paper, while the other proofs or results derived from the literature are deferred in the Appendix.

In this paper, we focus on zero or positive weights, because with negative weights, there are simple examples of one-player games with no NE or no PO plays (thus with no rational responses). Furthermore, considering any weights leads to the undecidability of the \prob{NCNS} and \prob{NCPS} problems. Those results are obtained by reduction from the undecidability of zero-sum multidimensional shortest path games~\cite{Randour-games-on-graphs,DBLP:journals/fmsd/RandourRS17}. See details in~\cref{appendix:Undecidable}.

\begin{restatable}{theorem}{undecidabletheorem}
\label{theorem:undecidableNE}
With integer weight functions, the \probname{}{NC}{N}{S} problem and the \probname{}{NC}{P}{S} problem are undecidable.
\end{restatable}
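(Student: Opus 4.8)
The plan is to prove \cref{theorem:undecidableNE} by reduction from zero-sum multidimensional shortest path games with integer weights, whose undecidability is established in~\cite{Randour-games-on-graphs,DBLP:journals/fmsd/RandourRS17}. Recall that such a game is a two-player zero-sum game on a finite weighted arena with $d$-dimensional integer weight vectors on edges, a target set $T$, and a threshold vector $\vec{c} \in \Z^d$; Player~$1$ (the minimizer) wins if the play reaches $T$ and the accumulated weight vector at the first visit of $T$ is componentwise at most $\vec{c}$, and the undecidable question is whether Player~$1$ has a winning strategy. I would first treat the \prob{NCNS} problem. The idea is to cast the minimizer as the system (player~$0$) and the maximizer as the single environment player. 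The environment player must, however, be \emph{forced to play adversarially} against player~$0$, which is exactly what a \fixedEquilibriaStrategy{} does \emph{not} in general guarantee — so the construction must engineer the environment's incentives so that, in every \fixedEquilibriaStrategy{}, the environment effectively plays a best response that is worst for player~$0$.

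The key gadget is to give the environment player its own reachability objective whose cost, along any play, is an \emph{order-reversing} function of player~$0$'s cost, and to ensure that the environment can always attain every value in this range. Concretely, from each vertex originally controlled by the maximizer I add, in parallel to the original move, an escape edge to a fresh module in which the environment can deterministically accumulate toward its own target a weight that encodes "the negation" of the multidimensional payoff; by threshold shifting and a standard single-dimension encoding of the $d$ coordinates (using large enough base to avoid carries, which is legitimate since all weights and the threshold are part of the input and hence bounded), one arranges that the environment strictly prefers plays where player~$0$'s reachability cost is \emph{large}. Then in any \fixedEquilibriaStrategy{}, the environment will deviate away from any choice that is not payoff-maximal for itself, i.e. not payoff-worst for player~$0$; conversely player~$0$ picking a winning minimizer strategy yields, against all such equilibria, a play reaching $\reach{0}$ with cost $\le c$. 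The two remaining subtleties are: (i) making sure an \fixedEquilibriaStrategy{} always \emph{exists} (so the "for all NE" quantifier is not vacuously true) — this follows because nonnegative-weight reachability games always admit NEs by the results recalled earlier, but here one must double-check the reduction keeps all weights nonnegative after the threshold shift, which it does by choosing the shift large enough; and (ii) handling plays that never reach the targets — these can be made to give infinite cost to the environment, hence never chosen in an equilibrium, by routing all "sink" behavior through the environment's escape module.

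For the \prob{NCPS} problem the reduction is analogous but conceptually simpler: here the environment is a single player with a \emph{tuple} of $d$ target sets, and a $0$-fixed PO play is one whose cost tuple is Pareto-minimal among plays consistent with $\strategyfor{0}$. I would let the $d$ coordinates of the environment's cost tuple directly mirror the $d$ dimensions of the shortest-path game (shifted to be nonnegative), arranged so that Pareto-minimality of the environment's tuple forces the play to reach the common target $T$ with accumulated vector as small as possible in the product order — which is precisely the minimizer's winning condition; player~$0$ then wins the \prob{NCPS} instance with the corresponding threshold iff the minimizer wins. The technical care is, again, to ensure that for every $\strategyfor{0}$ the set $C_{\strategyfor{0}}$ of PO costs is nonempty (automatic for nonnegative-weight reachability) and that the Pareto frontier of the environment "sees" exactly the minimizer's reachable payoff vectors and nothing spurious.

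The main obstacle I anticipate is the first point above: enforcing, via a \emph{local} reachability objective for the environment, that the environment behaves as a genuine adversary of player~$0$ in \emph{every} Nash equilibrium — rational (NE) responders are not worst-case responders, so the gadget must make the environment's own self-interest coincide exactly with hurting player~$0$, uniformly along the whole play and not merely at the end, while simultaneously keeping all weights nonnegative and guaranteeing equilibrium existence. Getting the encoding base and the threshold shift large enough to avoid interaction between coordinates and between the "adversarial incentive" and the reachability bookkeeping, without blowing up beyond a valid polynomial-size reduction, is the delicate part; once that gadget is correct, the equivalence "minimizer wins $\iff$ player~$0$ has a solution" should follow by a routine play-by-play correspondence, and the same undecidable source problem yields both \prob{NCNS} and \prob{NCPS}. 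Full details are deferred to \cref{appendix:Undecidable}.
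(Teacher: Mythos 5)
Your proposal has genuine gaps, and they are not just matters of detail. First, the two "normalization" tricks you rely on cannot work. Shifting weights so that everything becomes nonnegative (compensating in the threshold) is unsound for shortest-path/accumulated-cost objectives: the compensation needed depends on the length of the play up to the first visit of the target, which is unbounded and play-dependent, so no fixed threshold shift is correct. Worse, if such a shift existed your NCNS reduction would produce a game with nonnegative weights and a \emph{one-player} environment, contradicting \cref{theorem:ncns-2players-pspace-hard-subset-sum} (that case is decidable, in \exptime{}); the whole content of \cref{theorem:undecidableNE} is that negative weights are the source of undecidability, so the target instances must keep them. Similarly, collapsing the $d$ dimensions into a single weight "with a large enough base to avoid carries" fails because accumulated weights are unbounded along arbitrarily long plays (and one-dimensional shortest-path games are decidable, so any faithful collapse is impossible). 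Second, the gadget you yourself identify as the crux — making the single environment player's self-interest coincide, in \emph{every} \fixedEquilibriaStrategy{}, with worst-case play against player~$0$, "uniformly along the whole play" — is never constructed, and it is exactly the hard part; as stated, the proposal is a plan contingent on an unproven device.

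The paper sidesteps this obstacle by a different role assignment. For \prob{NCNS} it reduces from the \emph{complement} of the shortest-path problem with threshold $\bar 0$: player~$0$ plays the adversary \AA{} inside $\mathcal H$, the role of \EE{} is given to an extra follower whose target set is all of $V'$ (so she is indifferent and every one of her strategies is automatically consistent with equilibrium play), and $t$ further followers each own one vertex of an entry chain with an opt-out edge to a sink $D$ (their target being $T \cup \{D\}$, player~$0$'s being $\{D\}$). Then "every \fixedEquilibriaStrategy{} outcome has $\cost{0}{} \le 0$" holds iff every play entering $\mathcal H$ consistent with $\sigma_0$ gives some follower cost $>0$, i.e., iff \AA{} wins — no incentive-reversal gadget is needed, because the universal quantification over \EE{}'s behavior comes for free from the "for all NE outcomes" quantifier. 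For \prob{NCPS} the paper first proves the \emph{Equality} Shortest Path problem undecidable, then encodes equality with $\bar 0$ by duplicating each dimension with weights $w_i$ and $-w_i$ and adding opt-out loops whose costs $(-c,\dots,-c,1,-c,\dots,-c)$ (crucially negative and unbounded below) dominate any play whose cost tuple is not exactly zero; note also that in one direction there is \emph{no} $\sigma_0$-fixed PO play at all, which makes $\sigma_0$ vacuously a solution — whereas you insist on keeping $C_{\sigma_0}$ nonempty, which is not how the reduction should (or can) be arranged with negative weights. Your proposal would need to be rebuilt along such lines; as written, the key steps would fail.
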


\section{Pareto-Optimality} \label{section:pareto}

In this section, we provide the proofs of the upper bounds in \cref{theorem:pareto-results}. Recall that the environment is here composed of the sole player~$1$ having $t$ target sets $\reach{i}$, and his rational responses to a strategy $\strategyfor{0}$ announced by player~$0$ are $\strategyfor{0}$-fixed PO plays. The lower bounds are proved in \cref{appendix:lower-bounds-pareto} with reductions from QBF or some of its variants~\cite{STOCKMEYER19761}. All those reductions already hold for qualitative reachability games. We thus obtain the same complexity classes as in \cref{theorem:pareto-results} for this class of games, as indicated in \cref{table:results}.

To solve the two verification problems (\prob{NCPV} and \prob{UNCPV}), we first construct the product game\footnote{The product of a game with a Mealy machine is recalled in \cref{appendix:exMealyMachine}.} $\game \times \machine{0}$ of size polynomial in $\game$ and $\machine{0}$, and we \emph{assume} to directly work with this game, \emph{again denoted $\game$}. Note that in the product game, when $\machine{0}$ is nondeterministic, player~$0$ is able to play any strategy $\strategyfor{0}$ compatible with $\machine{0}$, and when $\machine{0}$ is deterministic, the verification problems are simplified as there is a single compatible strategy $\strategyfor{0}$. The complement of the \prob{(U)NCPV} problem has many similarities with the \prob{CPS} problem:

\begin{problem} \label{problem:complement-PO}
    The \emph{complement of the \prob{(U)NCPV} problem (\coprob{(U)NCPV})} asks whether there exists $\strategyfor{0} \in \Sigma_0$ and a $\strategyfor{0}$-fixed PO play $\pi$ such that $\cost{0}{\pi} > c$.
\end{problem}
Indeed, the statement is the same except that the inequality $\cost{0}{\pi} \leq c$ in the \prob{CPS} problem is here replaced by $\cost{0}{\pi} > c$. To prove the upper bounds of~\Cref{theorem:pareto-results}, we thus have to solve the decision problem ``do there exist $\strategyfor{0} \in \Sigma_0$ and a $\strategyfor{0}$-fixed PO play $\pi$ such that $\cost{0}{\pi} \sim c$ ?'' with $\sim\, \in \{\leq, >\}$.
In short, the algorithm to solve the \prob{CPS} problem and the complement of the \prob{(U)NCPV} problem proceeds through the following steps:
\begin{enumerate}
    \item\label{pareto-synthesis-step1} Guess a play $\pi$ in the form $\pi = \mu(\nu)^\omega$ in polynomial time. The length of the lasso is polynomial or exponential, depending on the studied problem. In the latter case, we will guess a succinct representation of the lasso by using Parikh automata~\cite{parikh,parikh-first}.
    \item\label{pareto-synthesis-step2} Compute in polynomial time $\payoff{\pi}$ and verify in polynomial time that $\cost{0}{\pi} \sim\, c$.
    \item\label{pareto-synthesis-step3} Verify that player~$0$ has a strategy~$\strategyfor{0}$, with $\pi$ consistent with $\strategyfor{0}$, that guarantees that $\payoff{\pi}$ is $\strategyfor{0}$-fixed PO. This last step will be done in \conp{} or in \pspace{}, depending on the studied problem.
\end{enumerate}
Therefore, if a strategy $\strategyfor{0}$ exists as in \cref{pareto-synthesis-step3}, the $\strategyfor{0}$-fixed PO play $\pi$ such that $\cost{0}{\pi} \sim\, c$ is the lasso of \cref{pareto-synthesis-step1}. Let us now provide detailed proofs for these three steps. 

\subsection{Existence of Lassos}

The goal is this section is to prove the next lemma stating that one can always suppose that $\pi$ is a lasso. For that purpose, we use a classical approach consisting of removing cycles~\cite{DBLP:journals/jcss/BrihayeBGT21,DBLP:conf/rp/BriyaheGoeminne23,ConduracheFGR16}.

\begin{lemma}
\label{lem:witness_size_pareto}
    Let $\strategyfor{0} \in \Sigma_0$ and $\pi$ be a $\strategyfor{0}$-fixed PO play $\pi$ such that $\cost{0}{\pi} \sim\, c$. Then there exist $\strategyfor{0}' \in \Sigma_0$ and a $\strategyfor{0}'$-fixed PO play $\pi' = \mu(\nu)^\omega$ such that $\cost{0}{\pi'} \sim\, c$. Moreover, $\Visit{\mu} = \Visit{\mu\nu}$ and
    \begin{itemize}
        \item if $\cost{0}{\pi} \leq c$, then $|\mu| \leq (t+1)|V|$, $|\nu| \leq |V|$, $\payoff{\pi'} \in \{0,1,\dots,B,+\infty\}^t$, with $B = (t+2)|V|W$,
        \item if $\cost{0}{\pi} > c$, then $|\mu| \leq c + (t+1)|V|$, $|\nu| \leq |V|$, $\payoff{\pi'} \in \{0,1,\dots,B,+\infty\}^t$, with $B = (c + (t+2)|V|)W$. 
    \end{itemize}
\end{lemma}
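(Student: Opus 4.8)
The plan is to start from the given $\strategyfor{0}$-fixed PO play $\pi$ and apply a cycle-removal argument, being careful that removing cycles does not destroy the two properties we need: that the cost of player~$0$ stays on the correct side of $c$, and that the resulting play remains Pareto-optimal for some strategy of player~$0$ consistent with it. Since $\pi$ is a single play, it is consistent with at least one memoryless-on-$\pi$ choice for player~$0$; the subtlety is that Pareto-optimality of $\pi$ depends on \emph{all} plays consistent with $\strategyfor{0}$, not just $\pi$ itself, so I must argue that shrinking $\pi$ to a lasso $\pi'$ can be accompanied by a (possibly new) strategy $\strategyfor{0}'$ under which $\pi'$ is still Pareto-optimal. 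The natural choice is to let $\strategyfor{0}'$ agree with the vertices visited along $\pi'$ on the prefix $\mu\nu$ and, off $\pi'$, mimic $\strategyfor{0}$ suitably (or simply: since any deviation from $\pi'$ reaching a vertex not on $\pi'$ can be mapped to a corresponding history under $\strategyfor{0}$, the set of cost tuples achievable against $\strategyfor{0}'$ is a subset of those achievable against $\strategyfor{0}$, hence $\payoff{\pi'}$ minimal there too). I expect this "transfer of Pareto-optimality along cycle removal" to be the main obstacle, and it is where I would spend most of the care.

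Concretely, I would proceed in the following steps. First, fix the finite set $F = \Visit{\pi} \subseteq \{1,\dots,t\}$ of players whose target is reached along $\pi$, and let $N$ be an index by which all of $F$ has been visited and, if $\cost{0}{\pi} \le c$, also $\reach{0}$ has been visited (if it is visited at all). Second, on the prefix $\pi_{\le N}$, repeatedly delete cycles $\pi_{[m,n[}$ with $\pi_m = \pi_n$ that are "inessential", i.e. whose removal does not change which of the relevant target sets have already been reached at position $m$; a standard counting argument shows that after exhaustive removal the prefix has length at most $(\text{number of relevant targets}+1)|V|$, giving $|\mu| \le (t+1)|V|$ in the first case and $|\mu| \le c + (t+1)|V|$ in the second — in the second case we must additionally keep the first $c+1$ vertices intact (or at least enough of a prefix) to certify $\cost{0}{\pi'} > c$, which is why the bound carries the extra additive $c$. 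Third, from $\last{\mu}$ pick any simple cycle $\nu$ in the arena back to $\last{\mu}$ (one exists since every vertex has a successor and $V$ is finite, so some vertex repeats within $|V|$ steps), giving $|\nu| \le |V|$; set $\pi' = \mu(\nu)^\omega$. The invariant $\Visit{\mu} = \Visit{\mu\nu}$ holds because $\nu$ visits only vertices already in $\occ{\mu}$ after we have been careful to close the lasso at a vertex on $\mu$, so no new target is reached by the cycle — and I would state the removal procedure so as to guarantee exactly this.

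For the numerical bounds on $\payoff{\pi'}$: once a target $\reach{i}$ with $i \in F$ is reached, it is reached within $|\mu\nu| \le (t+1)|V| + |V| = (t+2)|V|$ steps (first case) or within $(c+(t+1)|V|) + |V| = (c+(t+2)|V|)$ steps (second case), and each edge contributes weight at most $W$, so $\cost{i}{\pi'} \le B$ with $B$ as stated; for $i \notin F$ the cost is $+\infty$, and since $\Visit{\mu} = \Visit{\mu\nu}$ no target switches from $+\infty$ to finite or vice versa. Finally, I would verify $\cost{0}{\pi'} \sim c$: in the case $\cost{0}{\pi} \le c$ I kept a prefix witnessing the first visit to $\reach{0}$ (if any) without lengthening its cost, so $\cost{0}{\pi'} \le \cost{0}{\pi} \le c$; in the case $\cost{0}{\pi} > c$ I kept the first $c+1$ vertices of $\pi$ (of total $w_0$-weight $> c$ along the segment reaching $\reach{0}$, or no visit at all), so that either $\reach{0}$ is still not visited or its first visit costs more than $c$, hence $\cost{0}{\pi'} > c$. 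Reassembling these facts, together with the Pareto-optimality transfer for $\strategyfor{0}'$ from the first paragraph, yields the lemma.
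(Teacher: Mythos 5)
The main gap is your construction of the cycle $\nu$. You take an \emph{arbitrary} simple cycle of the arena through $\last{\mu}$, but (i) your existence argument only shows that a path from $\last{\mu}$ eventually repeats \emph{some} vertex, i.e., that a cycle is reachable from $\last{\mu}$, not that a cycle returning to $\last{\mu}$ exists; and (ii) nothing forces such a cycle to stay inside $\occ{\mu}$: it may pass through fresh vertices, in particular through target sets not visited by $\mu$, so the invariant $\Visit{\mu}=\Visit{\mu\nu}$ (which is part of the statement) and the bound on $\payoff{\pi'}$ can fail; asserting that you ``would state the removal procedure so as to guarantee exactly this'' is where the actual work lies. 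Worse, once $\nu$ leaves the vertices of the original play, your definition of $\strategyfor{0}'$ by ``mimicking $\strategyfor{0}$'' breaks down, since histories along $\nu$ have no corresponding history of $\pi$ under $\strategyfor{0}$, and deviations of player~$1$ occurring inside $\nu$ are then uncontrolled. The paper avoids all of this by taking $\nu$ \emph{along the play itself}: after removing cycles, it looks, after the last first visit to a target set, for the first repeated vertex of the reduced play and lets $\nu$ be that segment; then $|\nu|\le|V|$ automatically, no new target is visited, and every prefix of $\mu\nu$ corresponds to a history of $\pi$, which is exactly what the strategy construction needs.

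The Pareto-optimality transfer is also not justified as you state it. The set of cost tuples achievable against $\strategyfor{0}'$ is in general \emph{not} a subset of those achievable against $\strategyfor{0}$: removing cycles lowers the accumulated weights on the common prefix, so a deviating play under $\strategyfor{0}'$ typically has a new (componentwise smaller) cost tuple. The argument has to be the coherent-decrease one: if some $\rho'$ consistent with $\strategyfor{0}'$ had $\payoff{\rho'}<\payoff{\pi'}$, write $\rho'=h'\tau$ with $h'$ the common prefix with $\pi'$, map it to $\rho=h\tau$ consistent with $\strategyfor{0}$ (where $h'$ is the cycle-elimination of $h$); for coordinates whose target is already visited in $h$ one has $\cost{i}{\rho'}=\cost{i}{\pi'}$ and $\cost{i}{\rho}=\cost{i}{\pi}$, and for the remaining coordinates the comparison reduces to the suffix costs, which transfers to $\rho$ versus $\pi$ because the suffix of $\pi$ after $h$ costs at least that of $\pi'$ after $h'$; this contradicts Pareto-optimality of $\pi$ and uses both that the suffix after the deviation is literally the same and that $\payoff{\pi'}\le\payoff{\pi}$. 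Finally, in the case $\cost{0}{\pi}>c$, keeping ``the first $c+1$ vertices intact'' does not certify $\cost{0}{\pi'}>c$: with zero $w_0$-weights on those edges the excess weight may lie in later cycles that you remove, letting the cost of the first visit to $\reach{0}$ drop to at most $c$. You must instead protect a prefix whose accumulated $w_0$-weight already exceeds $c$ (equivalently, refrain from removing cycles before that point), which is what yields the additive $c$ in the bound.
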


\begin{proof}
    Let $\pi = \pi_0\pi_1\dots$ be a $\strategyfor{0}$-fixed PO play such that $\cost{0}{\pi} \sim\, c$. 

    Suppose that $\cost{0}{\pi} \leq c$. Consider, along $\pi$, any two consecutive first visits to two target sets, say $\reach{i}$ and $\reach{j}$. If there exists $m < n$ such that $\pi_{n} = \pi_{m}$ between these two visits, we remove the cycle $\pi_{[m,n[}$ from $\pi$. We repeat this process until there are less than $|V|$ vertices between the two visits, for any such pair $\reach{i}, \reach{j}$, but also between $\pi_0$ and the first visit to a target set. Let us denote $\pi'$ the resulting play. Consider now along $\pi'$ the last first visit to a target set, say at index $k$. We then seek for the first repeated vertex $\pi'_{\ell_1} = \pi'_{\ell_2}$ with $k \leq \ell_1 < \ell_2$ after $k$. In this way, we obtain $\nu = \pi'_{[\ell_1,\ell_2[}$ with $|\nu| \leq |V|$ and $\mu = \pi'_{[0,\ell_1[}$ with $|\mu| \leq (t+1)|V|$. So, we get the required lasso $\mu(\nu)^\omega$ such that $\Visit{\mu} = \Visit{\mu\nu}$, $\cost{0}{\mu(\nu)^\omega} \leq \cost{0}{\pi} \leq c$, and $\payoff{\mu(\nu)^\omega} \in \{0,1,\dots,B,+\infty\}^t$, with $B = (t+2)|V|W$.

    The case $\cost{0}{\pi} > c$ is treated similarly, except that we cannot remove cycles along the longest prefix $h$ of $\pi$ such that $\cost{0}{h} \leq c$, as this operation might decrease the cost of player~$0$. We thus get $|\mu| \leq c + (t+1)|V|$, $\cost{0}{\mu(\nu)^\omega} > c$, and $\payoff{\mu(\nu)^\omega} \in \{0,1,\dots,B,+\infty\}^t$, with $B = (c+ (t+2)|V|)W$.

    It remains to explain how to construct a strategy $\strategyfor{0}'$ from $\strategyfor{0}$ such that $\pi' = \mu(\nu)^\omega$ is $\strategyfor{0}'$-fixed PO. First, $\strategyfor{0}'$ is built in a way to produce $\pi'$. Second, we have to define $\strategyfor{0}'$ outside $\pi'$, i.e., from any $h'v$, with $v \in V$, such that $h'$ is prefix of $\pi'$ but not $h'v$. Let $h$ be such that the elimination of cycles done in $\pi$, restricted to $h$, leads to $h'$. We then define $\strategyfor{0}'(h'g)= \strategyfor{0}(hg)$ for all histories $g \in \Hist(v)$. Notice that $\strategyfor{0}'$ is the required strategy as the elimination of cycles in a history or a play decreases the costs.
\end{proof}

\begin{example}\label{exemple:pareto-size-expo}
    When $\cost{0}{\pi} > c$, \Cref{lem:witness_size_pareto} provides a bound on $|\mu\nu|$ that is exponential in the binary encoding of $c$. In \cref{fig:example-pareto-ncpv-exponential-witness-c}, we present a small example of a reachability game showing that this is unavoidable. The initial vertex $v_0$ is owned by player~$1$, $v_1$ is owned by player~$0$, and there are two weight functions $w_0$ and $w_1$ (thus $t = 1$). Both players have the same target set: $\reach{0} = \reach{1} = \{v_1\}$. Notice that player~$1$ is the only one to play, and a play $\pi \in \Plays(v_0)$ is PO if and only if visits $T_1$ (and has $\payoff{\pi} = 0$). Hence, given a threshold $c$, any PO play $\pi$ with $\cost{0}{\pi} > c$ is equal to $v_0^{k} (v_1)^\omega$ with $k > c$. The length $|v_0^kv_1|$ is thus greater than $c$. Therefore, \cref{pareto-synthesis-step1} of our decision algorithm for the \coprob{(U)NCPV} cannot guess an explicit representation $\mu(\nu)^\omega$ if we want to stick to a polynomial time algorithm.
\end{example}

\subsection{Particular Zero-sum Games} \label{subsec:particular}

Now that we know we can limit our study to lassos $\pi$, \cref{pareto-synthesis-step3} requires to verify that player~$0$ has a strategy $\strategyfor{0}$ ensuring that $\payoff{\pi}$ is $\strategyfor{0}$-fixed PO. Before going deeper into this step, we need to study some particular two-player zero-sum games.\footnote{We suppose that the reader is familiar with this concept.} Let $\arena = (V,E,\Players,(V_i)_{i \in \Players},(w_i)_{i \in \{1,\dots,t\}})$ be an arena with $\Players = \{\E,\A\}$ and equipped with $t$ weight functions $w_i: E \rightarrow \N$. We suppose that $\arena$ is initialized with $v_0 \in V$. We fix $t$ target sets $\reach{i} \subseteq V$ and $t$ constants $d_i \in \Nzero \cup \{+\infty\}$. We denote by $\game = (\arena,\Omega)$ a zero-sum game whose \emph{objective} $\Omega$ is a Boolean combination of the following objectives:
    \begin{itemize}
        \item $\reachBounded{d_i}{\reach{i}} = \{\pi \in \Plays(v_0) \mid \cost{i}{\pi} < d_i\}$ called \emph{bounded reachability objective}, and
        \item $\safetyBounded{d_i}{\reach{i}} = \Plays(v_0) \setminus \reachBounded{d_i}{\reach{i}}$ called \emph{bounded safety objective}.
        \end{itemize}
\emph{Solving} such a game $\game$ means to decide whether \EE{} has a strategy $\sigma$ such that all plays $\pi \in \Plays(v_0)$ consistent with $\sigma$ belong to the objective $\Omega$. If such a strategy $\sigma$ exists, we say that $\sigma$ is \emph{winning for $\Omega$} and that the initial vertex \emph{$v_0$ is winning for \EE{} for $\Omega$}.

For the PO-check required for \cref{pareto-synthesis-step3}, will see in \cref{subsec:check} that we need to solve the zero-sum games stated in the next two propositions, where the constants $d_i$ are encoded in binary. The second proposition will be used in the general case of nondeterministic Mealy machines $\machine{0}$ while the first one will be used in the deterministic case. Proposition~\ref{theorem:gen-bounded-reach-npcomplete} is a quantitative extension of a result in~\cite{DBLP:journals/tsi/FijalkowH13} about (qualitative) generalized reachability games. 

\begin{proposition}\label{theorem:gen-bounded-reach-npcomplete}
    Let $\game = (\arena,\Omega)$ be a zero-sum game with $\Omega = \bigcap_{1 \leq i \leq t} \reachBounded{d_i}{\reach{i}}$ and \EE{} is the only one to play. Deciding whether $v_0$ is winning for \EE{} is an \npComplete{} problem.
\end{proposition}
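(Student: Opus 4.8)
The plan is to prove membership in \np{} and \np{}-hardness separately. The key preliminary observation is that, since \EE{} is the only one to play, every vertex of \AA{} has a unique successor, so a strategy of \EE{} from $v_0$ together with the forced moves of \AA{} determines a single play, and conversely every play $\pi \in \Plays(v_0)$ is consistent with some strategy of \EE{}. Hence $v_0$ is winning for \EE{} for $\Omega$ if and only if there exists a play $\pi \in \Plays(v_0)$ with $\cost{i}{\pi} < d_i$ for every $i \in \{1,\dots,t\}$; note that every such play necessarily visits all target sets $\reach{i}$ (even when $d_i = +\infty$).

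For the \np{} upper bound, I would show that whenever such a play exists there is one with a short relevant prefix, and use that prefix as a certificate. Given a play $\pi$ with $\cost{i}{\pi} < d_i$ for all $i$, let $n_1 < \dots < n_r$ (with $1 \le r \le t$) be the time steps at which a new target set is visited for the first time, so that $\pi_{\le n_r}$ already contains a first visit to every $\reach{i}$. Applying the cycle-removal technique of \cref{lem:witness_size_pareto} separately inside each of the $r$ segments $\pi_{[0,n_1]}, \pi_{[n_1,n_2]},\dots,\pi_{[n_{r-1},n_r]}$, never deleting a segment endpoint, yields a history $\mu$ with $|\mu| \le (t+1)|V|$ whose vertex sequence is a subwalk of $\pi_{\le n_r}$. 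Since each $w_i$ is nonnegative and no vertex of $\reach{i}$ occurs in $\pi$ strictly before the first-visit step associated with $\reach{i}$ (which is one of the $n_j$, hence a preserved endpoint of $\mu$), the first visit of $\mu$ to $\reach{i}$ still occurs and the $w_i$-weight of the prefix of $\mu$ up to it is at most $\cost{i}{\pi}$; therefore $\cost{i}{\mu\cdot\rho} \le \cost{i}{\pi} < d_i$ for any continuation $\rho$ (arenas are non-blocking). The certificate is thus $\mu$: it has polynomial size, and checking that it is a valid history from $v_0$, computing for each $i$ the $w_i$-weight of the prefix of $\mu$ up to its first vertex in $\reach{i}$ (a number of polynomially many bits since it is at most $(t+1)|V|W$), and comparing it with the binary-encoded $d_i$, are all doable in polynomial time.

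For \np{}-hardness, I would reduce from \textsf{SAT}; this already shows hardness of the purely qualitative special case with all $d_i = +\infty$ and all weights $0$, matching the generalized reachability games of~\cite{DBLP:journals/tsi/FijalkowH13}. Given a formula over variables $x_1,\dots,x_n$ with clauses $C_1,\dots,C_m$, build an arena in which \EE{} owns every branching vertex: a chain of ``hub'' vertices $v_0 = h_0, h_1,\dots,h_n$ with a self-loop on $h_n$, where from $h_{j-1}$ player \EE{} may move to a vertex $p_j$ (encoding $x_j = \mathsf{true}$) or a vertex $q_j$ (encoding $x_j = \mathsf{false}$), and from $p_j$ or $q_j$ the unique move leads to $h_j$; all weights are $0$. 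Set $t = m$ and $\reach{k} = \{p_j \mid x_j \in C_k\} \cup \{q_j \mid \neg x_j \in C_k\}$ with $d_k = +\infty$. A play from $v_0$ corresponds exactly to a truth assignment, and it belongs to $\reachBounded{d_k}{\reach{k}}$ iff that assignment satisfies $C_k$; hence $v_0$ is winning for $\Omega = \bigcap_{1 \le k \le m} \reachBounded{d_k}{\reach{k}}$ iff the formula is satisfiable.

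The main obstacle is the correctness of the cycle-removal step in the upper bound: one must argue carefully that removing cycles only inside the inter-first-visit segments preserves, simultaneously for every player $i$, the property that the first visit to $\reach{i}$ occurs with $w_i$-weight below $d_i$ — this is exactly where nonnegativity of the weights and the precise choice of where cycles are removed (away from the first-visit vertices) are needed.
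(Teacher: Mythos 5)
Your proof is correct. The \np{}-membership argument is essentially the paper's: observe that since \EE{} is the only one to play, winning amounts to the existence of a single play in $\Omega$, then eliminate cycles as in the proof of \cref{lem:witness_size_pareto} to obtain a polynomial-size witness that can be guessed and checked in polynomial time with the binary-encoded weights and bounds. Your only variation is to certify a finite prefix $\mu$ (of length at most $(t+1)|V|$) and invoke non-blockingness to extend it, whereas the paper keeps a full lasso $\mu(\nu)^\omega$ of length at most $(t+2)|V|$; both are fine, and your careful justification that cycle removal confined to the inter-first-visit segments preserves all first visits and only decreases each accumulated weight is exactly the point the paper delegates to \cref{lem:witness_size_pareto}. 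Where you genuinely diverge is the lower bound: the paper simply reduces from the \npComplete{}ness of one-player (qualitative) generalized reachability games~\cite{DBLP:journals/tsi/FijalkowH13}, taking the same arena with null weights and $d_i = +\infty$, while you give a direct, self-contained reduction from \textsf{SAT} via the standard variable-choice gadget, using $d_k = +\infty$ and zero weights so that membership in $\reachBounded{d_k}{\reach{k}}$ encodes satisfaction of clause $C_k$. Your reduction is correct (and in effect reproves the cited Fijalkow--Horn special case inside the required ``\EE{} is the only one to play'' fragment); what the citation buys the paper is brevity, while your construction buys independence from the external result and makes explicit that hardness already holds in the qualitative case, which is the same observation the paper relies on.
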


\begin{proof}
    We first notice that if \EE{} has a winning strategy from $v_0$, i.e., there exists a play $\pi \in \Omega$, then we can eliminate cycles as in the proof of~\Cref{lem:witness_size_pareto}. Therefore, there exists a lasso $\pi' = \mu(\nu)^\omega \in \Omega$ where $|\mu\nu| \leq (t+2)|V|$. Thus, to get an algorithm in \np{}, we guess such a lasso $\pi'$ and verify that $\cost{i}{\pi'} < d_i$ for each $i \in \{1, \dots, t\}$. This is possible in polynomial time with the costs encoded in binary. It is proved in~\cite{DBLP:journals/tsi/FijalkowH13} that solving (qualitative) generalized reachability games with $V_\A = \emptyset$ is \npComplete{}. Our problem is thus \npHard{} by a reduction from the previous problem with the same arena, the weight functions assigning a null weight to all edges, and by setting $(d_1,\dots,d_t) = (+\infty,\dots,+\infty)$.
\end{proof}

The next proposition, of potential independent interest, is easily extended to any positive Boolean combinations of bounded safety objectives.

\begin{proposition}\label{lem:dev-pareto-zero-sum-pspace}
    Let $\game = (\arena,\Omega)$ be a zero-sum game where
    \[
    \Omega = \left( \bigcap_{1 \leq i \leq t} \safetyBounded{d_i}{\reach{i}} \right) \cup \left( \bigcup_{1 \leq i \leq t} \safetyBounded{d_i+1}{\reach{i}} \right),
    \]
    such that $+\infty + 1 = +\infty$. Then, deciding whether $v_0$ is winning for \EE{} is in \pspace{}.
\end{proposition}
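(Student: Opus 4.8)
The plan is to solve the complement, using that $\pspace{}$ is closed under complementation: it suffices to decide in polynomial space whether \AA{} has a strategy forcing a play in $\overline{\Omega}$. Unfolding the definitions and applying De~Morgan's laws,
\[
\overline{\Omega} = \Big(\bigcup_{1 \leq i \leq t}\reachBounded{d_i}{\reach{i}}\Big) \cap \Big(\bigcap_{1 \leq i \leq t}\reachBounded{d_i+1}{\reach{i}}\Big) = \{\pi \in \Plays(v_0) \mid \forall i\colon \cost{i}{\pi} \leq d_i, \text{ and } \exists j\colon \cost{j}{\pi} < d_j\},
\]
so \AA{} wants to force the cost tuple $(\cost{i}{\pi})_{1 \leq i \leq t}$ strictly below $(d_i)_{1 \leq i \leq t}$ for the componentwise order.

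The first step is to turn this into a plain reachability game through a product construction. I would build an arena $\arena'$ whose states pair a vertex $v \in V$ with, for each dimension $i$ that has not yet visited $\reach{i}$, the accumulated weight $w_i$ capped at $d_i$ (this component is dropped once $\reach{i}$ is first reached, and is unnecessary when $d_i = +\infty$), together with two monotone Boolean flags per dimension recording whether $\reach{i}$ has already been visited with $w_i$-cost $\leq d_i$, resp.\ $< d_i$. Let $\mathsf{Bad}$ be the set of states whose flags witness $\overline{\Omega}$: the ``$\leq d_i$'' flag is on for all $i$ and the ``$< d_j$'' flag is on for some $j$. As all flags are monotone and $\overline{\Omega}$ is a conjunction of finitely many properties that, once satisfied by a prefix, remain satisfied by all extensions, a play belongs to $\overline{\Omega}$ iff it eventually reaches $\mathsf{Bad}$. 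Hence \AA{} wins $\game$ from $v_0$ iff \AA{} wins the reachability game to $\mathsf{Bad}$ on $\arena'$, i.e.\ iff \EE{} wins the safety game ``avoid $\mathsf{Bad}$'' on $\arena'$. The same reduction works verbatim for an arbitrary positive Boolean combination of bounded safety objectives, since its complement is a positive Boolean combination of bounded reachability objectives, hence a monotone (co-safety-like) objective expressible again as ``eventually reach a state where the flag evaluation of the formula is true''; this is what yields the extension mentioned after the statement.

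The hard part is that $\arena'$ has size $|V| \cdot \prod_i (d_i+2) \cdot 2^{O(t)}$, exponential in the input because the $d_i$ are encoded in binary, so we can neither construct it nor run the usual polynomial-time attractor computation on it. To overcome this I would design a recursive procedure running in polynomial space, recursing on the set $S$ of dimensions whose bounded objective is still pending; the base case $S = \emptyset$ is immediate. The key observation for the recursive step is that one never needs the full budget range $\{0,\dots,d_i\}$ of a single dimension: by the theory of shortest-path (min-cost reachability) games on the \emph{small} arena $\arena$, the least cost with which the player racing towards $\reach{i}$ can force reaching it is either $+\infty$ or at most $(|V|-1)W$, and it is computable in polynomial time on $\arena$. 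Using this, at each recursion level the procedure only adds polynomially many bits of state for the newly treated dimension (the budget values that are either small or within $(|V|-1)W$ of the deadline), decides the single-dimension sub-races directly on $\arena$, and recurses on $S$ with one fewer dimension; since the recursion has depth at most $t$, each frame carries polynomially many bits, and the intermediate games can themselves be resolved in polynomial space by re-solving rather than storing winning regions, the whole algorithm uses polynomial space. Getting this final step exactly right — the coarse abstraction of the budgets and the bookkeeping of which dimensions become decided and in which order along an \AA{}-play — is where the real work lies; the tools are those already used in \Cref{lem:witness_size_pareto} and \Cref{theorem:gen-bounded-reach-npcomplete}, namely cycle elimination that preserves the first visits to the $\reach{i}$'s, together with the polynomial bounds on shortest-path-game values.
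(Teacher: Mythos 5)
Your reduction to the complement is correct ($\overline{\Omega}$ asks \AA{} to force $\cost{i}{\pi} \leq d_i$ for all $i$ and $\cost{j}{\pi} < d_j$ for some $j$), and the product arena with capped accumulated weights and monotone flags is a sound reformulation as a safety/reachability game. But that part is routine and only yields an exponential-size game; the entire content of the proposition is the polynomial-space decision procedure, and that is exactly the step you leave unproven. Your sketch — recurse on the set of pending dimensions, abstract each budget to ``small or within $(|V|-1)W$ of the deadline'', and decide ``single-dimension sub-races directly on $\arena$'' — is not justified and, as stated, does not work: \AA{} must meet all the deadlines along \emph{one} play against \EE{}'s interference, so the dimensions cannot be decoupled into independent races (this coupling is precisely what makes even qualitative generalized reachability \pspaceComplete{}, cf.\ the Fijalkow--Horn result used in \Cref{theorem:gen-bounded-reach-npcomplete}); moreover, whether \AA{} can still win from a vertex depends on the whole vector of remaining budgets $d_i - c_i$, which need not be near $0$ or near $d_i$, and a cheap continuation in dimension $i$ may be expensive in dimension $j$, so the single-dimension shortest-path values do not determine the multidimensional winner. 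You acknowledge this yourself (``where the real work lies''), which means the proof has a genuine gap rather than a fixable slip.

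The idea you are missing, and that the paper uses, exploits the \emph{safety} side rather than the complemented reachability side: since \EE{}'s objective is a positive combination of bounded safety objectives, accumulating more weight (and not visiting targets) only helps her, so a min--max depth-first search over the unraveling of $\arena$ can truncate a branch as soon as a vertex repeats, declaring \EE{} the winner there, and can discard or resolve the individual $\safetyBounded{d_i}{\reach{i}}$ and $\safetyBounded{d_i+1}{\reach{i}}$ constraints at the first visit of each $\reach{i}$ according to the accumulated weight at that moment. Every branch then has length $O(t|V|)$, and the only numerical data carried along a branch is the tuple of accumulated weights, bounded by $t|V|W$ and stored in binary, so the alternating evaluation of the finite tree runs in polynomial space. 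No abstraction of budgets or per-dimension decomposition is needed; monotonicity of bounded safety in the accumulated weights is what makes the vertex-repetition cutoff sound, and it is this cutoff — not a decomposition over dimensions — that bounds the recursion depth.
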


\begin{proof}
    First, let $\Omega^{(1)} = \left( \bigcap_{1 \leq i \leq t} \safetyBounded{d_i}{\reach{i}} \right)$ and $\Omega^{(2)} = \left( \bigcup_{1 \leq i \leq t} \safetyBounded{d_i+1}{\reach{i}} \right)$. We solve the game $(\arena,\Omega)$ by using a recursive algorithm. To know whether $v_0$ is winning for \EE{}, we run a depth-first search over a finite tree rooted at $v_0$ that is the (truncated) unraveling of $\arena$, and we keep track of the accumulated weights along the explored branch as a tuple $(c_i)_{1 \leq i \leq t}$, where each $c_i$ is encoded in binary. Each explored branch $h$ will have its leaf decorated by a boolean $f(h) = \bot$ (\EE{} is losing) or $f(h) = \top$ (\EE{} is winning) according to some rules that we describe below.
    Then the depth-first search algorithm backwardly assigns a boolean to the internal nodes of the tree according to the following rule:
    for any $hv \in V^*V_\E$, we have $f(hv) = \top$ if there exists $v' \in \succc{v}$ such that $f(hvv') = \top$, otherwise $f(hv) = \bot$, while for any $hv \in V^*V_\A$, we have $f(hv) = \top$ if for all $v' \in \succc{v}$, $f(hvv') = \top$, otherwise $f(hv) = \bot$. To have an algorithm executing in polynomial space, the depth of the tree must be polynomial.

    Along a branch, the rules are the following to stop the exploration (the objective $\Omega$ may be modified during the exploration): 
    \begin{itemize}
    \item 
    If for some $i$, the current weight $c_i$ is such that $c_i \geq d_i + 1$ and $\reach{i}$ was not visited, then we can stop the branch $h$ and set $f(h) = \top$. Indeed, $\Omega^{(2)}$ is satisfied, and thus also $\Omega$.
    \item If for some $i$, we have $c_i < d_i$ while visiting $\reach{i}$, then $\Omega^{(1)}$ is not satisfiable anymore, and we continue the exploration with the sole objective $\Omega^{(2)}$ where the $i$-th objective $\safetyBounded{d_i+1}{\reach{i}}$ being ignored (as it is not satisfied).
    \item If for some $i$, we have $c_i = d_i$ while visiting $\reach{i}$, then we continue the exploration with $\Omega$ such that $\safetyBounded{d_i}{\reach{i}}$ is removed from $\Omega^{(1)}$ (as it is satisfied) and $\safetyBounded{d_i+1}{\reach{i}}$ is removed from $\Omega^{(2)}$ (as it is not satisfied).
    \item If $\Omega^{(1)}$ becomes an empty intersection, then we stop the branch $h$ and set $f(h) = \top$.
    \item If $\Omega^{(1)}$ has been removed from $\Omega$ (because it was not satisfiable anymore) and $\Omega^{(2)}$ becomes an empty union, then we stop the branch $h$ and set $f(h) = \bot$. 
    \item There is one more case to stop the branch $h$: when some vertex $v$ is visited twice, i.e., $h = gvg'v$ for some $g,g' \in V^*$. Then we stop the branch and set $f(h) = \top$. Indeed, we stand in a better situation in $gvg'v$ than in $gv$ concerning the accumulated weights, as we consider bounded safety objectives.
    \end{itemize}

    The last case happens as soon as the explored branch has length $|V| + 1$ and the other cases do not occur. Therefore, as there are $t$ bounded safety objectives in both $\Omega^{(1)}$ and $\Omega^{(2)}$, any branch has a length polynomially bounded by $t|V|$. Moreover, the accumulated weights $c_i$ are all bounded by $t|V| W$, thus stored in a polynomial space when encoded in binary. We can thus decide in polynomial space whether $v_0$ is winning for \EE{} for $\Omega$.
\end{proof}

\subsection{Pareto-Optimality Check} \label{subsec:check}

Let us come back to our reachability games. We can now solve \cref{pareto-synthesis-step3} where given a lasso~$\pi$ with $\payoff{\pi} \in \{0,1,\dots,B,+\infty\}^t$ (by~\Cref{lem:witness_size_pareto}), we want to verify whether player~$0$ has a strategy~$\strategyfor{0}$ guaranteeing that $\payoff{\pi}$ is $\strategyfor{0}$-fixed PO. If player~$1$ is the only one to play in the game, it reduces to verify that $\payoff{\pi}$ is PO. The latter problem is in \conp{} as stated in the next lemma, while the former is in \pspace{} as stated in~\Cref{lem:pareto-optimal-check-p1-p2}.

\begin{lemma}\label{lem:pareto-optimal-check-p1-alone}
    Suppose that player~$1$ is the only one to play. Deciding whether a given cost $p \in \{0,1,\dots,B,+\infty\}^t$ is PO is in \conp{}.
\end{lemma}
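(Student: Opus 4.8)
The plan is to show that the complement problem --- deciding whether the given $p$ is \emph{not} PO --- lies in \np{}. Recall that in the way this lemma is applied (see \cref{pareto-synthesis-step1,pareto-synthesis-step2}) the cost $p$ is supplied as $p = \payoff{\pi}$ for a concrete play $\pi$, so $p$ is always the cost of some play; I use this freely. Under this assumption, and since player~$1$ is the only one to play (so ``consistent with $\strategyfor{0}$'' imposes nothing and $C_{\strategyfor{0}} = \min\{\payoff{\pi'} \mid \pi' \in \Plays(v_0)\}$), $p$ is PO if and only if $p$ is minimal, i.e.\ there is \emph{no} play $\pi' \in \Plays(v_0)$ with $\payoff{\pi'} < p$. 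Hence ``$p$ is not PO'' is exactly ``there exists a play $\pi'$ with $\payoff{\pi'} < p$''.

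First I would argue that such a dominating play can always be taken to be a short lasso. Starting from any play $\pi''$ with $\payoff{\pi''} < p$, I apply the cycle-removal procedure from the proof of \Cref{lem:witness_size_pareto}: delete cycles occurring before the first visit to a target set, between two consecutive first visits to target sets, and in the tail after the last such first visit. Since removing a cycle never increases any coordinate of the cost tuple, the resulting lasso $\pi' = \mu(\nu)^\omega$ satisfies $\payoff{\pi'} \leq \payoff{\pi''} < p$, hence $\payoff{\pi'} < p$, with $|\mu| \leq (t+1)|V|$ and $|\nu| \leq |V|$.

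The \np{} algorithm for the complement then reads: guess a lasso $\pi' = \mu(\nu)^\omega$ with $|\mu\nu| \leq (t+2)|V|$; check in polynomial time that it is a genuine play of the arena starting at $v_0$; compute $\payoff{\pi'}$ in polynomial time (each finite coordinate is at most $(t+2)|V|W$, which is polynomial in the input size since $W$ is encoded in binary); and verify $\payoff{\pi'} < p$, i.e.\ $\costfunc{i}(\pi') \leq p_i$ for all $i$ with at least one strict inequality --- a coordinate with $p_i = +\infty$ is automatically satisfied with $\leq$, and the required strict inequality can be witnessed either on a coordinate with $p_i$ finite or by $\pi'$ visiting some $\reach{j}$ with $p_j = +\infty$. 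Accepting exactly when such a lasso is found is a correct \np{} procedure, so deciding ``$p$ is PO'' is in \conp{}.

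The step I expect to be the main obstacle is realizability. Membership in the antichain $C_{\strategyfor{0}}$ requires $p$ to be the cost of \emph{some} play \emph{and} to be minimal, and deciding exact realizability of an arbitrary cost vector is \npHard{} on its own (it encodes subset sum through a chain of edges of weights $0$ or $a_i$), so the \conp{} bound genuinely depends on $p$ being handed over together with, or equal to, the cost of a concrete play --- which is precisely the situation in \cref{pareto-synthesis-step3}. Once realizability is granted, the domination test above is routine.
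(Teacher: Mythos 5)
Your proof is correct and essentially the paper's: the paper phrases the complement as solving, for some $j$, a one-player zero-sum game with an intersection of bounded reachability objectives (\cref{theorem:gen-bounded-reach-npcomplete}), whose \np{} algorithm is precisely your cycle-removal-plus-short-lasso guess, so the two arguments coincide. Your realizability caveat is also consistent with the paper, whose proof checks only minimality and implicitly relies on $p$ being the cost of a concrete (guessed) play when the lemma is invoked in \cref{pareto-synthesis-step3}.
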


\begin{proof}
    The cost $p$ is \emph{not} PO if there exists a play $\pi' \in \Plays(v_0)$ such that $\cost{i}{\pi'} \leq p_i$ for all $i \in \{1,\dots,t\}$ and $\cost{j}{\pi'} < p_j$ for some $j \in \{1,\dots,t\}$. That is, if for some $j$, there exists a play $\pi' \in \Omega^{(j)} = \bigcap_{i\neq j} \reachBounded{p_i + 1}{\reach{i}} \cap \reachBounded{p_j}{\reach{j}}$. Solving the zero-sum game $(\arena,\Omega)$ is in \np{} by~\Cref{theorem:gen-bounded-reach-npcomplete}. This concludes the proof.
\end{proof}

\begin{lemma}\label{lem:pareto-optimal-check-p1-p2}
    Given $p = \payoff{\pi} \in \{0,1,\dots,B,+\infty\}^t$ being the cost of a play $\pi$, deciding whether player~$0$ has a strategy $\strategyfor{0}$ ensuring that $p$ is $\strategyfor{0}$-fixed PO is in \pspace{}.
\end{lemma}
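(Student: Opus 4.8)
The plan is to reduce the question ``does player~$0$ have a strategy $\strategyfor{0}$ making $p = \payoff{\pi}$ a $\strategyfor{0}$-fixed PO cost'' to solving a zero-sum game of the kind isolated in \Cref{lem:dev-pareto-zero-sum-pspace}, played on (a light modification of) the same arena. Recall that $p \in C_{\strategyfor{0}}$ means two things: (i) \emph{realizability} — there is a play consistent with $\strategyfor{0}$ whose cost is exactly $p$ (equivalently, $\leq p$, since anything strictly below $p$ could be used to replace it, but we will want the exact witness $\pi$ itself here); and (ii) \emph{minimality} — no play consistent with $\strategyfor{0}$ has cost $p' < p$. Since the witness lasso $\pi$ with $\payoff{\pi}=p$ is already in hand from \cref{pareto-synthesis-step1}, player~$0$ can simply commit to following $\pi$ along its own branch, so realizability is free; the whole content is to show player~$0$ can play so that minimality holds, i.e. \emph{against every strategy of player~$1$}, the resulting play does \emph{not} have cost strictly dominated by $p$.

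First I would rephrase ``$\mathsf{cost}_{\mathrm{env}}(\rho) \not< p$'' for a play $\rho$. Writing $p = (p_1,\dots,p_t)$, a play $\rho$ satisfies $\payoff{\rho} < p$ iff $\cost{i}{\rho} \le p_i$ for all $i$ and $\cost{j}{\rho} < p_j$ for some $j$. Its negation is: \emph{either} $\cost{i}{\rho} > p_i$ for some $i$ (i.e.\ $\rho \in \bigcup_i \safetyBounded{p_i+1}{\reach{i}}$), \emph{or} $\cost{i}{\rho} \ge p_i$ for all $i$ (i.e.\ $\rho \in \bigcap_i \safetyBounded{p_i}{\reach{i}}$). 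This is exactly the objective $\Omega = \big(\bigcap_{1\le i\le t}\safetyBounded{p_i}{\reach{i}}\big)\cup\big(\bigcup_{1\le i\le t}\safetyBounded{p_i+1}{\reach{i}}\big)$ of \Cref{lem:dev-pareto-zero-sum-pspace}, with $d_i := p_i$ and with the convention $+\infty+1=+\infty$ handling coordinates where $p_i=+\infty$. So ``player~$0$ has a strategy keeping every consistent play out of $\{\rho : \payoff{\rho}<p\}$'' is precisely ``\EE{} (= player~$0$) wins the zero-sum game $(\arena,\Omega)$ where \AA{} controls all of player~$1$'s vertices.'' By \Cref{lem:dev-pareto-zero-sum-pspace} this is decidable in \pspace{}.

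The one subtlety I expect to be the main obstacle is the interaction between the realizability requirement (i) and the minimality game: player~$0$ must at the same time \emph{produce} the witness $\pi$ and \emph{defend} minimality. I would handle this by observing that producing $\pi$ is not in tension with defending minimality — along the play $\pi$ itself we have $\payoff{\pi}=p \not< p$, so $\pi$ is already a winning play for $\Omega$, and on histories that deviate from $\pi$ (which only happens at player~$1$'s vertices, since we fix player~$0$'s moves on the $\pi$-branch) player~$0$ reverts to a strategy winning the zero-sum game $(\arena,\Omega)$ from the deviation point. Thus the ``combined'' strategy $\strategyfor{0}$ exists iff (a) the specific lasso $\pi$ lies in $\Omega$ (a polynomial-time check, already implicit in $\payoff{\pi}=p$), and (b) from every vertex reachable along $\pi$ after a player-$1$ move off $\pi$, player~$0$ wins $(\arena,\Omega)$ — and more simply, it suffices that player~$0$ wins $(\arena,\Omega)$ from $v_0$ while being free to follow $\pi$. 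I would make this precise by solving $(\arena,\Omega)$ on the arena where player~$0$'s choices are unconstrained and then argue a winning $\strategyfor{0}$ can always be taken to follow $\pi$ on the main branch, because $\pi\in\Omega$ means not deviating is never harmful for the safety-style objective $\Omega$; the weights only ever go up, so $\pi$ being safe is preserved. Since $\payoff{\pi}\in\{0,1,\dots,B,+\infty\}^t$, the constants $d_i=p_i$ are bounded by $B$ and hence of polynomial bit-length, so \Cref{lem:dev-pareto-zero-sum-pspace} applies with its stated \pspace{} bound, concluding the proof.
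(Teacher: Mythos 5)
You correctly rewrite $\neg(\payoff{\rho} < p)$ as the positive Boolean combination of bounded safety objectives handled by \Cref{lem:dev-pareto-zero-sum-pspace}; that part coincides with the paper. The gap is your final reduction to a \emph{single} zero-sum game $(\arena,\Omega)$ played from $v_0$: the claim that a strategy winning this game ``can always be taken to follow $\pi$ on the main branch'' is false, and the whole argument rests on it. Winning from $v_0$ allows player~$0$ to abandon $\pi$ at his own vertices, and then nothing guarantees that the cost $p$ is realized by any play consistent with the strategy, nor that $p$ could have been defended had $\pi$ been followed. Concretely, take $t=1$, a player-$0$ vertex $v_0$ with successors $x$ (owned by player~$1$) and $y$, where $x$ has successors $T$ and $T'$, the vertices $T,T',y$ are absorbing and all belong to $\reach{1}$, and $w_1((v_0,x))=2$, $w_1((x,T))=3$, $w_1((x,T'))=1$, $w_1((v_0,y))=10$. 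For $\pi = v_0 x T^\omega$ we get $p=(5)$. Player~$0$ wins your global game (whose objective amounts to $\cost{1}{\rho}\geq 5$) by moving to $y$, so your algorithm answers yes; yet no strategy $\strategyfor{0}$ makes $p$ a $\strategyfor{0}$-fixed PO cost: following $\pi$ lets player~$1$ switch to $T'$ and obtain cost $3<5$, while moving to $y$ gives $C_{\strategyfor{0}}=\{(10)\}\not\ni p$. So ``not deviating is never harmful'' is unsound here: $\pi\in\Omega$ only says the outcome itself is not dominated, not that the vertices along $\pi$ lie in player~$0$'s winning region for the residual objectives after a deviation by player~$1$.

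What is actually needed, and what the paper does, is to quantify over the deviation points of $\pi$: for every prefix $h$ of $\pi$ and off-$\pi$ successor $v$ (a deviation $hv$), player~$0$ must win a zero-sum game from $v$ with objective $\{\pi'\in\Plays(v)\mid \neg(\payoff{h\pi'}<p)\}$, which is the safety combination of \Cref{lem:dev-pareto-zero-sum-pspace} but with constants shifted to $p_i - w_i(hv)$ and restricted to the target sets not yet visited by $h$ (for visited targets the cost is already frozen at exactly $p_i$, so those coordinates drop out of both the intersection and the union). The strategy following $\pi$ and switching to the corresponding winning strategy upon deviation then witnesses the property, and conversely each deviation game must be won. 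One further ingredient you would still need is the paper's monotonicity argument showing that polynomially many deviation games suffice (two deviations at the same vertex with the same set of visited targets compare, the longer prefix being only easier), since $\pi$ may have exponential length. Your single game from $v_0$ yields only a necessary condition, not a sufficient one.
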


\begin{proof}
    To prove the lemma, we first fix a prefix $h$ of $\pi$, with $v \in V$, such that $hv$ is not a prefix of $\pi$ ($hv$ is called a \emph{deviation}), and we study the zero-sum game $(\arena,\Omega^{(hv)})$ with the objective $\Omega^{(hv)}$ equal to $\{\pi' \in \Plays(v) \mid \neg (\payoff{h\pi'} < p)\}$. Let us show that deciding whether $v$ is winning for player~$0$ for $\Omega^{(hv)}$ is in \pspace{}. Notice that for each $i \in \{1,\dots,t\}$ such that $h$ does not visit $\reach{i}$, we have, with $q_i = w_i(hv)$ and $+\infty - q_i = +\infty$: $
    \cost{i}{h\pi'} < p_i$ if and only if $\cost{i}{\pi'} < p_i - q_i$.
    Let us rewrite the condition $\neg (p' < p)$ with $p,p' \in \N^t$ as follows: $
    \left(\forall i\in\{1,\dots,t\}~ p'_i \geq p_i\right) ~\lor~ \left(\exists i\in\{1,\dots,t\}~ p'_i > p_i\right)$.  Hence, the objective $\Omega^{(hv)}$ can be rewritten as
    $
     \left( \bigcap_{\substack{1 \leq i \leq t \\ \occ{h} \cap \reach{i} = \emptyset}} \safetyBounded{p_i - q_i}{\reach{i}} \right) \cup \left( \bigcup_{\substack{1 \leq i \leq t \\ \occ{h} \cap \reach{i} = \emptyset}} \safetyBounded{p_i - q_i + 1}{\reach{i}} \right).
    $
    By~\Cref{lem:dev-pareto-zero-sum-pspace}, given the constants $p_i$ and $q_i$, we can check whether $v$ is winning for player~$0$ in polynomial space. Notice that each $q_i$ can be computed in polynomial space by accumulating the weights, with respect to $w_i$, as long as $\reach{i}$ is not visited (as $q_i \leq p_i$).

    Second, given two deviations $hv, h'v$ ending with the same vertex $v$ and such that $h$ is prefix of $h'$, if $\Visit{h'} = \Visit{h}$ and $v$ is winning for $\Omega^{(hv)}$, then $v$ is also winning for $\Omega^{(h'v)}$ (with the same strategy). Indeed, the constants $q'_i$ for $h'v$ are greater than the constants $q_i$ for~$hv$. We are thus in a ``better situation'' than in $\Omega^{(h'v)}$. So, it suffices to consider polynomially many deviations $hv$, as $\pi$ can visit at most $t$ target sets and there are at most $|V|$ vertices~$v$.

    Finally, deciding whether player~$0$ has a strategy $\strategyfor{0}$ ensuring that $p$ is $\strategyfor{0}$-fixed PO amounts to solving the zero-sum games $(\arena,\Omega^{(hv)})$ for polynomially many deviations $hv$. If player~$0$ has a winning strategy $\strategyfor{hv}$ for all those games, the required strategy $\strategyfor{0}$ is defined as $\strategyfor{0}(g) = \strategyfor{hv}(vg')$ for all histories $g$ such that $g = hvg'$ with the longest prefix $h$ of $\pi$.
\end{proof}

\subsection{Upper Bounds} \label{subsec:UpperBounds}

We are now ready to prove the upper bounds in \Cref{theorem:pareto-results} by providing the announced algorithms for Steps~\ref{pareto-synthesis-step1}-\ref{pareto-synthesis-step3}. The proof is divided according to the considered problem.
We need to recall~\cite{parikh} that a Parikh automaton is a nondeterministic finite automaton (NFA) over an alphabet $\Sigma$ and whose transitions are weighted by tuples in $\N^k$, together with a semilinear set $\mathbf{C} \subseteq \N^k$. It accepts a word $w \in \Sigma^*$ if there exists a run on $w$ ending on an accepting state such that the sum of all encountered weight tuples belongs to $\mathbf{C}$. The non-emptiness problem for Parikh automata is \npComplete{} for numbers encoded in binary~\cite{parikh}.

\begin{proof}[Proof of the upper bounds in \cref{theorem:pareto-results}] 
We begin with the \prob{CPS} problem (\cref{theorem:pareto-results}.\ref{thm-CPS}). Let us give an algorithm in \pspace{} that decides whether there exist $\strategyfor{0} \in \Sigma_0$ and a $\strategyfor{0}$-fixed PO play $\pi$ such that $\cost{0}{\pi} \leq c$. By \cref{lem:witness_size_pareto}, we guess a lasso $\pi = \mu(\nu)^\omega$ with $|\mu\nu| \leq (t+2)|V|$, in time polynomial in $|V|$ and $t$. Then, we compute $p = \payoff{\pi}$ and $\cost{0}{\pi}$ and check whether $\cost{0}{\pi} \leq c$. This can be done in time polynomial in $t$, $|V|$, and the binary encoding of $W$ and $c$ by~\cref{lem:witness_size_pareto}. Finally, by \cref{lem:pareto-optimal-check-p1-p2}, we verify in polynomial space whether player~$0$ has a strategy $\strategyfor{0}$ ensuring that $p$ is $\strategyfor{0}$-fixed PO.

\medskip
For the \prob{NCPV} problem (\cref{theorem:pareto-results}.\ref{thm-NCPV}), recall that we consider its complementary \coprob{NCPV} problem (see \cref{problem:complement-PO}), and that player~$1$ is the only one to play. We begin by giving an algorithm in \np{} for \cref{pareto-synthesis-step1,pareto-synthesis-step2}. \Cref{lem:witness_size_pareto} does not provide a polynomial bound on the length of the lasso $\pi = \mu(\nu)^\omega$ due to the threshold $c$ given in binary. However, we will guess a succinct representation of $\pi$ by using Parikh automata.

The idea is the following one. Along the prefix $\mu$ of the lasso $\pi$, some target sets $\reach{k_1}, \dots, \reach{k_n}$ are visited, with $n \leq t$, such that the first visits are in vertices $\pi_{\ell_ 1}, \dots, \pi_{\ell_n}$ with $\ell_1 < \dots < \ell_n$. And after $\mu$, no more target sets are visited along $\mu\nu$ (see~\Cref{lem:witness_size_pareto}).
We start by guessing a sequence $v_0, v_1, \dots, v_n, v_{n+1}$ of vertices, called \emph{markers}, with the aim that $v_0$ is the initial vertex, $v_i = \pi_{\ell_i}$, $1 \leq i \leq n$, and $v_{n+1} = \first{\nu}$. By~\Cref{lem:witness_size_pareto}, we know that $\payoff{\pi} \in \{0,1,\dots,B,+\infty\}^t$, where $B = (c+(t+2)|V|)W$. We thus guess a tuple $(p_0,p_1, \dots, p_t) \in \{0,1,\dots,B,+\infty\}^t$ with the aim that $(p_1,\dots,p_t) = \payoff{\mu}$ and $p_0 = w_0(\mu)$. We also guess for each \emph{portion} $\pi_{[\ell_i,\ell_{i+1}]}$, $i \leq n$,
\begin{itemize}
\item a weight $q^{(i)}_0 \in \{0,1,\dots,B\}$ for player~$0$ with the aim that $q^{(i)}_0 = w_0(\pi_{[\ell_i,\ell_{i+1}]})$ and $w_0(\mu) = p_0 = \Sigma_i q^{(i)}_0$,
\item a ``useful'' environment weight tuple, i.e., for all $j \in \{1,\dots,t\}$, a weight $q^{(i)}_j \in \{0,1,\dots,B\}$ such that $\pi_{[0,\ell_{i}]}$ does not visit $\reach{j}$, with the aim that $q^{(i)}_j = w_j(\pi_{[\ell_i,\ell_{i+1}]})$ and $\cost{j}{\mu} = p_j = \Sigma_i q^{(i)}_j$.\footnote{If $\pi_{[0,\ell_{i}]}$ visits $\reach{j}$, then $\cost{j}{\pi}$ is already known as $\cost{j}{\pi} = \cost{j}{\pi_{[0,\ell_{i}]}}$.}
\end{itemize}
We can guess in polynomial time the sequence $v_0, v_1, \dots, v_n, v_{n+1}$ and the constants $p_j$, $q^{(i)}_j$ encoded in binary, as $n \leq t$ and $B = (c+(t+2)|V|)W$. We then check in polynomial time that $v_0$ is the initial state, that each $v_i$ belongs to a distinct target set $T_{k_i}$, $1 \leq i \leq n$,
that $p_j = \Sigma_i q^{(i)}_j$ for each $j$, and that $p_0 > c$ for the given threshold~$c$.\footnote{To keep the proof readable, we assume that each $v_i$ belongs to one target set $\reach{k_i}$. In general, it could belong to several target sets. The proof is easily adapted by considering the union of target sets.}

It remains to check the existence of polynomially many paths:
\begin{itemize}
\item For each $i \leq n$, the existence of a path $\rho^{(i)}$ from $v_i$ to $v_{i+1}$ on a subgraph of $\arena$ restricted to some sets $V^{(i)}$ and $E^{(i)}$ of vertices and edges respectively, and to some weight functions, such that $w_j(\rho^{(i)}) = q_j^{(i)}$ for all $j$. 
\item The existence of a path from $v_{n+1}$ to itself (the cycle $\nu$) that visits no new target set with respect to $\reach{k_i}$, $1 \leq i \leq n$.
\end{itemize}
The first check can be done thanks to Parikh automata : 
one can decide in \np{} the existence of a path in a subgraph of $\arena$ between two given vertices and with a given weight tuple $\bar q$ (the subgraph is seen as a Parikh automaton with $\Sigma = \{\#\}$ and $\mathbf{C} = \{\bar q\}$).\footnote{We do not need to use an oracle here. It suffices to plug the \np{} algorithm for Parikh automata in ours as if the required path exists, our algorithm will find it in polynomial time.} The set $V^{(i)}$ is defined as $V \ssetminus \left( \bigcup_{j > i+1} \reach{k_j} \cup \bigcup_{p_j = +\infty} \reach{j} \right )$, and the set $E^{(i)}$ as $(E \cap V^{(i)} \times V^{(i)}) \ssetminus \{(v,v') \mid v \in \reach{k_{i+1}}\}$. Indeed, for the portion $\pi_{[\ell_i,\ell_{i+1}]}$, we do not allow to prematurely visit a target set $\reach{k_j}$, $j \geq i+1$, except $v_{i+1} \in \reach{k_{i+1}}$, and there are target sets that we do not want to visit at all. We also remove the weight function $w_{k_j}$ with $j \in \{1,\dots, i\}$.
The second check can be done thanks to classical automata, by restricting the set of vertices to $V \ssetminus \left( \bigcup_{p_j = +\infty} \reach{j} \right )$.
To show that the \coprob{NCPV} problem is in \sigmaClass, in the previous algorithm in \np{} that guesses a lasso $\pi$ with $\payoff{\pi} = p$, we add an oracle in \conp{} to check whether $p$ is a PO cost thanks to~\Cref{lem:pareto-optimal-check-p1-alone}. As \npConp $=$ \sigmaClass, we get that the \prob{NCPV} problem is in \piClass.

\medskip It remains to show that the \coprob{UNCPV} problem is in \pspace{} to get the upper bound of \cref{theorem:pareto-results}.\ref{thm-UNCPV}). The approach is to guess a cost $p \in \{0,\ldots,B,+\infty\}^t$ and a length $\ell$ for the exponential lasso $\pi$ of~\cref{lem:witness_size_pareto}, whose both encodings in binary use a polynomial space. We guess $\pi$ vertex by vertex, by only storing the current edge $(u,u')$, the current accumulated weight $(c_0,c_1,\dots,c_t)$ on each dimension, and which target sets $\reach{i}$ have already been visited. At any time, the stored information uses a polynomial space. At each guess, we apply the reasoning of \cref{lem:pareto-optimal-check-p1-p2} to check in polynomial space whether player~$0$ can ensure that $p$ is a PO cost from each vertex $v \neq u'$ successor of $u$ (i.e., from any deviation of $\pi$). We also check that for each first visit to a target set $\reach{i}$, we have $c_i = p_i$ if $i \in \{1,\ldots, t\}$, and $c_i > c$ if $i = 0$. At each guess, a counter is incremented until reaching the length $\ell$, where we stop guessing $\pi$ and finally check whether $p_i = +\infty$ for each $\reach{i}$ that has not been visited.

This completes the proof.
\end{proof}

\section{Nash Equilibria} \label{section:Nash}

We now discuss the proofs of \Cref{theorem:Nash-results}. The environment is here composed of $t$ players whose rational responses to a strategy~$\strategyfor{0}$ of player~$0$ are \fixedEquilibriaStrategy{} outcomes. 

The upper bounds for \prob{(U)NCNV} and \prob{CNS} problems given in \cref{theorem:Nash-results}.\ref{thm-NCNV}-\ref{thm-CNS} are proved with the same approach as for Pareto optimality, limited to Steps \ref{pareto-synthesis-step1}-\ref{pareto-synthesis-step2}. There is no need for \cref{pareto-synthesis-step3}, thanks to a well-known characterization of NE outcomes (based on the values of some two-player zero-sum games, see e.g.~\cite{DBLP:journals/jcss/BrihayeBGT21,Bru21} or~\cref{appendix:ne-characterization}) that is directly checked on the lasso guessed in \cref{pareto-synthesis-step1}. We need again Parikh automata to guess a succinct representation of the lasso. The lower bounds for those problems were already known for qualitative reachability games~\cite{grandmont23}. See \cref{appendix:other-results-nash-2p-parikh}.

We thus focus on the \prob{NCNS} problem (\cref{theorem:Nash-results}.\ref{thm-NCNS}). We prove below that this problem is \exptimeHard{}, already for two-player environments. The decidability is left open. This decision problem is a real challenge that cannot be solved by known approaches. Indeed, the technique of tree automata, as used in~\cite{ConduracheFGR16} to show the decidability of several $\omega$-regular objectives, is not applicable in the context of quantitative reachability. This is because, while in the scenario of qualitative reachability, the costs are Boolean and can be encoded within the finite state space of a tree automaton, for quantitative reachability, these costs are now integers that are not bounded and vary according to the strategy~$\strategyfor{0}$ that is being synthesized. Consequently, it is not feasible to directly encode constraints within the states of the automaton in this latter case. Additionally, there is a necessity to enforce constraints related to subtrees, such as comparing (unbounded) costs between two subtrees. Generally, incorporating the capability to enforce subtree constraints in tree automata results in undecidability, with only certain subclasses having a decidable emptiness problem, see e.g.~\cite{DBLP:conf/lics/BargunoCGJV10}. Therefore, addressing the general case would necessitate either advancements in the field of automata theory or an entirely new methodological approach.

However, we are able to solve the practically relevant case of one-player environments for which we prove that the \prob{NCNS} problem is \pspaceHard{} and in \exptime{} in \cref{appendix:exp-algo-2p-ncns}.
The \pspaceHard{}ness is given by a classical reduction from the subset-sum game problem~\cite{game-subset-sum}. The intuition for the \exptime{}-membership is the following: it consists in finding a play $\pi$ where $\cost{0}{\pi} \leq c$ such that when the only component of the environment deviates from $\pi$, either the system inflicts to the deviating play $\pi'$ a cost for the environment such that $\cost{1}{\pi'} > \cost{1}{\pi'}$ meaning that deviating is not profitable, or it ensures a cost for himself such that $\cost{0}{\pi'} \leq c$. Note that this approach only works for one-player environments.

We are also able to solve the \prob{NCNS} problem for any number of players in the environment, for the variant where the rational NE responses of the environment aim to ensure costs bounded by a given threshold rather than minimizing these costs (this is also arguably an interesting model of rationality in practice). This is a perspective studied in \cite{DBLP:conf/ijcai/RajasekaranBansalV23} in
the case of NEs for discounted-sum objectives.
We show in \cref{appendix:2p-ncns-exptime-complete-bounded-reach} that this variant is \exptimeComplete{}.

\begin{restatable}{theorem}{ncnsboundedexptime}
\label{theorem:ncns-bounded-exptime-complete}
    The \probname{}{NC}{N}{S} problem where the objective of each player~$i \in \{1,\dots,t\}$ is a bounded reachability objective $\reachBounded{d_i}{\reach{i}}$ is \exptimeComplete{}, and hardness holds even with a one-player environment.
\end{restatable}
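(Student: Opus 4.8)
The plan is to establish both an \exptime{} upper bound and matching \exptimeHard{}ness, the latter already for a one-player environment. For the \exptime{} membership, I would adapt the reduction-to-zero-sum-games strategy that underlies the NE-characterization used elsewhere in the paper. The key simplification compared to the plain \prob{NCNS} problem is that when each follower~$i$ has a \emph{bounded} reachability objective $\reachBounded{d_i}{\reach{i}}$, the relevant state of a follower is no longer an unbounded accumulated cost but only the finitely many values $0,1,\dots,d_i-1$ together with a Boolean ``already reached $\reach{i}$''; since $d_i \le c$-style bounds are part of the input encoded in binary, there are at most exponentially many such states. So I would build an expanded arena $\arena'$ of exponential size whose vertices are pairs (vertex of $\arena$, vector of follower budget-states), on which player~$0$'s synthesis problem becomes a game with \emph{qualitative} (Boolean) objectives for the followers. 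On $\arena'$ the well-known NE characterization applies verbatim: a play $\pi$ consistent with $\strategyfor{0}$ is (the outcome of) a $\strategyfor{0}$-fixed NE iff no follower~$i$ who does not meet its objective along $\pi$ can, after some deviation point, be forced by the coalition $\Players\setminus\{i\}$ into a region from which $i$ wins its Boolean bounded-reachability objective — i.e.\ iff at every vertex of $\pi$ the value of the auxiliary two-player zero-sum ``coalition-against-$i$'' game is already ``lost for $i$''. I would therefore solve, for each follower~$i$, the exponential-size zero-sum safety/reachability game $(\arena',\text{win}_i)$ in exponential time, precompute the winning regions $W_i$, and then ask whether player~$0$ has a strategy generating a play $\pi$ with $\cost{0}{\pi}\le c$ all of whose deviations land outside the respective $W_i$'s. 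That last question is itself a zero-sum game on $\arena'$ (player~$0$ versus a malicious environment that may pick any deviating follower move), solvable in time polynomial in $|\arena'|$, hence \exptime{} overall. The cost bound $c$ for player~$0$ is folded into the state (a counter up to $c$), which keeps the product exponential; note this is exactly why the \emph{unbounded}-objective version resists this approach — there the followers' states are genuinely unbounded and vary with $\strategyfor{0}$, as the paper explains.

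For \exptimeHard{}ness with a one-player environment, I would reuse the countdown-game reduction announced in the introduction for the general \prob{NCNS} lower bound, and check that it can be carried out so that the single follower's objective is a \emph{bounded} reachability objective rather than a cost-minimization one. Concretely, countdown games are \exptimeComplete{}, and the natural encoding gives an arena where player~$0$ plays the role of one countdown player and the environment follower the other; the follower ``wins'' (is incentivized to deviate) precisely when it can reach a designated target configuration within a budget that is the countdown value, which is exactly a bounded reachability condition. One only needs to verify that in this encoding the follower has \emph{no} profitable deviation exactly when player~$0$ wins the countdown game, which makes the \prob{NCNS} instance (with the bounded objective) equivalent to the countdown game; the target threshold $c$ for player~$0$ is set trivially (e.g.\ $c=0$ with player~$0$'s target immediately reachable, or whatever the countdown encoding demands) so that the only real constraint is the absence of follower deviations.

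The main obstacle I expect is the upper-bound direction, specifically getting the NE characterization to interact correctly with the budget-augmented arena. Two subtleties require care: first, after a follower deviates, its accumulated cost (the budget state) is whatever it was at the deviation vertex, so the auxiliary coalition game ``against follower~$i$'' must be played from the \emph{correct} augmented vertex, and one must argue — as in the proof of \cref{lem:pareto-optimal-check-p1-p2} and via cycle-elimination as in \cref{lem:witness_size_pareto} — that it suffices to consider deviations at polynomially (in $|\arena'|$) many augmented vertices, so that the ``all deviations safe'' game is genuinely a finite game on $\arena'$ and not a higher-order quantification. Second, one must make sure that ``$\cost{0}{\pi}\le c$'' and ``$\pi$ is an NE outcome'' are enforced \emph{simultaneously} by a single strategy $\strategyfor{0}$: here I would let player~$0$ play a winning strategy in the combined zero-sum game on $\arena'$ whose winning condition is the conjunction of (reach player~$0$'s target within budget $c$) and (every environment deviation stays in $\bigcap_i (\text{complement of }W_i)$), and then lift it back to a strategy on the original arena, checking that the lifted play is indeed an NE outcome by the characterization. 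Establishing that this combined game is still solvable in time polynomial in $|\arena'|$ — it is a safety-plus-bounded-reachability objective, for which memoryless determinacy and polynomial-time solvability are standard — is the routine part; threading the deviation bookkeeping through is the part that needs to be written carefully.
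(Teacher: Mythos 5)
Your upper-bound argument has a genuine gap: the condition you reduce to is not the \prob{NCNS} condition. You ask whether player~$0$ can enforce, against every environment behaviour, the \emph{conjunction} ``$\cost{0}{\pi}\leq c$ and every follower deviation lands outside $W_i$''. What \prob{NCNS} requires is the branch-wise \emph{disjunction}: every play $\pi$ consistent with $\strategyfor{0}$ must satisfy $\cost{0}{\pi}\leq c$ \emph{or} fail to be a \fixedEquilibriaStrategy{} outcome, i.e., some follower whose bounded objective is unsatisfied along $\pi$ could have deviated into a position from which it can enforce $\reachBounded{d_i}{\reach{i}}$ against $\strategyfor{0}$ and the others. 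Plays consistent with $\strategyfor{0}$ that are not NE outcomes may violate player~$0$'s bound with impunity, and solutions typically rely on this. Concretely, your test rejects the yes-instances of the paper's own hardness gadget (\cref{fig:threshold-NCNS-2players}): the solution $\strategyfor{0}$ (simulate a winning countdown strategy and exit at accumulated cost exactly $c$) admits the consistent play through $v_1$ with cost $2c+1>c$ for player~$0$, which is harmless only because the unsatisfied follower would switch to $v_2$; moreover no strategy at all can guarantee $\cost{0}{\pi}\leq c$ against every environment move in that arena, so your combined zero-sum game answers ``no''. Your closing remark about enforcing that ``$\pi$ is an NE outcome'' confirms the conflation: that is the shape of \prob{CNS} or of the one-player-environment witness, and even that witness uses a disjunctive punishment objective ($\cost{0}{h\pi'}\leq c$ or $\cost{1}{h\pi'}>d$), not your conjunction, and the paper stresses it only works for one-player environments. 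The paper's actual upper bound adapts the B\"uchi tree-automaton construction of~\cite{ConduracheFGR16}, run on the strategy tree of $\strategyfor{0}$ over a weight-extended state space (accumulated weights truncated at the bounds $d_i$, plus flags for already-satisfied players): along each branch it accepts iff either player~$0$'s bound $\reachBounded{c+1}{\reach{0}}$ holds or some follower has a profitable deviation (it was seen inside its guessed winning region, the guesses being checked for consistency by the B\"uchi condition, yet ends up losing). The automaton has exponential size and B\"uchi emptiness is polynomial in that size, giving \exptime{}.

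For the hardness part your instinct (countdown games, one follower with a bounded reachability objective) is indeed the paper's route, but the proposed simplification of making player~$0$'s threshold trivially satisfiable (e.g.\ $c=0$ with $\reach{0}$ immediately reachable, ``so that the only real constraint is the absence of follower deviations'') cannot work: if $\cost{0}{\pi}\leq c$ holds on every play, then every $\strategyfor{0}$ is vacuously a solution and the reduction carries no information. The quantitative coupling is essential: in \cref{fig:threshold-NCNS-2players} the follower's bound $\reachBounded{c+1}{\{v_1,v_2\}}$ and player~$0$'s threshold $c$ are both tied to the countdown target, player~$0$ must exit at accumulated cost exactly $c$, and the exit edge to $v_1$ charges $(c+1,1)$, so that when player~$0$ wins the countdown game every \fixedEquilibriaStrategy{} outcome costs him exactly $c$, while if he loses there is an NE outcome of cost $>c$ (exiting below or above $c$, or never exiting). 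So the lower bound also hinges on precisely the interplay between the follower's satisficing bound and player~$0$'s threshold that your sketch sets aside.
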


\subparagraph*{Reduction for Two-Player Environments.}

We finally prove that the \prob{NCNS} problem is \exptimeHard{}, already for a two-player environment (lower bound of \cref{theorem:Nash-results}.\ref{thm-NCNS}). The reduction is given from the \emph{countdown game problem}, known to be \exptimeComplete{}~\cite{countdown-game}. Given a threshold $c \in \N$, a countdown game $\mathcal{CG}$ is a two-player zero-sum game played on a directed graph $(V,E)$ where $E \subseteq V \times \Nzero \times V$. A configuration is a pair $(s,k) \in V \times \N$, initially $(s_0,0)$ with $s_0$ an initial vertex, from where player~$0$ chooses $d \in \Nzero$ such that there exists $(s,d,s') \in E$ (we assume that such a $d$ always exists). Player~$1$ then chooses such an $s' \in V$ to reach the configuration $(s',k+d)$. When reaching a configuration $(s,k)$ with $k\geq c$, the game stops and player~$0$ wins if and only if $k = c$.\footnote{Classically, the initial configuration is $(s_0,c)$ and the accumulated weight $k$ decreases until being $\leq 0$.} Player~$0$ wins the game $\mathcal{CG}$ if he has a strategy $\strategyfor{0}$ from $s_0$ that allows him to reach some configuration $(s,c)$, whatever the strategy of player~$1$.

\begin{theorem}\label{theorem:ncns-3players-exptime-hard-countdown-game}
    The \probname{}{NC}{N}{S} problem with a two-player environment is \exptimeHard{}.
\end{theorem}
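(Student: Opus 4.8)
The plan is to reduce from the countdown game problem, which is \exptimeComplete{}~\cite{countdown-game}. Given a countdown game $\mathcal{CG}$ on $(V,E)$ with $E \subseteq V \times \Nzero \times V$, initial vertex $s_0$, and target value $c$, we build a reachability game $\game$ with a two-player environment, together with a threshold $c'$, such that player~$0$ wins $\mathcal{CG}$ if and only if the \prob{NCNS} instance is positive. The key difficulty we must overcome is that in the \prob{NCNS} problem, the environment plays \emph{rationally} rather than \emph{adversarially}: a \fixedEquilibriaStrategy{} outcome is not an arbitrary play consistent with $\strategyfor{0}$, but one in which no environment player can profitably deviate. So the reduction must arrange that whenever the environment follows a \fixedEquilibriaStrategy{}, its behavior on the ``countdown part'' of the arena mimics an adversarial choice of successor in $\mathcal{CG}$, while any attempt to deviate from this adversarial behavior is made non-profitable by the equilibrium constraint.

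The construction I would use is the following. The arena has a ``simulation gadget'' consisting of the vertices of $V$, where player~$0$ owns a copy of each $s \in V$ and, at such a vertex, selects an outgoing weighted edge $(s,d,s') \in E$ by moving to an intermediate vertex $e_{(s,d,s')}$ carrying weight $d$ on player~$0$'s weight function $w_0$ (and weight $0$ for the two environment players on that step, or a carefully chosen value). One of the two environment players, say player~$1$, owns the intermediate vertices and chooses the actual successor $s'$, thereby playing the role of player~$1$ in $\mathcal{CG}$; the other environment player, player~$2$, is used as an ``incentive device'' to force player~$1$ to behave adversarially. From every simulation vertex there is also an escape edge leading to a small terminal gadget. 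The threshold $c' = c$ is chosen so that player~$0$ reaches his target $\reach{0}$ with cost $\le c$ exactly when the accumulated weight $d_1 + d_2 + \cdots$ along the simulated countdown run hits $c$ precisely; reaching a configuration with accumulated weight $> c$, or never reaching $c$, must force $\cost{0}{\pi} > c$ or $= +\infty$ for the outcome.

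The heart of the argument is the equilibrium engineering that makes player~$1$ adversarial. Following the standard trick in such reductions, I would give player~$1$ and player~$2$ auxiliary target sets and weights so that: (i) in any \fixedEquilibriaStrategy{}, player~$1$ is indifferent among the successors $s'$ available at an intermediate vertex (so all adversarial choices are equilibria), which is what lets the environment realize \emph{every} adversarial run of $\mathcal{CG}$ as some \fixedEquilibriaStrategy{} outcome; and (ii) player~$1$ cannot profitably take the escape edge or otherwise sabotage the simulation, because player~$0$'s announced strategy $\strategyfor{0}$ punishes such a deviation by driving the deviating play into a region where player~$1$'s cost is at least as large as on the equilibrium outcome --- here one exploits that $\strategyfor{0}$ is fixed and that after a deviation player~$0$ still controls his own vertices. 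Conversely, if player~$0$ does \emph{not} win $\mathcal{CG}$, then for every $\strategyfor{0}$ there is an adversarial line that either overshoots $c$ or never hits it, and this line can be realized as a \fixedEquilibriaStrategy{} outcome with $\cost{0} > c$, so the \prob{NCNS} instance is negative.

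I would then carry out the proof in this order: first fix the arena, weights, and target sets precisely and state the intended correspondence between countdown configurations $(s,k)$ and histories of $\game$; second, prove the ``completeness'' direction --- from a winning strategy of player~$0$ in $\mathcal{CG}$, build $\strategyfor{0}$ in $\game$ and check that every \fixedEquilibriaStrategy{} outcome has $\cost{0} \le c$, which requires verifying that the deviation-punishing behavior of $\strategyfor{0}$ indeed makes all non-simulating deviations non-profitable for players~$1$ and~$2$; third, prove the ``soundness'' direction --- from any $\strategyfor{0}$ in $\game$ that has no bad \fixedEquilibriaStrategy{} response, extract a winning strategy in $\mathcal{CG}$, using that the adversarial successor choices are all equilibrium-compatible so player~$0$'s strategy must cope with all of them; finally, note that $\game$ and $c'$ are computable in polynomial time from $\mathcal{CG}$ (the weights $d$ are written in binary exactly as in $\mathcal{CG}$), giving the \exptimeHard{}ness. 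I expect the main obstacle to be item (ii) above: designing the auxiliary targets and weights of the two environment players so that the equilibrium condition \emph{simultaneously} (a) permits every adversarial countdown run, (b) forbids every ``cheating'' deviation regardless of how cleverly player~$1$ or~$2$ tries to lower its own cost, and (c) keeps the reachability cost bookkeeping for player~$0$ clean enough that $\cost{0} \le c$ encodes exactly ``hit $c$''. This is the delicate combinatorial core, and getting the gadget right --- in particular ruling out spurious equilibria that would break soundness --- is where the real work lies.
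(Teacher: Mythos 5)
Your choice of source problem (countdown games) and the cast of characters (player~$0$ picking durations, one environment player picking successors, a second environment player as an incentive device, an escape option, binary weights, a polynomial-size arena) match the paper's reduction. But the mechanism you propose for the equilibrium engineering is pointed the wrong way, and it is exactly the part you defer as ``where the real work lies''. In the \prob{NCNS} problem the danger is the existence of \emph{some} \fixedEquilibriaStrategy{} whose outcome is bad for player~$0$; so the construction must guarantee that every play consistent with $\strategyfor{0}$ with $\cost{0}{\pi} > c$ fails to be an NE outcome, i.e., that some environment player has a \emph{profitable} unilateral deviation from it. Your item (ii) --- let $\strategyfor{0}$ punish escapes or sabotage so that they are not profitable for player~$1$ --- works in the opposite direction: removing profitable deviations sustains \emph{more} equilibria and thus makes it harder, not easier, for $\strategyfor{0}$ to be a solution. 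Moreover, the escape must remain fully equilibrium-compatible, because in the soundness direction it is precisely the escape, taken as part of the environment's profile (not as a deviation), that converts a non-winning $\strategyfor{0}$ into a bad NE outcome. Finally, no forcing of ``adversarial'' behavior of player~$1$ is needed: the universal quantification over NEs supplies all adversarial lines, provided every behavior of player~$1$ is equilibrium-compatible.

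The paper's gadget realizes this calibration concretely, and none of it is in your sketch: player~$1$ (the successor-chooser) gets $\reach{1} = V$, so his cost is always $0$ and every strategy of his is a best response; player~$2$ owns only $v_0$ and the sinks $D,E$, with an outside option $v_0 \rightarrow D$ costing exactly $2c$ to both players~$0$ and~$2$; countdown weights are doubled, and the exit edge to $E$ adds $2d$ to player~$0$'s cost but only $1$ to player~$2$'s. Thus an exit at accumulated weight $2k$ gives player~$2$ cost $2k+1$, which is at most $2c$ iff $2k < 2c$, while staying forever in the countdown part gives player~$2$ cost $+\infty$. The NE condition for player~$2$ (comparison with his $2c$ at $v_0$) then filters exactly the right plays: exits strictly before reaching $2c$ and the $D$-play survive as NE outcomes, whereas infinite simulations and exits taken after the accumulated weight has hit $2c$ do not; this is what makes ``$\strategyfor{0}$ wins the countdown game'' equivalent to ``every NE outcome costs at most $2c$ for player~$0$''. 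Without such a filtering device --- and with punishment in its place --- the completeness direction breaks (overshooting exits and infinite simulations could remain NE outcomes), so the gap in your proposal is genuine, even though the high-level reduction target is the right one.
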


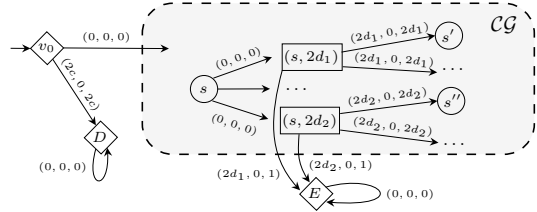
\begin{figure}[t]
    \centering
    \resizebox{0.5\textwidth}{!}{%
    \begin{tikzpicture}[x=0.75pt,y=0.75pt,yscale=-1,xscale=1]
    \draw   (199.96,70.84) -- (210.48,81.04) -- (200.28,91.56) -- (189.76,81.36) -- cycle ;
    \draw    (203.73,88.23) -- (229.1,127.38) ;
    \draw [shift={(230.73,129.9)}, rotate = 237.06] [fill={rgb, 255:red, 0; green, 0; blue, 0 }  ][line width=0.08]  [draw opacity=0] (5.36,-2.57) -- (0,0) -- (5.36,2.57) -- (3.56,0) -- cycle    ;
    \draw  [fill={rgb, 255:red, 155; green, 155; blue, 155 }  ,fill opacity=0.1 ][dash pattern={on 4.5pt off 4.5pt}][line width=0.75]  (259.8,71.91) .. controls (259.8,61.37) and (268.34,52.83) .. (278.87,52.83) -- (486.33,52.83) .. controls (496.86,52.83) and (505.4,61.37) .. (505.4,71.91) -- (505.4,129.13) .. controls (505.4,139.66) and (496.86,148.2) .. (486.33,148.2) -- (278.87,148.2) .. controls (268.34,148.2) and (259.8,139.66) .. (259.8,129.13) -- cycle ;
    \draw    (210.48,81.04) -- (273.48,81.42) ;
    \draw [shift={(276.48,81.44)}, rotate = 180.35] [fill={rgb, 255:red, 0; green, 0; blue, 0 }  ][line width=0.08]  [draw opacity=0] (5.36,-2.57) -- (0,0) -- (5.36,2.57) -- (3.56,0) -- cycle    ;
    \draw   (289.93,106.98) .. controls (289.93,102.27) and (293.76,98.44) .. (298.47,98.44) .. controls (303.18,98.44) and (307.01,102.27) .. (307.01,106.98) .. controls (307.01,111.69) and (303.18,115.52) .. (298.47,115.52) .. controls (293.76,115.52) and (289.93,111.69) .. (289.93,106.98) -- cycle ;
    \draw    (304.34,100.11) -- (338.21,87.89) ;
    \draw [shift={(341.03,86.87)}, rotate = 160.15] [fill={rgb, 255:red, 0; green, 0; blue, 0 }  ][line width=0.08]  [draw opacity=0] (5.36,-2.57) -- (0,0) -- (5.36,2.57) -- (3.56,0) -- cycle    ;
    \draw    (303.68,114.44) -- (335.88,124.97) ;
    \draw [shift={(338.73,125.9)}, rotate = 198.1] [fill={rgb, 255:red, 0; green, 0; blue, 0 }  ][line width=0.08]  [draw opacity=0] (5.36,-2.57) -- (0,0) -- (5.36,2.57) -- (3.56,0) -- cycle    ;
    \draw    (307.01,106.98) -- (336.83,106.88) ;
    \draw [shift={(339.83,106.87)}, rotate = 179.8] [fill={rgb, 255:red, 0; green, 0; blue, 0 }  ][line width=0.08]  [draw opacity=0] (5.36,-2.57) -- (0,0) -- (5.36,2.57) -- (3.56,0) -- cycle    ;
    \draw   (347.43,78.47) -- (385.01,78.47) -- (385.01,95.27) -- (347.43,95.27) -- cycle ;
    \draw   (346.3,118.47) -- (383.34,118.47) -- (383.34,135.27) -- (346.3,135.27) -- cycle ;
    \draw    (384.57,83.67) -- (440.41,73.79) ;
    \draw [shift={(443.37,73.27)}, rotate = 169.97] [fill={rgb, 255:red, 0; green, 0; blue, 0 }  ][line width=0.08]  [draw opacity=0] (5.36,-2.57) -- (0,0) -- (5.36,2.57) -- (3.56,0) -- cycle    ;
    \draw    (385.37,92.9) -- (441.57,96.29) ;
    \draw [shift={(444.57,96.47)}, rotate = 183.45] [fill={rgb, 255:red, 0; green, 0; blue, 0 }  ][line width=0.08]  [draw opacity=0] (5.36,-2.57) -- (0,0) -- (5.36,2.57) -- (3.56,0) -- cycle    ;
    \draw    (383.77,122.47) -- (441.79,114.89) ;
    \draw [shift={(444.77,114.5)}, rotate = 172.56] [fill={rgb, 255:red, 0; green, 0; blue, 0 }  ][line width=0.08]  [draw opacity=0] (5.36,-2.57) -- (0,0) -- (5.36,2.57) -- (3.56,0) -- cycle    ;
    \draw    (383.77,132.87) -- (442.39,140.48) ;
    \draw [shift={(445.37,140.87)}, rotate = 187.4] [fill={rgb, 255:red, 0; green, 0; blue, 0 }  ][line width=0.08]  [draw opacity=0] (5.36,-2.57) -- (0,0) -- (5.36,2.57) -- (3.56,0) -- cycle    ;
    \draw    (347.43,95.27) .. controls (340.06,123.2) and (337.8,148.07) .. (356.84,169) ;
    \draw [shift={(358.67,170.93)}, rotate = 225.45] [fill={rgb, 255:red, 0; green, 0; blue, 0 }  ][line width=0.08]  [draw opacity=0] (5.36,-2.57) -- (0,0) -- (5.36,2.57) -- (3.56,0) -- cycle    ;
    \draw    (358,135.53) .. controls (357.39,145.37) and (356.47,152.82) .. (363.05,163.47) ;
    \draw [shift={(364.67,165.93)}, rotate = 235.22] [fill={rgb, 255:red, 0; green, 0; blue, 0 }  ][line width=0.08]  [draw opacity=0] (5.36,-2.57) -- (0,0) -- (5.36,2.57) -- (3.56,0) -- cycle    ;
    \draw    (230.63,144.87) .. controls (223.65,170.63) and (239.5,172.1) .. (238.06,146.91) ;
    \draw [shift={(237.83,144.07)}, rotate = 84.22] [fill={rgb, 255:red, 0; green, 0; blue, 0 }  ][line width=0.08]  [draw opacity=0] (5.36,-2.57) -- (0,0) -- (5.36,2.57) -- (3.56,0) -- cycle    ;
    \draw    (374.67,168.93) .. controls (410.13,159.08) and (428.36,182.52) .. (377.38,178.47) ;
    \draw [shift={(375,178.27)}, rotate = 5.28] [fill={rgb, 255:red, 0; green, 0; blue, 0 }  ][line width=0.08]  [draw opacity=0] (5.36,-2.57) -- (0,0) -- (5.36,2.57) -- (3.56,0) -- cycle    ;
    \draw   (443.37,73.27) .. controls (443.37,68.55) and (447.19,64.73) .. (451.9,64.73) .. controls (456.62,64.73) and (460.44,68.55) .. (460.44,73.27) .. controls (460.44,77.98) and (456.62,81.8) .. (451.9,81.8) .. controls (447.19,81.8) and (443.37,77.98) .. (443.37,73.27) -- cycle ;
    \draw   (444.77,114.5) .. controls (444.77,109.79) and (448.59,105.96) .. (453.3,105.96) .. controls (458.02,105.96) and (461.84,109.79) .. (461.84,114.5) .. controls (461.84,119.21) and (458.02,123.04) .. (453.3,123.04) .. controls (448.59,123.04) and (444.77,119.21) .. (444.77,114.5) -- cycle ;
    \draw    (177.28,81.27) -- (186.76,81.34) ;
    \draw [shift={(189.76,81.36)}, rotate = 180.41] [fill={rgb, 255:red, 0; green, 0; blue, 0 }  ][line width=0.08]  [draw opacity=0] (5.36,-2.57) -- (0,0) -- (5.36,2.57) -- (3.56,0) -- cycle    ;
    \draw   (233.66,127.24) -- (244.18,137.44) -- (233.98,147.96) -- (223.46,137.76) -- cycle ;
    \draw   (368.62,162.84) -- (379.14,173.04) -- (368.94,183.56) -- (358.42,173.36) -- cycle ;
    
    \draw (193.33,77.2) node [anchor=north west][inner sep=0.75pt]  [font=\scriptsize] [align=left] {$v_{0}$};
    \draw (227.33,133.07) node [anchor=north west][inner sep=0.75pt]  [font=\scriptsize] [align=left] {$D$};
    \draw (477.02,59.8) node [anchor=north west][inner sep=0.75pt]  [font=\normalsize] [align=left] {$\mathcal{CG}$};
    \draw (294.33,104.2) node [anchor=north west][inner sep=0.75pt]  [font=\scriptsize] [align=left] {$s$};
    \draw (348.3,105.4) node [anchor=north west][inner sep=0.75pt]  [font=\footnotesize] [align=left] {$\dotsc $};
    \draw (347.23,81.07) node [anchor=north west][inner sep=0.75pt]  [font=\scriptsize] [align=left] {$(s,2d_{1})$};
    \draw (346.1,121.13) node [anchor=north west][inner sep=0.75pt]  [font=\scriptsize] [align=left] {$(s,2d_{2})$};
    \draw (446.23,92.33) node [anchor=north west][inner sep=0.75pt]  [font=\footnotesize] [align=left] {$\dotsc $};
    \draw (445.93,68.13) node [anchor=north west][inner sep=0.75pt]  [font=\scriptsize] [align=left] {$s'$};
    \draw (447.1,139.67) node [anchor=north west][inner sep=0.75pt]  [font=\footnotesize] [align=left] {$\dotsc $};
    \draw (447.27,109.07) node [anchor=north west][inner sep=0.75pt]  [font=\scriptsize] [align=left] {$s''$};
    \draw (362.2,168.07) node [anchor=north west][inner sep=0.75pt]  [font=\scriptsize] [align=left] {$E$};
    \draw (387.96,71.59) node [anchor=north west][inner sep=0.75pt]  [font=\tiny,rotate=-350.21] [align=left] {$(2d_{1} ,0,2d_{1})$};
    \draw (386.5,111.07) node [anchor=north west][inner sep=0.75pt]  [font=\tiny,rotate=-352.4] [align=left] {$(2d_{2} ,0,2d_{2})$};
    \draw (303.07,88.9) node [anchor=north west][inner sep=0.75pt]  [font=\tiny,rotate=-340.4] [align=left] {$(0,0,0)$};
    \draw (220.9,69.8) node [anchor=north west][inner sep=0.75pt]  [font=\tiny] [align=left] {$(0,0,0)$};
    \draw (305.1,156) node [anchor=north west][inner sep=0.75pt]  [font=\tiny] [align=left] {$(2d_{1} ,0,1)$};
    \draw (362.63,150.8) node [anchor=north west][inner sep=0.75pt]  [font=\tiny] [align=left] {$(2d_{2} ,0,1)$};
    \draw (411.17,168.6) node [anchor=north west][inner sep=0.75pt]  [font=\tiny] [align=left] {$(0,0,0)$};
    \draw (193.57,152.07) node [anchor=north west][inner sep=0.75pt]  [font=\tiny] [align=left] {$(0,0,0)$};
    \draw (213.72,84.4) node [anchor=north west][inner sep=0.75pt]  [font=\tiny,rotate=-57.41] [align=left] {$(2c,0,2c)$};
    \draw (303.28,119.93) node [anchor=north west][inner sep=0.75pt]  [font=\tiny,rotate=-18.6] [align=left] {$(0,0,0)$};
    \draw (391.5,82.43) node [anchor=north west][inner sep=0.75pt]  [font=\tiny,rotate=-3.59] [align=left] {$(2d_{1} ,0,2d_{1})$};
    \draw (391.12,122.49) node [anchor=north west][inner sep=0.75pt]  [font=\tiny,rotate=-7.86] [align=left] {$(2d_{2} ,0,2d_{2})$};
    \end{tikzpicture}
    }%
    \caption{Reduction from the countdown game problem to the \prob{NCNS} problem (two-player env.).}
    \label{fig:NCNS-3players-countdown-game}
\end{figure}

\begin{proof}
Given a countdown game $\mathcal{CG}$ and a threshold $c$, we build a reachability game $\game{}$ as depicted in \cref{fig:NCNS-3players-countdown-game} with three players, player~$0$ (owning the circle vertices of $\mathcal{CG}$), player~$1$ (owning the square vertices of $\mathcal{CG}$), and player~$2$ (owning the initial vertex $v_0$ and vertices $D, E$). The three weight functions are indicated on the edges, with a null weight on all edges for player~$1$. The initial vertex $v_0$ has two outgoing edges, one towards vertex~$D$ and the other one to the initial vertex $s_0$ of $\mathcal{CG}$. Inside $\mathcal{CG}$, players~$0$ and $1$ are simulating the countdown game. The target sets are $\reach{0} = \reach{2} = \{D,E\}$ and $\reach{1} = V$. Thus, for any play, player~$1$ gets a cost of $0$ and will never have the incentive to deviate from his strategy.
The $\mathcal{CG}$ part of the figure contains a slight modification of the given countdown: players~$0$ and~$1$ act as in $\mathcal{CG}$, player~$1$ can exit it by taking the edge to vertex~$E$, the weights $d$ are multiplied by~$2$. More precisely, player~$0$ first selects a transition from a vertex $s$ to some vertex $(s,2d)$, with $d \in \Nzero$, then player~$1$ responds with a successor $s'$ such that $(s,d,s')$ is an edge in the initial countdown game. At any point $(s,2d)$, player~$1$ can exit the $\mathcal{CG}$ by going to $E$, adding $2d$ to the cost of player~$0$ and $1$ to the cost of player~$2$, i.e., it gives the cost tuple $(2k+2d,0,2k+1)$ where $2k$ is the accumulated weight inside $\mathcal{CG}$ before exiting it.

Let us show that a strategy $\strategyfor{0} \in \Sigma_0$ is a solution to the \prob{NCNS} problem with the threshold $2c$ if and only if it is winning in the given countdown game and threshold $c$. We first suppose that $\strategyfor{0}$ is a winning strategy for player~$0$ in the countdown game. We consider this strategy in $\game$ and enumerate all possible plays consistent with $\strategyfor{0}$:
\begin{itemize}
\item The play $v_0(D)^\omega$ gives the cost $2c$ to player~$0$, thus satisfying the threshold $2c$, 
\item No play staying infinitely often in $\mathcal{CG}$ is the outcome of a \fixedEquilibriaStrategy{}, as it gives an infinite cost to player~$2$ while player~$2$ could deviate in $v_0$ to get a cost of $2c < +\infty$, 
\item Any play $\pi$ ultimately reaching $E$ has $\cost{0}{\pi} = 2k+2d$ and $\cost{2}{\pi} = 2k+1$, for some $k \in \N$. If $2k+2d \leq 2c$, then $\cost{0}{\pi}$ satisfies the threshold constraint. Otherwise, $2k+2d > 2c$, but as $\strategyfor{0}$ is winning in the initial countdown game, this means that there was a previous configuration where the costs of both players~$0$ and~$2$ were exactly $2c$. This means that $\cost{2}{\pi} = 2k + 1 \geq 2c+1$, i.e., $\pi$ is again not a \fixedEquilibriaStrategy{} outcome.
\end{itemize}

Assume now that $\strategyfor{0}$ is not winning in the countdown game. Hence, there exists a losing play consistent with $\strategyfor{0}$ in this game, that leads to a play $\pi$ in the grey part of \Cref{fig:NCNS-3players-countdown-game} such that in none of its vertices, the accumulated weight is exactly~$2c$, i.e., there are two consecutive steps where the accumulated weight is $2k < 2c$ and then $2k+2d > 2c$. So, player~$1$ can exit between these two steps to reach $E$. The resulting play $\pi'$ has $\cost{0}{\pi'} = 2k+2d > 2c$ and $\cost{2}{\pi'} = 2k+1 < 2c + 1$, thus $\cost{2}{\pi'} < 2c$. Consequently, $\pi'$ is a \fixedEquilibriaStrategy{} outcome but $\cost{0}{\pi} > 2c$. It follows that $\strategyfor{0}$ is not a solution to the \prob{NCNS} problem. 
\end{proof}

\section{Conclusion} \label{sec:conclusion}

In this paper, we have determined the exact complexity class for several rational verification and synthesis problems in quantitative reachability games, considering both NE and PO rational behaviors of the environment. However, for the \prob{NCNS} problem, while we have solved the important one-player environment case, we have left open the multi-player environment case. We believe this latter case poses a significant challenge that may require new advances in automata techniques to be solved.

There are several interesting future works to investigate. (1) We intend to study the FPT complexity of the studied problems. Notice that some of our lower bounds results already hold for one-player environments (see the CNS and UNCNV problems in Section 5). (2)~Instead of one reachability objective, player 0 could have several ones and a threshold on these objectives that he wants to see satisfied. (3) The concept of NE could be replaced by SPE or by strong NE (that allows collaborations between the players during deviations). Still, it is important to note that strategies $\sigma_0$ that are solutions to the non-cooperative synthesis problems under NE rationality are also solutions under SPE (resp. strong NE) rationality,
as SPEs (resp. strong NEs) constitute a subset of NEs. (4) The result of \nexptimeComplete{}ness of the \prob{NCPS} problem in~\cite{gaspard-pareto-quanti-reach} could be generalized for any weight function, and not a common one used for each player. Or, we could find the more accurate complexity class where it lies if it is not in \nexptime{}.

\bibliography{bibliography}

\appendix

\section{Example of a Nondeterministic Mealy Machine and Product Game} \label{appendix:exMealyMachine}

We first provide an example of a nondeterministic Mealy machine and the way it encodes strategies.

\begin{example}\label{ex:mealy-product}
Consider the arena in \cref{fig:ex-arena} and the nondeterministic Mealy machine $\machine{0}$ of player~$0$ illustrated in \cref{fig:ex-arena-mealy}, formally defined as $\machine{0}=(M,m_0,\delta,\tau)$ such that
\begin{itemize}
    \item $M=\{m_0,m_1\}$,
    \item $\delta(m_0,v_3) = \{m_0,m_1\}$ and $\delta(m,v)=\{m\}$ for every $(m,v)\neq(m_0,v_3)$,
    \item $\tau(m_0,v)=\begin{cases}
        \{v_1, v_3\} & \text{if } v=v_1\\
        \{v_3\} & \text{if } v=v_2
    \end{cases}$, and $\tau(m_1,v)= \{v_2\}$ if $v=v_1$ or $v=v_2$.
\end{itemize}
The idea is to start and stay in the memory state $m_0$ and then, once $v_3$ has been visited, to nondeterministically switch to the memory state $m_1$, or continue staying in the memory state $m_0$. The memory state defines which edge player~$0$ is able to choose from $v_1$: either a nondeterministic choice between $v_1$ and $v_3$ in $m_0$, or $v_2$ in $m_1$.

\begin{figure}[htbp]
  \centering
  \begin{minipage}[b]{0.48\textwidth}
    \centering
    \begin{tikzpicture}[automaton,scale=0.8,every node/.style={scale=0.8}]
      \node[environment2] (v0)              {$v_0$};
      \node[system]       (v1)[right of=v0] {$v_1$};
      \node[system]       (v2)[right of=v1] {$v_2$};
      \node[environment]  (v3)[below of=v1] {$v_3$};
    
      \path (v0) edge             (v1)
                 edge[bend right] (v3)
            (v1) edge[bend right] (v3)
                 edge[loop above] (v1)
                 edge             (v2)
            (v2) edge             (v3)
                 edge[loop below] (v2)
            (v3) edge[bend right=20] (v1);
    \end{tikzpicture}
    \caption{An arena with player~$0$, player~$\playersquare$, and player~$\playerdiamond$, with no weight displayed.}
    \label{fig:ex-arena}
  \end{minipage}
  \hfill
  \begin{minipage}[b]{0.48\textwidth}
    \centering
    \begin{tikzpicture}[automaton,scale=0.8,every node/.style={scale=0.8},node distance=2.5cm]
      \node[system,initial,initial where=above] (m0)              {$m_0$};
      \node[system]                             (m1)[right of=m0] {$m_1$};
    
      \path (m0) edge[loop left] node[align=center] {$v_0 \mid *$\\$v_3 \mid *$\\$v_1 \mid v_1$\\$v_1 \mid v_3$\\$v_2 \mid v_3$} (m0)
                 edge             node[align=center] {$v_3 \mid *$}   (m1)
            (m1) edge[loop right] node[align=center] {$v_0 \mid *$\\$v_3 \mid *$\\$v_1 \mid v_2$\\$v_2 \mid v_2$}   (m1);
    \end{tikzpicture}
    \caption{A nondeterministic Mealy machine of player~$0$. The notation $v \mid v'$ on the transitions $(m,m')$ indicates that $m' \in \delta(m,v)$, and if $v \in V_0$, that $v' \in \tau(m,v)$, otherwise $v' = *$.}
    \label{fig:ex-arena-mealy}
  \end{minipage}
\end{figure}
\end{example}

We now formally define the notion of product arena. Let $\arena{}=(V,E,\Players,(V_i)_{i\in\Players},(w_i)_{i\in\Players})$ be a weighted arena and $\machine{j}=(M,m_0,\delta,\tau)$ be a (nondeterministic) Mealy machine for player~$j\in\Players$. Then, the \emph{product arena} $\arena{}\times \machine{j}$ is the weighted arena $\arena{}\times \machine{j}=(V',E',\Players,(V'_i)_{i\in\Players},(w'_i)_{i\in\Players})$ where
\begin{itemize}
    \item $V'= (V \times M) \cup (V \times V \times M)$,
    \item $V'_i = V_i\times M$ for all $i \in \Players \ssetminus \{j\}$, and $V'_j = (V_j \times M) \cup (V \times V \times M)$,
    \item $E'$ is the set of edges defined as
    \begin{itemize}
        \item $(v,m) \rightarrow (v,v',m)$ if $(v,v') \in E$, and when $v \in V_j$, it must hold that $v' \in \tau(m,v)$,
        \item $(v,v',m) \rightarrow (v',m')$ if $m' \in \delta(m,v)$,
    \end{itemize}
    \item For the edges $e' \in E'$ of the form $(v,m) \rightarrow (v,v',m)$, $w'_i(e') = w_i((v,v'))$, while for the edges $e'$ of the form $(v,v',m) \rightarrow (v',m')$, $w'_i(e') = 0$, for all players~$i \in \Players$.
\end{itemize}
Intuitively, in vertices $(v,v',m)$, it is player~$j$ who decides how to update the memory state $m$ according to $\delta$.

When $\arena$ is initialized with $v_0$ as initial vertex, then the product arena is also initialized with $(v_0,m_0)$ as initial vertex. Given a reachability game $\game = (\arena, (\reach{i})_{i \in \Players})$, we also define the \emph{product game} $\game \times \machine{j}$ as the reachability game $(\arena \times \machine{j}, (\reach{i}')_{i \in \Players})$ such that $\reach{i}' = \reach{i} \times M$ for all $i \in \Players$.

Back to \cref{ex:mealy-product}, the product arena $\arena{}'=\arena{}\times\machine{0}$ is depicted in \cref{fig:ex-arena-product}. 
We can see that player~$0$ has several strategies $\strategyfor{0}\in\strategiesmachine{0}$ whose behavior changes according to the memory state $m_0$ or $m_1$.

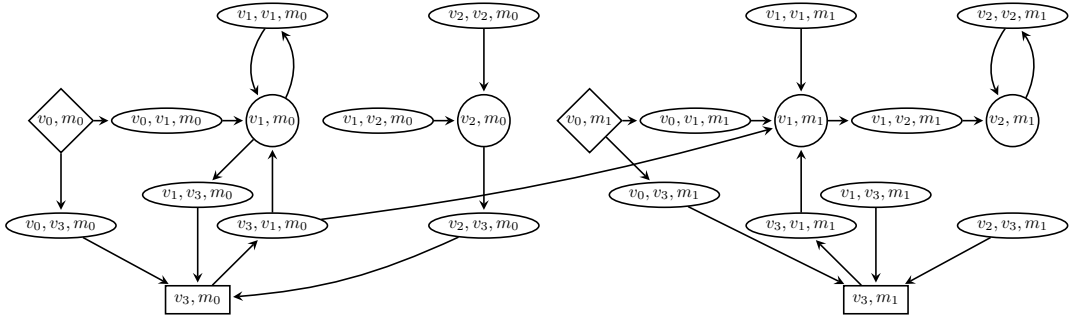
\begin{figure}
\centering
\begin{tikzpicture}[automaton,scale=0.65,every node/.style={scale=0.65},node distance=2.14cm]
  \node[environment2,inner sep=1pt] (v0m0)                {$v_0,m_0$};

  \node[system2] (v0v1m0)[right of=v0m0]  {$v_0,v_1,m_0$};

  \node[system,inner sep=1pt]       (v1m0)[right of=v0v1m0] {$v_1,m_0$};

  \node[system2] (v0v3m0)[below of=v0m0]  {$v_0,v_3,m_0$};
  \node[system2] (v1v1m0)[above of=v1m0]  {$v_1,v_1,m_0$};
  \node[system2] (v1v2m0)[right of=v1m0]  {$v_1,v_2,m_0$};
  \node[system2] (v1v3m0)[below left of=v1m0]  {$v_1,v_3,m_0$};

  \node[system,inner sep=1pt]       (v2m0)[right of=v1v2m0] {$v_2,m_0$};
  \node[environment,inner sep=5pt]  (v3m0)[below of=v1v3m0] {$v_3,m_0$};

  \node[system2] (v2v2m0)[above of=v2m0]  {$v_2,v_2,m_0$};
  \node[system2] (v2v3m0)[below of=v2m0]  {$v_2,v_3,m_0$};
  \node[system2] (v3v1m0)[above right of=v3m0]  {$v_3,v_1,m_0$};

  \node[environment2,inner sep=1pt] (v0m1)[right of=v2m0] {$v_0,m_1$};

  \node[system2] (v0v1m1)[right of=v0m1]  {$v_0,v_1,m_1$};

  \node[system,inner sep=1pt]       (v1m1)[right of=v0v1m1] {$v_1,m_1$};

  \node[system2] (v0v3m1)[below right of=v0m1]  {$v_0,v_3,m_1$};
  \node[system2] (v1v1m1)[above of=v1m1]  {$v_1,v_1,m_1$};

  \node[system2] (v1v2m1)[right of=v1m1]  {$v_1,v_2,m_1$};
  \node[system2] (v1v3m1)[below right of=v1m1]  {$v_1,v_3,m_1$};

  \node[system,inner sep=1pt]       (v2m1)[right of=v1v2m1] {$v_2,m_1$};
  \node[environment,inner sep=5pt]  (v3m1)[below of=v1v3m1] {$v_3,m_1$};
  \node[system2] (v2v2m1)[above of=v2m1]  {$v_2,v_2,m_1$};
  \node[system2] (v2v3m1)[below of=v2m1]  {$v_2,v_3,m_1$};
  \node[system2] (v3v1m1)[above left of=v3m1]  {$v_3,v_1,m_1$};

  \path (v0m0) edge             (v0v1m0)
               edge             (v0v3m0)
        (v1m0) edge[bend right] (v1v1m0)
               edge             (v1v3m0)
        (v2m0) edge             (v2v3m0)
        (v3m0) edge             (v3v1m0)

        (v0m1) edge             (v0v1m1)
               edge             (v0v3m1)
        (v1m1) edge             (v1v2m1)
        (v2m1) edge[bend right] (v2v2m1)
        (v3m1) edge             (v3v1m1)

        (v0v1m0) edge (v1m0)
        (v0v3m0) edge (v3m0)
        (v1v1m0) edge[bend right] (v1m0)
        (v1v2m0) edge (v2m0)
        (v1v3m0) edge (v3m0)
        (v2v2m0) edge (v2m0)
        (v2v3m0) edge[bend left=10] (v3m0)
        (v3v1m0) edge (v1m0)
                 edge[bend right=4] (v1m1)

        (v0v1m1) edge (v1m1)
        (v0v3m1) edge (v3m1)
        (v1v1m1) edge (v1m1)
        (v1v2m1) edge (v2m1)
        (v1v3m1) edge (v3m1)
        (v2v2m1) edge[bend right] (v2m1)
        (v2v3m1) edge (v3m1)
        (v3v1m1) edge (v1m1);

\end{tikzpicture}
\caption{The product arena of the arena in \cref{fig:ex-arena} and the Mealy machine in \cref{fig:ex-arena-mealy}.}
\label{fig:ex-arena-product}
\end{figure}

\section{Undecidability} \label{appendix:Undecidable}

This section aims at proving the undecidability results described in \cref{theorem:undecidableNE} for the \prob{NCNS} and \prob{NCPS} problems.

\undecidabletheorem*

For that purpose, we use a reduction from zero-sum multidimensional
shortest path games, that we recall here. We consider an arena $\arena = (V,E,\{\EE{},\AA{}\},V_\E,V_\A,(w_i)_{i \in \{1,\dots,t\}})$ with two players, \EE{} and \AA{} and $k$ weight functions such that for all $i \in \{1,\dots,t\}$, $w_i: E \rightarrow \Z$. We then consider a zero-sum $\game = (\arena,\Omega)$ whose objective $\Omega$ (for \EE{}) is to reach a target set\footnote{This target set is unique, contrarily to the particular games introduced in~\cref{subsec:particular}.} $\reach{} \subseteq V$. Such a game is called a \emph{zero-sum multidimensional
shortest path game}~\cite{Randour-games-on-graphs}. For a play $\pi = \pi_0\pi_1 \ldots$, we define $\cost{}{\pi} = (+\infty)^t$ if $\pi$ does not visit $\reach{}$, and $\cost{}{\pi} = (\cost{1}{\pi},\dots,\cost{t}{\pi})$, where $\cost{i}{\pi} = \sum_{k=1}^{\ell} w_i(\pi_{k-1},\pi_{k})$ and $\ell = \inf\{n \in \N \mid \pi_n \in \reach{}\}$ otherwise. The \emph{Shortest Path problem} (\prob{SP} problem) (resp. \emph{Equality Shortest Path problem} (\prob{ESP} problem)) asks, given such a multidimensional shortest path game, an initial vertex $v_0$, and a threshold $\bar{c} \in \Z^k$, whether there exists a strategy $\strategyfor{\E}$ of \EE{} such that for every strategy $\strategyfor{\A}$ of \AA{}, $\cost{}{\outcomefrom{\strategyfor{\E},\strategyfor{\A}}{v_0}} \leq \bar{c}$ (resp. $\cost{}{\outcomefrom{\strategyfor{\E},\strategyfor{\A}}{v_0}} = \bar{c}$).\footnote{Recall that we consider here the componentwise partial order over ($\Z \cup \{+\infty\})^t$.} The \prob{SP} problem is \pspaceComplete{}~\cite{DBLP:conf/rp/BriyaheGoeminne23} with weight functions $E \rightarrow \N$ and thresholds $\bar{c} \in \N^k$. It is in general undecidable~\cite{Randour-games-on-graphs,DBLP:journals/fmsd/RandourRS17}, even with $\bar{c} = (0,\dots,0)$, 
by reduction from the halting problem of two-counter machines, known to be undecidable~\cite{minsky1961recursive}. We additionally show that the \prob{ESP} problem is undecidable by a reduction from the \prob{SP} problem.

\begin{figure}
    \centering
    \begin{tikzpicture}[x=0.75pt,y=0.75pt,yscale=-1,xscale=1]
    \draw   (225.33,81.16) .. controls (225.33,78.73) and (227.3,76.77) .. (229.73,76.77) -- (237.3,76.77) .. controls (239.73,76.77) and (241.69,78.73) .. (241.69,81.16) -- (241.69,88.25) .. controls (241.69,90.68) and (239.73,92.65) .. (237.3,92.65) -- (229.73,92.65) .. controls (227.3,92.65) and (225.33,90.68) .. (225.33,88.25) -- cycle ;
    \draw  [fill={rgb, 255:red, 155; green, 155; blue, 155 }  ,fill opacity=0.1 ][dash pattern={on 4.5pt off 4.5pt}][line width=0.75]  (164.97,80.11) .. controls (164.97,72.11) and (171.45,65.63) .. (179.44,65.63) -- (285.19,65.63) .. controls (293.19,65.63) and (299.67,72.11) .. (299.67,80.11) -- (299.67,123.53) .. controls (299.67,131.52) and (293.19,138) .. (285.19,138) -- (179.44,138) .. controls (171.45,138) and (164.97,131.52) .. (164.97,123.53) -- cycle ;
    \draw    (241.69,81.16) .. controls (265.84,75.67) and (296.49,83.15) .. (314.06,100.57) ;
    \draw [shift={(315.92,102.51)}, rotate = 227.54] [fill={rgb, 255:red, 0; green, 0; blue, 0 }  ][line width=0.08]  [draw opacity=0] (5.36,-2.57) -- (0,0) -- (5.36,2.57) -- (3.56,0) -- cycle    ;
    \draw   (315,108.77) .. controls (315,104.39) and (318.55,100.83) .. (322.94,100.83) -- (323.42,100.83) .. controls (327.81,100.83) and (331.36,104.39) .. (331.36,108.77) -- (331.36,108.77) .. controls (331.36,113.16) and (327.81,116.71) .. (323.42,116.71) -- (322.94,116.71) .. controls (318.55,116.71) and (315,113.16) .. (315,108.77) -- cycle ;
    \draw   (242,118.16) .. controls (242,115.73) and (243.97,113.77) .. (246.39,113.77) -- (253.97,113.77) .. controls (256.39,113.77) and (258.36,115.73) .. (258.36,118.16) -- (258.36,125.25) .. controls (258.36,127.68) and (256.39,129.65) .. (253.97,129.65) -- (246.39,129.65) .. controls (243.97,129.65) and (242,127.68) .. (242,125.25) -- cycle ;
    \draw    (258.36,125.25) .. controls (275.99,126.46) and (294.75,127.98) .. (314.28,117.13) ;
    \draw [shift={(316.72,115.71)}, rotate = 148.71] [fill={rgb, 255:red, 0; green, 0; blue, 0 }  ][line width=0.08]  [draw opacity=0] (5.36,-2.57) -- (0,0) -- (5.36,2.57) -- (3.56,0) -- cycle    ;
    \draw    (235.06,92.65) -- (233.4,107.25) ;
    \draw [shift={(233.07,110.23)}, rotate = 276.46] [fill={rgb, 255:red, 0; green, 0; blue, 0 }  ][line width=0.08]  [draw opacity=0] (5.36,-2.57) -- (0,0) -- (5.36,2.57) -- (3.56,0) -- cycle    ;
    \draw    (253.97,113.77) -- (260.78,99.35) ;
    \draw [shift={(262.07,96.63)}, rotate = 115.3] [fill={rgb, 255:red, 0; green, 0; blue, 0 }  ][line width=0.08]  [draw opacity=0] (5.36,-2.57) -- (0,0) -- (5.36,2.57) -- (3.56,0) -- cycle    ;
    \draw  [color={rgb, 255:red, 211; green, 33; blue, 33 }  ,draw opacity=1 ][line width=0.75] [line join = round][line cap = round] (261.67,103.92) .. controls (258.1,104.52) and (256.35,106.13) .. (254.45,108.61) ;
    \draw  [color={rgb, 255:red, 211; green, 33; blue, 33 }  ,draw opacity=1 ][line width=0.75] [line join = round][line cap = round] (254.45,103.55) .. controls (256.43,105.55) and (258.93,107.33) .. (260.09,109.73) ;
    \draw  [color={rgb, 255:red, 211; green, 33; blue, 33 }  ,draw opacity=1 ][line width=0.75] [line join = round][line cap = round] (237.95,102.47) .. controls (234.96,101.94) and (233.49,100.51) .. (231.9,98.3) ;
    \draw  [color={rgb, 255:red, 211; green, 33; blue, 33 }  ,draw opacity=1 ][line width=0.75] [line join = round][line cap = round] (231.9,102.8) .. controls (233.56,101.02) and (235.66,99.44) .. (236.63,97.3) ;
    \draw    (319.52,101.58) .. controls (317.61,79.32) and (328.83,78.54) .. (327.3,98.32) ;
    \draw [shift={(327,101.25)}, rotate = 277.32] [fill={rgb, 255:red, 0; green, 0; blue, 0 }  ][line width=0.08]  [draw opacity=0] (5.36,-2.57) -- (0,0) -- (5.36,2.57) -- (3.56,0) -- cycle    ;
    \draw    (331.92,109.95) -- (340.92,110.07) ;
    \draw [shift={(343.92,110.11)}, rotate = 180.76] [fill={rgb, 255:red, 0; green, 0; blue, 0 }  ][line width=0.08]  [draw opacity=0] (5.36,-2.57) -- (0,0) -- (5.36,2.57) -- (3.56,0) -- cycle    ;
    \draw    (357.52,110.11) -- (366.52,109.99) ;
    \draw [shift={(369.52,109.95)}, rotate = 179.24] [fill={rgb, 255:red, 0; green, 0; blue, 0 }  ][line width=0.08]  [draw opacity=0] (5.36,-2.57) -- (0,0) -- (5.36,2.57) -- (3.56,0) -- cycle    ;
    \draw   (369.33,109.24) .. controls (369.33,104.85) and (372.89,101.3) .. (377.27,101.3) -- (377.75,101.3) .. controls (382.14,101.3) and (385.69,104.85) .. (385.69,109.24) -- (385.69,109.24) .. controls (385.69,113.63) and (382.14,117.18) .. (377.75,117.18) -- (377.27,117.18) .. controls (372.89,117.18) and (369.33,113.63) .. (369.33,109.24) -- cycle ;
    \draw   (425.4,107.57) .. controls (425.4,103.19) and (428.95,99.63) .. (433.34,99.63) -- (433.82,99.63) .. controls (438.21,99.63) and (441.76,103.19) .. (441.76,107.57) -- (441.76,107.57) .. controls (441.76,111.96) and (438.21,115.51) .. (433.82,115.51) -- (433.34,115.51) .. controls (428.95,115.51) and (425.4,111.96) .. (425.4,107.57) -- cycle ;
    \draw    (385.52,108.75) -- (422.12,108.75) ;
    \draw [shift={(425.12,108.75)}, rotate = 180] [fill={rgb, 255:red, 0; green, 0; blue, 0 }  ][line width=0.08]  [draw opacity=0] (5.36,-2.57) -- (0,0) -- (5.36,2.57) -- (3.56,0) -- cycle    ;
    \draw    (373.52,101.58) .. controls (371.61,79.32) and (382.83,78.54) .. (381.3,98.32) ;
    \draw [shift={(381,101.25)}, rotate = 277.32] [fill={rgb, 255:red, 0; green, 0; blue, 0 }  ][line width=0.08]  [draw opacity=0] (5.36,-2.57) -- (0,0) -- (5.36,2.57) -- (3.56,0) -- cycle    ;
    \draw    (429.52,99.18) .. controls (427.61,76.92) and (438.83,76.14) .. (437.3,95.92) ;
    \draw [shift={(437,98.85)}, rotate = 277.32] [fill={rgb, 255:red, 0; green, 0; blue, 0 }  ][line width=0.08]  [draw opacity=0] (5.36,-2.57) -- (0,0) -- (5.36,2.57) -- (3.56,0) -- cycle    ;
    
    \draw (228.13,79.63) node [anchor=north west][inner sep=0.75pt]  [font=\footnotesize] [align=left] {$T$};
    \draw (169.85,70.6) node [anchor=north west][inner sep=0.75pt]  [font=\normalsize] [align=left] {$\mathcal{H}$};
    \draw (315,104.17) node [anchor=north west][inner sep=0.75pt]  [font=\scriptsize] [align=left] {$D_{1}$};
    \draw (301.71,74.56) node [anchor=north west][inner sep=0.75pt]  [font=\tiny,rotate=-0.09] [align=left] {$(1,0,... ,0)$};
    \draw (244.72,117.13) node [anchor=north west][inner sep=0.75pt]  [font=\footnotesize] [align=left] {$T$};
    \draw (344.33,108.53) node [anchor=north west][inner sep=0.75pt]  [font=\scriptsize] [align=left] {$...$};
    \draw (369.33,104.7) node [anchor=north west][inner sep=0.75pt]  [font=\scriptsize] [align=left] {$D_{t}$};
    \draw (425.4,101.83) node [anchor=north west][inner sep=0.75pt]  [font=\scriptsize] [align=left] {$v_{T} '$};
    \draw (386.12,96.32) node [anchor=north west][inner sep=0.75pt]  [font=\tiny,rotate=-0.19] [align=left] {$(0,... ,0)$};
    \draw (358.91,74.16) node [anchor=north west][inner sep=0.75pt]  [font=\tiny,rotate=-0.09] [align=left] {$(0,... ,0,1)$};
    \draw (261.12,115.32) node [anchor=north west][inner sep=0.75pt]  [font=\tiny,rotate=-0.19] [align=left] {$(0,... ,0)$};
    \draw (262.96,80.4) node [anchor=north west][inner sep=0.75pt]  [font=\tiny,rotate=-13.08] [align=left] {$(0,... ,0)$};
    \end{tikzpicture}
    \caption{Reduction from the \prob{SP} problem to the \prob{ESP} problem.}
    \label{fig:reduction-sp-esp-undecidable}
\end{figure}
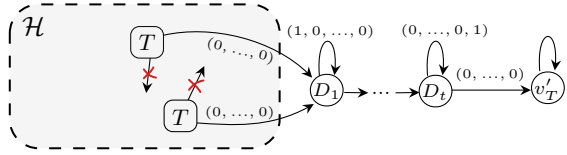

\begin{theorem}\label{theorem:equality-shortest-path-integer-undecidable}
    The Equality Shortest Path problem is undecidable.
\end{theorem}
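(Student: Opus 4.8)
The plan is to reduce the \prob{SP} problem --- known to be undecidable with integer weights, even for the threshold $(0,\dots,0)$ --- to the \prob{ESP} problem. Given an \prob{SP} instance consisting of a multidimensional shortest path game $\game = (\arena,\Omega)$ with $t$ integer weight functions $w_1,\dots,w_t$, target set $T \subseteq V$, initial vertex $v_0$, and threshold $\bar c \in \Z^t$, I would build the \prob{ESP} instance $(\game',v_0,\bar c)$ sketched in \cref{fig:reduction-sp-esp-undecidable}. Keep a copy $\mathcal H$ of $\arena$ in which every outgoing edge of a vertex of $T$ is redirected to a fresh \EE{}-vertex $D_1$; then append a gadget chain $D_1,D_2,\dots,D_t,v_T'$ of fresh \EE{}-vertices where each $D_i$ carries a self-loop of weight $1$ in coordinate $i$ and $0$ in all others, together with an all-$0$ edge to $D_{i+1}$ (to $v_T'$ when $i=t$); all edges into the chain have weight $0$ everywhere, and $v_T'$ has an all-$0$ self-loop. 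The new target set is $\{v_T'\}$ and the threshold stays $\bar c$. The point of the chain is that once a play reaches $T$ inside $\mathcal H$ with accumulated cost $p \in \Z^t$, the remainder of the play is under \EE{}'s sole control and can add any $\ell \in \N^t$ to $p$ before reaching $v_T'$, so its cost in $\game'$ is exactly $p+\ell$.

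The correctness argument then has two directions. If \EE{} wins $\game$ with a strategy $\strategyfor{\E}$, she forces $p \le \bar c$ along every outcome; in $\game'$ she plays $\strategyfor{\E}$ inside $\mathcal H$ and, upon first reaching $T$ with cost $p$, loops exactly $\bar c_i - p_i \ge 0$ times in each $D_i$, thereby reaching $v_T'$ with cost exactly $\bar c$ against every \AA{}-strategy --- a winning strategy for the \prob{ESP} instance. Conversely, if \EE{} wins the \prob{ESP} instance with a strategy $\strategyfor{\E}'$, then against every \AA{}-strategy the play must reach $v_T'$ (otherwise its cost is $(+\infty)^t \ne \bar c$), hence it reaches $T$ inside $\mathcal H$ with a cost $p$ satisfying $p+\ell = \bar c$ for some $\ell \in \N^t$, i.e.\ $p \le \bar c$; so the restriction of $\strategyfor{\E}'$ to $\mathcal H$ witnesses that \EE{} wins the \prob{SP} instance. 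Undecidability of \prob{ESP} follows at once.

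The two points that will need genuine care are, first, justifying that the edge-redirection inside $\mathcal H$ is harmless: in the \prob{SP} problem the cost of a play depends only on its prefix up to the first visit of $T$, so modifying only the outgoing edges of $T$-vertices and appending vertices reachable solely after $T$ is hit leaves these prefixes --- and hence the winner of $\game$ --- unchanged, which is what lets strategies of \EE{} pass between $\game$ and $\game'$. Second, and more delicate conceptually, is that the ``topping up'' performed in the gadgets only ever has to add \emph{non-negative} amounts. This works precisely because a winning \prob{SP} strategy guarantees $p \le \bar c$ componentwise, which is exactly the inequality making $p + \ell = \bar c$ solvable with $\ell \in \N^t$ --- and, symmetrically, the existence of such an $\ell$ forces the inequality back. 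Since $\mathcal H$ may carry negative weights, this equivalence between the threshold condition of \prob{SP} and the equality condition of \prob{ESP} is the real content of the reduction; the gadgets themselves are deliberately simple, using only non-negative unit increments.
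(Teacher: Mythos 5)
Your reduction is correct and is essentially the paper's own proof: the same redirection of the outgoing edges of $T$-vertices into a chain $D_1,\dots,D_t,v_{T'}$ of \EE{}-controlled vertices with unit self-loops used to top the accumulated cost up to the threshold, with the same two-directional argument (the paper merely fixes the threshold to $\bar 0$, which is immaterial). No gaps worth flagging; your two ``points needing care'' are exactly the ones the paper also handles (extending the strategy arbitrarily after $T$ is reached, and the equivalence $p \leq \bar c$ iff $p+\ell=\bar c$ for some $\ell\in\N^t$).
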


\begin{proof}
    We show the undecidability of the \prob{ESP} problem by reduction from the \prob{SP} problem with the threshold $\bar c = \bar 0$. Let $\mathcal{H} = (\arena,\Omega)$ be a multidimensional shortest path game, with the arena $\arena = (V,E,\{\EE{},\AA{}\},V_\E,V_\A,(w_i)_{i \in \{1,\dots,t\}})$, the target set $\reach{}$, the initial vertex $v_0 \in V$, and the threshold $\bar{0}$. We construct a new shortest path game $\mathcal{H}'= (\arena',\Omega')$ with the arena $\arena' = (V',E',\{\EE{},\AA{}\},V'_\E,V'_\A,(w'_i)_{i \in \{1,\dots,t\}})$ illustrated in \cref{fig:reduction-sp-esp-undecidable}, the target set $\reach{}'$, the same initial vertex $v_0 \in V$, and the same threshold $\bar{c}$, where
    \begin{itemize}
        \item $V' = V \cup \{D_1,\dots,D_t,v_{T'}\}$,
        \item $V'_\E = V_\E \cup \{D_1,\dots,D_t,v_{T'}\}$, $V'_\A = V_\A$,
        \item the set $E'$ is composed of the edges of $E$, except that $\succc{v} = \{D_1\}$ for every vertex $v \in \reach{}$; $E'$ also contains the new edges $(D_{i},D_{i+1})$ for $i \in \{1,\dots,t-1\}$, $(D_{i},D_{i})$ for $i \in \{1,\dots,t\}$, $(D_{t},v_{T'})$, and $(v_{T'},v_{T'})$,
        \item for every $i \in \{1,\dots,t\}$, $w'_i(e) = w_i(e)$ if $e \in E$, and $w'_i$ assigns $0$ to each added edge, except $w_i((D_i,D_i)) = 1$,
        \item $\reach{}' = \{v_{T'}\}$.
    \end{itemize}
    Let us show that \EE{} is winning for the \prob{SP} problem in $\mathcal{H}$ with the threshold $\bar 0$ if and only if she is winning for the \prob{ESP} problem in $\mathcal{H}'$ with the same threshold.

    Suppose that \EE{} has a winning strategy $\strategyfor{\E}$ in $\mathcal{H}$ for the \prob{SP} problem. Thus, every play consistent with $\strategyfor{\E}$ has a prefix $h$ ending in $\reach{}$, with $\cost{}{h} \leq \bar 0$. We define a strategy for \EE{} in $\mathcal{H}'$ by simulating $\sigma_\E$ in $\mathcal{H}$ and by extending it after each such history $h$ in a way that the resulting consistent play $\pi'$ satisfies $\cost{}{\pi'} = \bar 0$ (for each $i \in \{1,\dots,t\}$, loop on $D_i$ exactly $- \cost{i}{h}$ times).

    Conversely, suppose that \EE{} has a winning strategy $\strategyfor{\E}'$  to win in $\mathcal{H}'$, and let us simulate this strategy in $\mathcal{H}$ for histories both in $\mathcal{H}$ and $\mathcal{H}'$. By hypothesis, every play $\pi'$ consistent with $\strategyfor{\E}'$ is such that $\cost{}{\pi'} = \bar 0$. So, by construction, there is a prefix $h'$ of $\pi'$ ending in $\reach{}$ such that, in $\mathcal{H}$, $\cost{}{h'} \leq \bar 0$ holds. Once $T$ is reached by $h'$, we extend the new strategy arbitrarily in $\mathcal H$.
\end{proof}

\begin{figure}[htbp]
  \begin{minipage}[b]{0.46\textwidth}
    \centering
    \resizebox{0.95\textwidth}{!}{%
    \begin{tikzpicture}[x=0.75pt,y=0.75pt,yscale=-1,xscale=1]
    \draw   (118.96,67.18) -- (129.48,77.37) -- (119.28,87.89) -- (108.76,77.7) -- cycle ;
    \draw    (225.94,82.89) -- (191,110.78) ;
    \draw [shift={(188.65,112.65)}, rotate = 321.41] [fill={rgb, 255:red, 0; green, 0; blue, 0 }  ][line width=0.08]  [draw opacity=0] (5.36,-2.57) -- (0,0) -- (5.36,2.57) -- (3.56,0) -- cycle    ;
    \draw  [fill={rgb, 255:red, 155; green, 155; blue, 155 }  ,fill opacity=0.1 ][dash pattern={on 4.5pt off 4.5pt}][line width=0.75]  (243.22,73.38) .. controls (243.22,64.93) and (250.06,58.08) .. (258.51,58.08) -- (370.59,58.08) .. controls (379.04,58.08) and (385.88,64.93) .. (385.88,73.38) -- (385.88,119.26) .. controls (385.88,127.7) and (379.04,134.55) .. (370.59,134.55) -- (258.51,134.55) .. controls (250.06,134.55) and (243.22,127.7) .. (243.22,119.26) -- cycle ;
    \draw    (177.72,123.87) .. controls (172.66,143.38) and (187.02,145.78) .. (185.27,126.01) ;
    \draw [shift={(184.92,123.07)}, rotate = 81.31] [fill={rgb, 255:red, 0; green, 0; blue, 0 }  ][line width=0.08]  [draw opacity=0] (5.36,-2.57) -- (0,0) -- (5.36,2.57) -- (3.56,0) -- cycle    ;
    \draw    (96.28,77.61) -- (105.76,77.67) ;
    \draw [shift={(108.76,77.7)}, rotate = 180.41] [fill={rgb, 255:red, 0; green, 0; blue, 0 }  ][line width=0.08]  [draw opacity=0] (5.36,-2.57) -- (0,0) -- (5.36,2.57) -- (3.56,0) -- cycle    ;
    \draw   (181.07,106.16) -- (191.59,116.35) -- (181.39,126.87) -- (170.87,116.68) -- cycle ;
    \draw   (170.29,66.59) -- (180.81,76.79) -- (170.61,87.31) -- (160.09,77.11) -- cycle ;
    \draw    (129.48,77.37) -- (157.09,77.14) ;
    \draw [shift={(160.09,77.11)}, rotate = 179.51] [fill={rgb, 255:red, 0; green, 0; blue, 0 }  ][line width=0.08]  [draw opacity=0] (5.36,-2.57) -- (0,0) -- (5.36,2.57) -- (3.56,0) -- cycle    ;
    \draw    (180.81,76.79) -- (192.65,76.92) ;
    \draw [shift={(195.65,76.95)}, rotate = 180.62] [fill={rgb, 255:red, 0; green, 0; blue, 0 }  ][line width=0.08]  [draw opacity=0] (5.36,-2.57) -- (0,0) -- (5.36,2.57) -- (3.56,0) -- cycle    ;
    \draw   (230.62,66.84) -- (241.14,77.04) -- (230.94,87.56) -- (220.42,77.36) -- cycle ;
    \draw    (208.65,77.2) -- (217.42,77.32) ;
    \draw [shift={(220.42,77.36)}, rotate = 180.79] [fill={rgb, 255:red, 0; green, 0; blue, 0 }  ][line width=0.08]  [draw opacity=0] (5.36,-2.57) -- (0,0) -- (5.36,2.57) -- (3.56,0) -- cycle    ;
    \draw    (241.14,77.04) -- (285.4,76.72) ;
    \draw [shift={(288.4,76.7)}, rotate = 179.59] [fill={rgb, 255:red, 0; green, 0; blue, 0 }  ][line width=0.08]  [draw opacity=0] (5.36,-2.57) -- (0,0) -- (5.36,2.57) -- (3.56,0) -- cycle    ;
    \draw    (172.15,86.15) -- (178.59,103.14) ;
    \draw [shift={(179.65,105.95)}, rotate = 249.25] [fill={rgb, 255:red, 0; green, 0; blue, 0 }  ][line width=0.08]  [draw opacity=0] (5.36,-2.57) -- (0,0) -- (5.36,2.57) -- (3.56,0) -- cycle    ;
    \draw    (123.65,82.65) -- (173.21,111.64) ;
    \draw [shift={(175.8,113.15)}, rotate = 210.32] [fill={rgb, 255:red, 0; green, 0; blue, 0 }  ][line width=0.08]  [draw opacity=0] (5.36,-2.57) -- (0,0) -- (5.36,2.57) -- (3.56,0) -- cycle    ;
    \draw   (289.5,74.36) .. controls (289.5,71.93) and (291.47,69.97) .. (293.89,69.97) -- (301.47,69.97) .. controls (303.89,69.97) and (305.86,71.93) .. (305.86,74.36) -- (305.86,81.45) .. controls (305.86,83.88) and (303.89,85.85) .. (301.47,85.85) -- (293.89,85.85) .. controls (291.47,85.85) and (289.5,83.88) .. (289.5,81.45) -- cycle ;

    \draw (112,70.87) node [anchor=north west][inner sep=0.75pt]  [font=\scriptsize] [align=left] {$v'_{1}$};
    \draw (175.42,112.32) node [anchor=north west][inner sep=0.75pt]  [font=\scriptsize] [align=left] {$D$};
    \draw (359.85,67.22) node [anchor=north west][inner sep=0.75pt]  [font=\normalsize] [align=left] {$\mathcal{H}$};
    \draw (163.33,70.62) node [anchor=north west][inner sep=0.75pt]  [font=\scriptsize] [align=left] {$v'_{2}$};
    \draw (194.73,75.33) node [anchor=north west][inner sep=0.75pt]  [font=\scriptsize] [align=left] {$...$};
    \draw (224.33,70.53) node [anchor=north west][inner sep=0.75pt]  [font=\scriptsize] [align=left] {$v'_{t}$};
    \draw (127.72,89.74) node [anchor=north west][inner sep=0.75pt]  [font=\tiny,rotate=-29.06] [align=left] {$(0,... ,0)$};
    \draw (254.25,104.58) node [anchor=north west][inner sep=0.75pt]  [font=\footnotesize] [align=left] {Player~$0$ vs player~$t+1$};
    \draw (163.07,138.88) node [anchor=north west][inner sep=0.75pt]  [font=\tiny] [align=left] {$(0,... ,0)$};
    \draw (193.58,112.39) node [anchor=north west][inner sep=0.75pt]  [font=\tiny,rotate=-321.61] [align=left] {$(0,... ,0)$};
    \draw (246.15,64.47) node [anchor=north west][inner sep=0.75pt]  [font=\tiny] [align=left] {$(0,...,0)$};
    \draw (123.98,63.38) node [anchor=north west][inner sep=0.75pt]  [font=\tiny] [align=left] {$(0,... ,0)$};
    \draw (290.88,73.83) node [anchor=north west][inner sep=0.75pt]  [font=\footnotesize] [align=left] {$v_{0}$};
    \end{tikzpicture}
    }
    \caption{Reduction from the \prob{SP} problem to the \prob{NCNS} problem.}
    \label{fig:undecidability-ncns}
  \end{minipage}
  \hfill
  \begin{minipage}[b]{0.53\textwidth}
    \centering
    \resizebox{0.95\textwidth}{!}{%
    \begin{tikzpicture}[x=0.75pt,y=0.75pt,yscale=-1,xscale=1]
    \draw    (258.61,140.51) -- (222.44,162.14) ;
    \draw [shift={(219.86,163.68)}, rotate = 329.12] [fill={rgb, 255:red, 0; green, 0; blue, 0 }  ][line width=0.08]  [draw opacity=0] (5.36,-2.57) -- (0,0) -- (5.36,2.57) -- (3.56,0) -- cycle    ;
    \draw  [fill={rgb, 255:red, 155; green, 155; blue, 155 }  ,fill opacity=0.1 ][dash pattern={on 4.5pt off 4.5pt}][line width=0.75]  (280.88,76.69) .. controls (280.88,67.89) and (288.02,60.75) .. (296.83,60.75) -- (437.72,60.75) .. controls (446.53,60.75) and (453.67,67.89) .. (453.67,76.69) -- (453.67,124.52) .. controls (453.67,133.33) and (446.53,140.47) .. (437.72,140.47) -- (296.83,140.47) .. controls (288.02,140.47) and (280.88,133.33) .. (280.88,124.52) -- cycle ;
    \draw    (219.67,82.6) -- (287.67,82.79) ;
    \draw [shift={(290.67,82.8)}, rotate = 180.16] [fill={rgb, 255:red, 0; green, 0; blue, 0 }  ][line width=0.08]  [draw opacity=0] (5.36,-2.57) -- (0,0) -- (5.36,2.57) -- (3.56,0) -- cycle    ;
    \draw   (290.5,79.36) .. controls (290.5,76.93) and (292.47,74.97) .. (294.89,74.97) -- (302.47,74.97) .. controls (304.89,74.97) and (306.86,76.93) .. (306.86,79.36) -- (306.86,86.45) .. controls (306.86,88.88) and (304.89,90.85) .. (302.47,90.85) -- (294.89,90.85) .. controls (292.47,90.85) and (290.5,88.88) .. (290.5,86.45) -- cycle ;
    \draw   (293.17,101.76) .. controls (293.17,99.57) and (294.94,97.8) .. (297.12,97.8) -- (303.51,97.8) .. controls (305.7,97.8) and (307.47,99.57) .. (307.47,101.76) -- (307.47,108.22) .. controls (307.47,110.41) and (305.7,112.18) .. (303.51,112.18) -- (297.12,112.18) .. controls (294.94,112.18) and (293.17,110.41) .. (293.17,108.22) -- cycle ;
    \draw    (307.48,105.04) -- (407.67,104.94) ;
    \draw [shift={(410.67,104.93)}, rotate = 179.94] [fill={rgb, 255:red, 0; green, 0; blue, 0 }  ][line width=0.08]  [draw opacity=0] (5.36,-2.57) -- (0,0) -- (5.36,2.57) -- (3.56,0) -- cycle    ;
    \draw   (411.5,102.09) .. controls (411.5,99.9) and (413.27,98.13) .. (415.46,98.13) -- (421.84,98.13) .. controls (424.03,98.13) and (425.8,99.9) .. (425.8,102.09) -- (425.8,108.56) .. controls (425.8,110.74) and (424.03,112.51) .. (421.84,112.51) -- (415.46,112.51) .. controls (413.27,112.51) and (411.5,110.74) .. (411.5,108.56) -- cycle ;
    \draw    (213,64.1) -- (212.97,73.01) ;
    \draw [shift={(212.96,76.01)}, rotate = 270.21] [fill={rgb, 255:red, 0; green, 0; blue, 0 }  ][line width=0.08]  [draw opacity=0] (5.36,-2.57) -- (0,0) -- (5.36,2.57) -- (3.56,0) -- cycle    ;
    \draw    (149.85,140.12) .. controls (126.17,149.08) and (135.1,159.7) .. (151.87,145.73) ;
    \draw [shift={(154,143.85)}, rotate = 137.25] [fill={rgb, 255:red, 0; green, 0; blue, 0 }  ][line width=0.08]  [draw opacity=0] (5.36,-2.57) -- (0,0) -- (5.36,2.57) -- (3.56,0) -- cycle    ;
    \draw    (164.18,143.25) -- (202.67,162.35) ;
    \draw [shift={(205.36,163.68)}, rotate = 206.4] [fill={rgb, 255:red, 0; green, 0; blue, 0 }  ][line width=0.08]  [draw opacity=0] (5.36,-2.57) -- (0,0) -- (5.36,2.57) -- (3.56,0) -- cycle    ;
    \draw    (273.35,137.45) .. controls (295.63,148.91) and (282.35,157.63) .. (270.36,143.3) ;
    \draw [shift={(268.68,141.12)}, rotate = 54.57] [fill={rgb, 255:red, 0; green, 0; blue, 0 }  ][line width=0.08]  [draw opacity=0] (5.36,-2.57) -- (0,0) -- (5.36,2.57) -- (3.56,0) -- cycle    ;
    \draw    (205.36,90.51) -- (166.38,126.63) ;
    \draw [shift={(164.18,128.67)}, rotate = 317.18] [fill={rgb, 255:red, 0; green, 0; blue, 0 }  ][line width=0.08]  [draw opacity=0] (5.36,-2.57) -- (0,0) -- (5.36,2.57) -- (3.56,0) -- cycle    ;
    \draw    (219.86,90.51) -- (256.4,123.91) ;
    \draw [shift={(258.61,125.93)}, rotate = 222.43] [fill={rgb, 255:red, 0; green, 0; blue, 0 }  ][line width=0.08]  [draw opacity=0] (5.36,-2.57) -- (0,0) -- (5.36,2.57) -- (3.56,0) -- cycle    ;
    \draw    (212.75,90.52) -- (212.52,122.6) ;
    \draw [shift={(212.5,125.6)}, rotate = 270.41] [fill={rgb, 255:red, 0; green, 0; blue, 0 }  ][line width=0.08]  [draw opacity=0] (5.36,-2.57) -- (0,0) -- (5.36,2.57) -- (3.56,0) -- cycle    ;
    \draw    (204.89,169.4) .. controls (181.98,166.07) and (180.5,176.31) .. (201.95,175.95) ;
    \draw [shift={(204.75,175.85)}, rotate = 176.86] [fill={rgb, 255:red, 0; green, 0; blue, 0 }  ][line width=0.08]  [draw opacity=0] (5.36,-2.57) -- (0,0) -- (5.36,2.57) -- (3.56,0) -- cycle    ;
    \draw   (205.36,75.93) .. controls (205.36,75.93) and (205.36,75.93) .. (205.36,75.93) -- (219.86,75.93) .. controls (219.86,75.93) and (219.86,75.93) .. (219.86,75.93) -- (219.86,90.51) .. controls (219.86,90.51) and (219.86,90.51) .. (219.86,90.51) -- (205.36,90.51) .. controls (205.36,90.51) and (205.36,90.51) .. (205.36,90.51) -- cycle ;
    \draw   (258.61,125.93) .. controls (258.61,125.93) and (258.61,125.93) .. (258.61,125.93) -- (273.11,125.93) .. controls (273.11,125.93) and (273.11,125.93) .. (273.11,125.93) -- (273.11,140.51) .. controls (273.11,140.51) and (273.11,140.51) .. (273.11,140.51) -- (258.61,140.51) .. controls (258.61,140.51) and (258.61,140.51) .. (258.61,140.51) -- cycle ;
    \draw   (149.68,128.67) .. controls (149.68,128.67) and (149.68,128.67) .. (149.68,128.67) -- (164.18,128.67) .. controls (164.18,128.67) and (164.18,128.67) .. (164.18,128.67) -- (164.18,143.25) .. controls (164.18,143.25) and (164.18,143.25) .. (164.18,143.25) -- (149.68,143.25) .. controls (149.68,143.25) and (149.68,143.25) .. (149.68,143.25) -- cycle ;
    \draw   (205.36,163.68) .. controls (205.36,163.68) and (205.36,163.68) .. (205.36,163.68) -- (219.86,163.68) .. controls (219.86,163.68) and (219.86,163.68) .. (219.86,163.68) -- (219.86,178.26) .. controls (219.86,178.26) and (219.86,178.26) .. (219.86,178.26) -- (205.36,178.26) .. controls (205.36,178.26) and (205.36,178.26) .. (205.36,178.26) -- cycle ;
    
    \draw (150.08,129.12) node [anchor=north west][inner sep=0.75pt]  [font=\scriptsize] [align=left] {$v'_{1}$};
    \draw (205.5,165.98) node [anchor=north west][inner sep=0.75pt]  [font=\scriptsize] [align=left] {$D$};
    \draw (425.85,65.55) node [anchor=north west][inner sep=0.75pt]  [font=\normalsize] [align=left] {$\mathcal{H}$};
    \draw (257.67,125.53) node [anchor=north west][inner sep=0.75pt]  [font=\scriptsize] [align=left] {$v'_{2t}$};
    \draw (172.43,133.23) node [anchor=north west][inner sep=0.75pt]  [font=\tiny,rotate=-26.24] [align=left] {$(0,...,0)$};
    \draw (313.92,120.92) node [anchor=north west][inner sep=0.75pt]  [font=\footnotesize] [align=left] {Player~$0$ vs player~$1$};
    \draw (219.06,150.34) node [anchor=north west][inner sep=0.75pt]  [font=\tiny,rotate=-328.5] [align=left] {$(0,...,0)$};
    \draw (234.48,72.47) node [anchor=north west][inner sep=0.75pt]  [font=\tiny] [align=left] {$(0,...,0)$};
    \draw (106.65,153.22) node [anchor=north west][inner sep=0.75pt]  [font=\tiny] [align=left] {$(0,0,-1,...,-1)$};
    \draw (290.88,79.17) node [anchor=north west][inner sep=0.75pt]  [font=\footnotesize] [align=left] {$v_{0}$};
    \draw (295.55,100.17) node [anchor=north west][inner sep=0.75pt]  [font=\footnotesize] [align=left] {$v$};
    \draw (412.55,98.5) node [anchor=north west][inner sep=0.75pt]  [font=\footnotesize] [align=left] {$v'$};
    \draw (308.82,92.8) node [anchor=north west][inner sep=0.75pt]  [font=\tiny] [align=left] {$(0,d_{1},-d_{1},...,d_{t},-d_{t})$};
    \draw (205.5,76.12) node [anchor=north west][inner sep=0.75pt]  [font=\scriptsize] [align=left] {$v'_{0}$};
    \draw (205.82,131.42) node [anchor=north west][inner sep=0.75pt]  [font=\scriptsize] [align=left] {$...$};
    \draw (258.13,151.43) node [anchor=north west][inner sep=0.75pt]  [font=\tiny] [align=left] {$(0,-1,...,-1,0)$};
    \draw (229.39,84.93) node [anchor=north west][inner sep=0.75pt]  [font=\tiny,rotate=-41.08] [align=left] {$(0,...,0,1)$};
    \draw (157.58,119.33) node [anchor=north west][inner sep=0.75pt]  [font=\tiny,rotate=-316.21] [align=left] {$(0,1,0,...,0)$};
    \end{tikzpicture}
    }
    \caption{Reduction from the \prob{ESP} problem to the \prob{NCPS} problem.}
    \label{fig:undecidability-ncps}
  \end{minipage}
\end{figure}

\begin{proof}[Proof of \cref{theorem:undecidableNE}]
    (1) We first show the undecidability of the \prob{NCNS} problem by reduction from the complement of the \prob{SP} problem. Let $\mathcal{H}= (\arena,\Omega)$ be a multidimensional shortest path game, with the arena $\arena = (V,E,\{\EE{},\AA{}\},V_\E,V_\A,(w_i)_{i \in \{1,\dots,t\}})$, the target set $T$, and the initial vertex $v_0 \in V$. We construct a reachability game $\game' = ((V', E', \Players, (V'_i)_{i\in\Players}, (w'_i)_{i \in \Players}), (\reach{i}')_{i \in \Players})$ with $t+2$ players as illustrated in \cref{fig:undecidability-ncns}, where:
    \begin{itemize}
        \item $\Players = \{0,\dots,t+1\}$, 
        \item $V' = V \cup \{v'_1,\dots,v'_t,D\}$,
        \item $V'_i = \{v'_i\}$ for all $i \in \Players \ssetminus \{0,t+1\}$, $V'_0 = V_\A$, and $V'_{t+1} = V_\E$,
        \item $E' = E \cup \{(v_{i}',v_{i+1}'),(v_{i},D) \mid i \in \{1,\dots,t-1\}\} \cup \{(v_t',D),(D,D),(v_t',v_0)\}$,
        \item the functions $w'_i$, for all $i \in \Players \ssetminus \{0,t+1\}$, are the same as $w_i$ in $\mathcal{H}$ and assign $0$ on the added edges, and the functions $w_0 = w_{t+1}$ both assign $0$ to every edge of $E'$,
        \item $\reach{i}' = \reach{} \cup \{D\}$ for all $i \in \Players \ssetminus \{0,t+1\}$, $\reach{0}' = \{D\}$ and $\reach{t+1}' = V'$.
    \end{itemize}
    From the initial vertex $v'_1$, each player~$1,\dots,t$ has the choice to go to $D$ (and thus giving a zero cost to everybody) or to go to the game $\mathcal{H}$. In $\mathcal{H}$, player~$0$ has the role of \AA{} and always gets an infinite cost (he gets a zero cost outside $\mathcal H$), while player~$t+1$ has the role of \EE{} and always gets a zero cost (even outside $\mathcal{H}$). Let us prove that there exists a solution to the \prob{NCNS} problem in $\game'$ with the threshold $c' = 0$ if and only if \AA{} has a winning strategy in $\mathcal{H}$ with the threshold $\bar c = \bar 0$.

    If there exists a strategy $\strategyfor{0}$ solution to the \prob{NCNS} problem, it means that for every \fixedEquilibriaStrategy{} outcome $\pi'$, $\cost{0}{\pi'} \leq 0$ holds. From $\strategyfor{0}$, we can construct a strategy $\strategyfor{\A}$ for \AA{}. Let $\strategyfor{\E}$ be any strategy for \EE{}. These two strategies generate a play $\pi$ in $\mathcal{H}$ (from $v_0$), leading to the play $\pi' = v'_1\dots v'_t \cdot\pi$ in $\game'$. By construction, this play $\pi'$ is not a \fixedEquilibria{} outcome as $\pi'$ is consistent with $\strategyfor{0}$ and $\cost{0}{\pi'} = +\infty$. Therefore, some player, say player~$i$, has a profitable deviation along $\pi'$. As player~$t+1$ has no profitable deviation (he gets a zero cost for all plays), it follows that $i \in \{1,\dots,t\}$. That deviation must occur in vertex $v'_i$ as it is the only vertex controlled by player~$i$ along $\pi'$. Therefore, the deviating play gives a zero cost for player~$i$, meaning that $\cost{i}{\pi'} > 0$. Hence, we get $\cost{i}{\pi} > 0$ in $\mathcal{H}$. This shows that $\strategyfor{\A}$ is a winning strategy for \AA{} in $\mathcal H$.

    Suppose now that $\strategyfor{\A}$ is a winning strategy for \AA{} in $\mathcal H$. Let $\strategyfor{0}$ be a strategy of player~$0$ in $\game'$, which corresponds to the strategy $\strategyfor{\A}$. By contradiction, suppose that $\strategyfor{0}$ is not a solution to the \prob{NCNS} problem, i.e., there exists a \fixedEquilibriaStrategy{} outcome $\pi'$ such that $\cost{0}{\pi'} > 0$. As $\reach{0}' = \{D\}$, it means that $\pi' = v'_1\dots v'_t \cdot \pi$, with $\pi$ a play in $\mathcal H$ consistent with $\strategyfor{\A}$, thus such that $\cost{i}{\pi} > 0$ for some $i \in \{1,\ldots,t\}$. This is in contradiction with $\pi'$ being a \fixedEquilibriaStrategy{} outcome, as player~$i$ has a profitable deviation in~$v'_i$.

    \medskip
    (2) We then show the undecidability of the \prob{NCPS} problem by reduction from the complement of the \prob{ESP} problem. Let again $\mathcal{H} = (\arena,\Omega)$ be a multidimensional shortest path game. We now construct a reachability game $\game' = ((V', E', V'_0, V'_1, (w'_i)_{i \in \{0,\dots,2t\}}),(\reach{i}')_{i \in \{0,\dots,2t\}})$ with two players~$0$ and~$1$, and $2t$ reachability objectives from player~$1$, as illustrated in \cref{fig:undecidability-ncps}, where:
    \begin{itemize}
        \item $V' = V \cup \{v_0',v_1',\dots,v_{2t}',D\}$,
        \item $E' = E \cup \{(v_0',v_0),(D,D)\} \cup \{(v_0',v_i'),(v_i',v_i'),(v_i',D) \mid i \in \{1,\dots,2t\}\}$,
        \item $V'_0 = V_\A$, $V'_1 = V_\E \cup \{v_0',v_1',\dots,v_{2t}',D\}$,
        \item the weight function $w'_0$ is null; for each $i \in \{1,\dots,2t\}$, we have
        \begin{itemize}
            \item for each $e \in E$, $w_i'(e) = w_{(i+1)/2}(e)$ if $i$ is odd, and $w_i'(e) = -w_{i/2}(e)$ if $i$ is even,
            \item  $w'_i((v_j',D)) = w'_i((D,D)) = 0$,
            \item and \[w_i'((v_0',v_j')) = \begin{cases}
            1 & \text{ if $j = i$,} \\
            0 & \text{ if $j \neq i$,}
        \end{cases} \quad w_i'((v_j',v_j')) = \begin{cases}
            0 & \text{ if $j = i$,} \\
            -1 & \text{ if $j \neq i$,}
        \end{cases}\] 
        \end{itemize}
        \item $\reach{0}' = \{D\}$, $\reach{i}' = \reach{} \cup \{D\}$ for each $i \in \{1,\dots,2t\}$.
    \end{itemize}
    We denote by $\costfuncprim{}$ the cost function in $\mathcal{G'}$.
    Hence, a weight tuple $(d_1,d_2, \ldots, d_t)$ labeling an edge in $\mathcal H$ is replaced by the weight tuple $(0,d_1,-d_1,d_2,-d_2, \ldots, d_t,-d_t)$ in the new game. Moreover, a play $\pi$ in $\mathcal H$ with $\cost{}{\pi} = (c_1,c_2,\ldots,c_t) \in (\Z \cup \{+\infty\})^{t}$ has a cost for player~$1$ in $\mathcal{G}'$ equal to $\payoffprim{\pi} = (c_1,-c_1,c_2,-c_2, \ldots, c_t,-c_t)$ if $\pi$ visits $\reach{}$ and $(+\infty, \ldots, \infty)$ otherwise. Note that any play $\pi'$ in the left part of $\mathcal{G}'$ that visits $D$ has a cost for player~$1$ of the form $(-c,\ldots,-c,1,-c, \ldots, -c)$ where $c \in \\N$ can be arbitrarily large.
     
    Let us prove that there exists a solution to the \prob{NCPS} problem in $\game'$ with the initial vertex $v_0'$ and the threshold $c = 0$ if and only if \AA{} has a winning strategy in $\mathcal{H}$ for the \prob{ESP} problem with the threshold $\bar c = \bar 0$.

    If \AA{} has a winning strategy $\strategyfor{\A}$ in $\mathcal{H}$, then player~$0$ can construct a strategy $\strategyfor{0}$ in $\mathcal{G}'$ that simulates $\strategyfor{\A}$. Hence, every play $\pi$ consistent with $\strategyfor{0}$ going to the right part of $\mathcal{G}'$ satisfies $\payoffprim{\pi} \neq (0,\dots,0)$, i.e., either $\payoffprim{\pi} = (+\infty,\ldots,+\infty)$ or there exists two consecutive indices, say $i$ and $i+1$, such that $\costprim{i}{\pi},\costprim{i+1}{\pi} \in \Z$ and $\costprim{i+1}{\pi} = -\costprim{i}{\pi}$.
    W.l.o.g, suppose that $\costprim{i}{\pi} > 0$.
    Consequently, there is a play $\pi' = v_0' (v_i')^c D^\omega$ for some large enough $c \in \N$ such that $\payoffprim{\pi'} = (-c,\ldots,-c,1,-c, \ldots, -c) < \payoffprim{\pi}$. It means that there is no $\strategyfor{0}$-fixed PO play going to the right part of $\mathcal{G}'$. Therefore, $\strategyfor{0}$ is solution to the \prob{NCPS} problem since if there exists a $\strategyfor{0}$-fixed PO play, it must visit $D$, leading to a zero cost to player~$0$.\footnote{Note that there is no $\strategyfor{0}$-fixed PO play, making $\sigma_0$ trivially a solution to the \prob{NCPS} problem.}

    Conversely, if \EE{} has a winning strategy $\strategyfor{\E}$ for the \prob{ESP} problem in $\mathcal H$, then for all strategies $\strategyfor{0}$ of player~$0$, player~$1$ has a strategy $\strategyfor{1}$ that mimics $\strategyfor{\E}$ to generate a play $\pi$ going to the right part of $\mathcal{G}'$ with $\payoffprim{\pi} = (0,\dots,0)$. As for every play $\pi'$ visiting $D$ (in the left part of $\mathcal{G}'$), there exists $i \in \{1,\dots,2t\}$ such that $\costprim{i}{\pi'} = 1$, it follows that $\pi$ is a $\strategyfor{0}$-fixed PO play, but does not visit $\reach{0}$. Consequently, the instance of the \prob{NCPS} is false.
\end{proof}

\section{Lower Bounds for Pareto Optimality}\label{appendix:lower-bounds-pareto}

We prove the lower bounds of \cref{theorem:pareto-results} for \emph{qualitative} reachability. As a corollary, for the problems studied in this paper, we get the same complexity class for quantitative reachability as for qualitative reachability.

In the following proofs, it is important to recall that in qualitative reachability games, the weight of all edges is $(0,\dots,0)$. Hence, $\cost{i}{\pi} = 0$ if the play $\pi$ visits the target set $\reach{i}$, and $+\infty$ otherwise. In particular, we can always set the threshold $c$ to $0$. In the sequel, instead of using the notation $\cost{i}{\pi} = 0$, we will say that $\reach{i}$ is visited.

\subsection{Lower bound for the NCPV problem}

We begin with the lower bound of \cref{theorem:pareto-results}.\ref{thm-NCPV}.

\begin{theorem} \label{theorem:piHard}
    The \probname{}{NC}{P}{V} problem is \piHard{}{}.
\end{theorem}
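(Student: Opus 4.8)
The plan is to reduce from the validity problem for quantified Boolean formulas $\Phi = \forall x_1\cdots\forall x_n\,\exists y_1\cdots\exists y_m\;\psi(\bar x,\bar y)$ with $\psi$ a CNF over clauses $C_1,\dots,C_k$; deciding validity of such $\Phi$ is \piComplete{}. From $\Phi$ I would build, in polynomial time, a \emph{qualitative} reachability game $\game$ together with the trivial single-state deterministic Mealy machine $\machine{0}$ for player~$0$ (so that player~$1$ is the only one to play and we may fix $c=0$), such that $\Phi$ is valid iff every $\strategyfor{0}$-fixed PO play $\pi$ visits $\reach{0}$, i.e. satisfies $\cost{0}{\pi}\le 0$. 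As the construction lives inside a qualitative game, this yields \piHard{}ness both in general and already in the qualitative case.

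The arena has three ingredients. First, an \emph{assignment gadget}: a linear chain of $n+m$ binary choices in which player~$1$ picks a value for each $x_i$ and then each $y_l$; the branch ``$x_i:=\mathsf{true}$'' goes through a distinguished vertex $\ell_i^{\top}$ and ``$x_i:=\mathsf{false}$'' through $\ell_i^{\bot}$, similarly for the $y_l$'s, and $L_i^{\top}=\{\ell_i^{\top}\}$, $L_i^{\bot}=\{\ell_i^{\bot}\}$ are singleton target sets acting as \emph{locks}: every play visits exactly one of them per variable, so any play whose visit set contains all the locks $L_i^{b_i}$ has genuinely committed the assignment $\bar x=\bar b$. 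Second, a \emph{clause encoding} in the style of the generalized-reachability reduction of~\cite{DBLP:journals/tsi/FijalkowH13}: for each clause $C_j$ the target set $\reach{C_j}$ is the union of the ``literal vertices'' of $C_j$, where the literal vertex for a positive occurrence of $x_i$ (resp.\ a negative one) is wired so as to be reachable \emph{only} immediately after the ``$x_i:=\mathsf{true}$'' (resp.\ ``$x_i:=\mathsf{false}$'') branch, and likewise for the $y_l$'s. A play can therefore never cross a literal vertex contradicting its own committed assignment, but may cross any consistent subset of them; hence a play committing $(\bar x,\bar y)$ visits $\reach{C_j}$ exactly when it chooses to cross a literal vertex witnessing $(\bar x,\bar y)\models C_j$, so the best clause coverage it can achieve is $\{\reach{C_j}\mid(\bar x,\bar y)\models C_j\}$. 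Third, player~$0$'s target $\reach{0}$ together with a marker target $U$ crossed by every play at the end of the assignment gadget: $\reach{0}$ sits behind a final module, built by reusing the same consistency wiring, that a play can traverse precisely when its committed assignment satisfies \emph{all} clauses; from anywhere a play may loop forever in a sink belonging to no target set.

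The equivalence then follows by describing the PO plays. Because of the locks, plays committing different $\bar x$'s have incomparable visit sets, so Pareto-domination occurs only among plays sharing the same $\bar x$; among those, maximality forces a play to collect a $\subseteq$-maximal achievable set of clause targets, and — when that maximal set is the full set $\{\reach{C_j}\}_j$ — also $U$ and $\reach{0}$. Consequently, for a ``good'' $\bar x$ (one with $\exists\bar y\,\psi(\bar x,\bar y)$) the full clause set is the unique maximal coverage and every PO play with that $\bar x$ visits $\reach{0}$; for a ``bad'' $\bar x$ no play with that $\bar x$ can cross the final module, hence none visits $\reach{0}$, yet a play collecting a maximal \emph{proper} subset of the clause targets is still PO. Thus some $\strategyfor{0}$-fixed PO play avoids $\reach{0}$ iff some $\bar x$ is bad iff $\Phi$ is not valid — equivalently, all PO plays visit $\reach{0}$ iff $\Phi$ is valid, which is the \prob{NCPV} instance; together with the \piClass{} membership of the main text this gives \piComplete{}ness.

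The step I expect to be the real obstacle is calibrating the clause encoding and the final module so that the Pareto order cuts \emph{exactly} at ``satisfies every clause'' rather than at ``satisfies one more clause than some reference play'': one must make sure $\reach{0}$ is forced into the maximal visit set precisely for good $\bar x$, is genuinely unreachable for bad $\bar x$ — this is why literal vertices are wired to the assignment branches rather than checked \emph{a posteriori}, since there is no adversary player and the Mealy machine has no room to remember a whole assignment — and that no unintended play (one re-entering the assignment gadget, or mixing literal vertices of several clauses) produces a PO visit set incomparable to the ones above in a way that breaks the equivalence. Fixing the order of the $n+m$ choices, the placement of $U$, and the sink so that all the case analyses go through is where the work lies; verifying that the reduction is polynomial and answer-preserving is then routine.
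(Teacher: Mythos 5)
Your reduction source and framing are fine (reducing $\forall X\exists Y\,[\psi=1]$ to \prob{NCPV} is equivalent to the paper's reduction of $\exists X\forall Y\,[\varphi=0]$ to the complement, and using a single-player qualitative game with a trivial Mealy machine matches the paper's setting), but the central gadget of your construction does not exist. You require a ``final module'' guarding $\reach{0}$ ``that a play can traverse precisely when its committed assignment satisfies all clauses,'' and you later rely on this literally: ``for a bad $\bar x$ no play with that $\bar x$ can cross the final module, hence none visits $\reach{0}$.'' This is a graph-reachability claim, and it cannot hold: your assignment gadget is a re-converging chain of binary choices, so after the last choice every play sits at the same vertex, and from that vertex the set of reachable vertices --- in particular whether $\reach{0}$ can be reached --- is identical for all plays, independently of which branches were taken. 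Encoding the committed assignment into the graph so that reachability of $\reach{0}$ depends on it would require memory of the whole assignment, i.e.\ exponentially many vertices. So the forward direction of your equivalence (bad $\bar x$ $\Rightarrow$ some PO play avoids $\reach{0}$ because $\reach{0}$ is unreachable) collapses, and the Pareto analysis you build on top of it (``maximality forces \dots also $U$ and $\reach{0}$'') is not established. You flag exactly this calibration as ``the real obstacle,'' but flagging it is not resolving it: the missing idea is precisely how to make the $\forall Y$ quantification happen through the Pareto order rather than through reachability.

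The paper's proof resolves this with a structurally different construction that you should compare against. It uses two disjoint subarenas: $\arena_1$, where only the $X$-variables are chosen and whose entry vertex $v_1$ is placed \emph{inside every clause target} $\reach{C_i}$ (so an $\arena_1$-play vacuously ``satisfies'' all clauses), and $\arena_2$, where $X$ and then $Y$ are chosen and the clause targets are the literal vertices; player~$0$'s target is $\reach{0}=\reach{1}=\{v_2\}$, the entry of $\arena_2$, so it is visited by \emph{every} $\arena_2$-play and by \emph{no} $\arena_1$-play --- no satisfaction-gated module is needed. The $X$-locks make plays with different $X$-valuations incomparable, and an $\arena_1$-play with valuation $val_X$ is dominated exactly by an $\arena_2$-play with the same $val_X$ whose $Y$-choice satisfies every clause; Pareto-optimality of the $\arena_1$-play thus encodes ``$\forall Y$, some clause fails,'' which is the quantifier alternation you were trying to obtain. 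A secondary point in your write-up: you must \emph{not} put locks on the $y$-variables (your ``similarly for the $y_l$'s'' suggests you do), otherwise plays sharing a good $\bar x$ but committing different $\bar y$'s become incomparable, and a play with a bad $\bar y$ collecting a maximal proper clause set would be PO without visiting $\reach{0}$, breaking the equivalence even for valid $\Phi$; in the paper the $Y$-variables carry no targets at all.
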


We show that there exists a polynomial reduction from a variant of the \emph{\sigmaQBF{} problem} to the complementary of the \prob{NCPV} problem. Given a Boolean formula $\varphi$ in 3DNF on the set of variables $X \cup Y$, with $X$ and $Y$ being disjoint, the \sigmaQBF{} problem asks whether there exists a valuation $val_X$ of variables of $X$ such that for every valuation $val_Y$ of variables of $Y$, the valuation $(val_X,val_Y)$ satisfies $\varphi$. In other words,
\[
\exists X \; \forall Y ~ [\varphi(X,Y) = 1].
\]
This problem is known to be \sigmaComplete{}~\cite{STOCKMEYER19761}. The variant, that we call \emph{\sigmaQBFneg{} problem}, asks whether $\varphi$ in 3CNF is not satisfied, i.e.,
\[
\exists X \; \forall Y ~ [\varphi(X,Y) = 0].
\]

\begin{corollary}
    The \sigmaQBFneg{} problem with a 3CNF formula $\varphi$ is \sigmaComplete{}.
\end{corollary}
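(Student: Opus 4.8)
The plan is to show that the \sigmaQBFneg{} problem (with a 3CNF formula) and the \sigmaQBF{} problem (with a 3DNF formula) are inter-reducible by the trivial syntactic transformation that sends a formula to its De Morgan dual, and then to invoke the known \sigmaComplete{}ness of the latter. Concretely, membership in \sigmaClass{} and \sigmaHard{}ness will both fall out of the single observation that, for a formula $\varphi$ in 3CNF, the formula $\neg\varphi$ obtained by pushing the negation inwards via De Morgan's laws is a formula in 3DNF computable in polynomial time (and symmetrically for 3DNF $\to$ 3CNF), and that this preserves the quantifier prefix semantics: $\exists X\,\forall Y\,[\varphi(X,Y)=0]$ holds if and only if $\exists X\,\forall Y\,[\neg\varphi(X,Y)=1]$ holds.

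For \sigmaHard{}ness, I would take an arbitrary instance of the \sigmaQBF{} problem, i.e.\ a formula $\psi$ in 3DNF over disjoint variable sets $X,Y$, asking whether $\exists X\,\forall Y\,[\psi(X,Y)=1]$. Negating $\psi$ term by term (each conjunctive term of $3$ literals becomes a disjunctive clause of $3$ negated literals) yields in polynomial time a formula $\varphi := \neg\psi$ in 3CNF, and by the equivalence above $\exists X\,\forall Y\,[\psi=1]$ holds iff $\exists X\,\forall Y\,[\varphi=0]$ holds. Hence $\psi \mapsto \varphi$ is a polynomial-time reduction from \sigmaQBF{} to \sigmaQBFneg{}, which gives \sigmaHard{}ness.

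For membership in \sigmaClass{}, I would apply the same dualisation in the other direction: given a 3CNF formula $\varphi$, the formula $\neg\varphi$ is in 3DNF and is computable in polynomial time, and $\exists X\,\forall Y\,[\varphi=0]$ is equivalent to the \sigmaQBF{} instance $\exists X\,\forall Y\,[\neg\varphi=1]$, which lies in \sigmaClass{}. (Alternatively one can argue directly: an \np{} machine guesses a valuation $val_X$ of $X$ and then queries a \conp{} oracle to decide whether $\forall Y\, \neg\varphi(val_X,Y)$, which is legitimate since $\varphi$ can be evaluated in polynomial time; this places the problem in $\mathsf{NP}^{\mathsf{coNP}} = \sigmaClass{}$.) Combining the two directions yields \sigmaComplete{}ness.

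There is no substantial obstacle here: the only point that needs to be checked with a little care is that the normal form is genuinely preserved, i.e.\ that De Morgan turns a term of exactly three literals into a clause of exactly three literals and vice versa, so that the reduced instance is again in the required 3CNF/3DNF shape; this is immediate. The rest is a routine complexity-class bookkeeping argument.
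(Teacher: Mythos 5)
Your argument is correct and is exactly the reasoning the paper leaves implicit: the statement is presented as an immediate corollary of the \sigmaComplete{}ness of \sigmaQBF{}, obtained by the De Morgan dualization $\exists X\,\forall Y\,[\varphi=0] \Leftrightarrow \exists X\,\forall Y\,[\neg\varphi=1]$, which swaps 3CNF and 3DNF in polynomial time in both directions. Your additional direct $\mathsf{NP}^{\mathsf{coNP}}$ membership argument is fine but not needed beyond that.
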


We are going to show that the \sigmaQBFneg{} problem reduces to the \coprob{NCPV} problem (with threshold $c=0$). Hence the \prob{NCPV} problem will be \piHard{}. Recall that in this case, player~$1$ is the only one to play. Our goal is then to show that there exists a PO play $\pi$ not visiting $\reach{0}$ if and only if the \sigmaQBFneg{} instance is satisfied. Let $X = \{x_1,\dots,x_n\}$, $Y = \{y_1,\dots,y_m\}$ be two sets of distinct variables and $\varphi$ be a 3CNF Boolean formula on $X \cup Y$, i.e., in the form
\[
\varphi = C_1 \land \dots \land C_k, \quad \text{and} \quad \forall i, \; C_i = (\ell_{i,1} \lor \ell_{i,2} \lor \ell_{i,3}),
\]
where $\ell_{i,j}$ is a literal, representing either a variable $v \in X \cup Y$, or its negation $\neg v$.

\begin{figure}[ht]
    \centering
    \resizebox{0.9\textwidth}{!}{%
        \begin{tikzpicture}

        \node[draw, rectangle, minimum size=0.8cm] (o) at (-2.5,-0.25){$v_0$};


        \draw[dashed, rounded corners] (2.9,-1.8) rectangle (10.1, 1.3)  {};
        \node[] at (10.1 - 0.4, 1.3 - 0.3) {$\arena_1$};
         
        \node[draw, rectangle, minimum size=0.8cm] (b1b) at (3.5,-0.25){$v_1$};
        \node[draw, rectangle, minimum size=0.8cm,inner sep=0.05pt] (x11b) at (4.5,-1.25){$\neg x^1_1$};
        \node[draw, rectangle, minimum size=0.8cm] (nx11b) at (4.5,0.75){$x^1_1$};
        \node[draw, rectangle, minimum size=0.8cm] (b2b) at (5.5,-0.25){};
        \node[minimum size=0.8cm] (b2i1b) at (6.5,-1.25){$\dots$};
        \node[minimum size=0.8cm] (b2i2b) at (6.5,0.75){$\dots$};
        \node[draw, rectangle, minimum size=0.8cm] (bmb) at (7.5,-0.25){};
        \node[draw, rectangle, minimum size=0.8cm, inner sep=0.05pt] (x1mb) at (8.5,-1.25){$\neg x^1_m$};
        \node[draw, rectangle, minimum size=0.8cm] (nx1mb) at (8.5,0.75){$x^1_m$};
        \node[draw, rectangle, minimum size=0.8cm] (end1b) at (9.5,-0.25){};
        
        \draw[-stealth, shorten >=1pt,auto] (b1b) to [] node []{} (x11b);
        \draw[-stealth, shorten >=1pt,auto] (b1b) to [] node []{} (nx11b);
        \draw[-stealth, shorten >=1pt,auto] (x11b) to [] node []{} (b2b);
        \draw[-stealth, shorten >=1pt,auto] (nx11b) to [] node []{} (b2b);
        \draw[-stealth, shorten >=1pt,auto] (b2b) to [] node []{} (b2i1b);
        \draw[-stealth, shorten >=1pt,auto] (b2b) to [] node []{} (b2i2b);
        \draw[-stealth, shorten >=1pt,auto] (b2i1b) to [] node []{} (bmb);
        \draw[-stealth, shorten >=1pt,auto] (b2i2b) to [] node []{} (bmb);
        \draw[-stealth, shorten >=1pt,auto] (bmb) to [] node []{} (x1mb);
        \draw[-stealth, shorten >=1pt,auto] (bmb) to [] node []{} (nx1mb);
        \draw[-stealth, shorten >=1pt,auto] (x1mb) to [] node []{} (end1b);
        \draw[-stealth, shorten >=1pt,auto] (nx1mb) to [] node []{} (end1b);

        \path (end1b) edge[loop below] (end1b);


        \draw[dashed, rounded corners] (-3.1,-5.3) rectangle (10.1, -2.2) {};
        \node[] at (10.1 - 0.4,-2.2 - 0.3) {$\arena_2$};

        \node[draw, rectangle, minimum size=0.8cm] (s1) at (-2.5,-3.75){$v_2$};
        \node[draw, rectangle, minimum size=0.8cm, inner sep=0.05pt] (s2) at (-1.5,-4.75){$\neg x^2_1$};
        \node[draw, rectangle, minimum size=0.8cm] (s3) at (-1.5,-2.75){$x^2_1$};
        \node[draw, rectangle, minimum size=0.8cm] (s4) at (-0.5,-3.75){};
        \node[minimum size=0.8cm] (s5) at (0.5,-4.75){$\dots$};
        \node[minimum size=0.8cm] (s6) at (0.5,-2.75){$\dots$};
        \node[draw, rectangle, minimum size=0.8cm] (s7) at (1.5,-3.75){};
        \node[draw, rectangle, minimum size=0.8cm, inner sep=0.05pt] (s8) at (2.5,-4.75){$\neg x^2_n$};
        \node[draw, rectangle, minimum size=0.8cm] (s9) at (2.5,-2.75){$x^2_n$};

        \node[draw, rectangle, minimum size=0.8cm] (d1) at (3.5,-3.75){};
        \node[draw, rectangle, minimum size=0.8cm, inner sep=0.05pt] (x31) at (4.5,-4.75){$\neg y_1$};
        \node[draw, rectangle, minimum size=0.8cm] (nx31) at (4.5,-2.75){$y_1$};
        \node[draw, rectangle, minimum size=0.8cm] (d2) at (5.5,-3.75){};
        \node[minimum size=0.8cm] (d2i1) at (6.5,-4.75){$\dots$};
        \node[minimum size=0.8cm] (d2i2) at (6.5,-2.75){$\dots$};
        \node[draw, rectangle, minimum size=0.8cm] (dm) at (7.5,-3.75){};
        \node[draw, rectangle, minimum size=0.8cm, inner sep=0.05pt] (x3m) at (8.5,-4.75){$\neg y_m$};
        \node[draw, rectangle, minimum size=0.8cm] (nx3m) at (8.5,-2.75){$y_m$};
        \node[draw, rectangle, minimum size=0.8cm] (eng3) at (9.5,-3.75){};
        
        \draw[-stealth, shorten >=1pt,auto] (s1) to [] node []{} (s2);
        \draw[-stealth, shorten >=1pt,auto] (s1) to [] node []{} (s3);
        \draw[-stealth, shorten >=1pt,auto] (s2) to [] node []{} (s4);
        \draw[-stealth, shorten >=1pt,auto] (s3) to [] node []{} (s4);
        \draw[-stealth, shorten >=1pt,auto] (s4) to [] node []{} (s5);
        \draw[-stealth, shorten >=1pt,auto] (s4) to [] node []{} (s6);
        \draw[-stealth, shorten >=1pt,auto] (s5) to [] node []{} (s7);
        \draw[-stealth, shorten >=1pt,auto] (s6) to [] node []{} (s7);
        \draw[-stealth, shorten >=1pt,auto] (s7) to [] node []{} (s8);
        \draw[-stealth, shorten >=1pt,auto] (s7) to [] node []{} (s9);
        \draw[-stealth, shorten >=1pt,auto] (s8) to [] node []{} (d1);
        \draw[-stealth, shorten >=1pt,auto] (s9) to [] node []{} (d1);
        
        \draw[-stealth, shorten >=1pt,auto] (d1) to [] node []{} (x31);
        \draw[-stealth, shorten >=1pt,auto] (d1) to [] node []{} (nx31);
        \draw[-stealth, shorten >=1pt,auto] (x31) to [] node []{} (d2);
        \draw[-stealth, shorten >=1pt,auto] (nx31) to [] node []{} (d2);
        \draw[-stealth, shorten >=1pt,auto] (d2) to [] node []{} (d2i1);
        \draw[-stealth, shorten >=1pt,auto] (d2) to [] node []{} (d2i2);
        \draw[-stealth, shorten >=1pt,auto] (d2i1) to [] node []{} (dm);
        \draw[-stealth, shorten >=1pt,auto] (d2i2) to [] node []{} (dm);
        \draw[-stealth, shorten >=1pt,auto] (dm) to [] node []{} (x3m);
        \draw[-stealth, shorten >=1pt,auto] (dm) to [] node []{} (nx3m);
        \draw[-stealth, shorten >=1pt,auto] (x3m) to [] node []{} (eng3);
        \draw[-stealth, shorten >=1pt,auto] (nx3m) to [] node []{} (eng3);

        \path (eng3) edge[loop below] (eng3);

        \draw[-stealth, shorten >=1pt,auto] (o) to [] node []{} (s1);
        \draw[-stealth, shorten >=1pt,auto] (o) to [] node []{} (b1b);

        \end{tikzpicture}
        }%
    \caption{The single-player arena $\arena$ used in the reduction from the \sigmaQBFneg{} problem.}
    \label{fig:PRV-fig-QBF}
\end{figure}

The intuition of the reduction is the following one (see \cref{fig:PRV-fig-QBF}). From $\varphi$, we build two subarenas, arena $\arena_1$ where player~$1$ is ``happy'' except for one particular target $\reach{1}$ and where player~$0$ never visits his target, and arena $\arena_2$ where player~$0$ always visits his target and player~$1$ visits some targets according to which clauses of $\varphi$ are satisfied. There are two types of targets: $(i)$ some to force a specific $val_X$ to be considered and $(ii)$ some to check if a clause $C_i$ is satisfied. In $\arena_1$, any play represents a valuation of $X$, and in $\arena_2$, any play represents a valuation of $X \cup Y$. The idea is to make one clause always false when a valuation $val_X$ is taken in $\arena_2$, in a way that we get an incomparable cost with the same valuation in $\arena_1$.

\begin{proof}[Proof of \cref{theorem:piHard}]
Let us first describe the arena of the game and its target sets. All vertices are owned by player~$1$ who is thus the only one to play. The initial vertex is $v_0$ with an edge to $v_1$ to enter $\arena_1$ and an edge to $v_2$ to enter $\arena_2$.
There are $1 + 2n + k$ target sets $(\reach{1},\reach{x_1},\reach{\neg x_1},\dots,\reach{x_n},\reach{\neg x_n},\reach{C_1},\dots,\reach{C_k})$ for player~$1$ and the target set $\reach{0}$ for player~$0$:
\begin{itemize}
    \item $\reach{0} = \reach{1} = \{v_2\}$,
    \item for all $i \leq n$, $\reach{x_i} = \{x^1_i,x^2_i\}$, $\reach{\neg x_i} = \{\neg x^1_i, \neg x^2_i\}$,
    \item for all $i \leq k$, $\reach{C_i} = \{v_1,\ell_{i,1}^2,\ell_{i,2}^2,\ell_{i,3}^2\}$, where $\ell_{i,j}^2$ is the literal vertex in $\arena_2$ representing the $j$-th literal $\ell_{i,j}$ of $C_i$.
\end{itemize}
From the definition of target sets $\reach{C_i}$, any play $\pi$ in $\arena_2$ corresponds to a valuation of the variables of $\varphi$, where $\reach{C_i}$ is visited if and only if the clause $C_i$ is satisfied.

Let us assume that the instance of the \sigmaQBF{} problem is positive. That is, there exists a valuation $val_X$ of the variables in $X$ such that whatever the valuation $val_Y$ of the remaining variables in $Y$, the formula $\varphi$ is false. Let $\pi$ be the play in $\arena_1$ which corresponds to the valuation $val_X$. It does not visit $\reach{0}$, and $\payoff{\pi} = (+\infty, val_X, 0, \dots, 0)$, where $val_X$ is here expressed as a vector of $2n$ values in $\{0,+\infty\}$ for the targets $\reach{x_1}, \reach{\neg x_1}, \dots, \reach{x_n}, \reach{\neg x_n}$.\footnote{Recall: $0$ in case of a target visit, and $+ \infty$ otherwise.} Let us show that $\pi$ is PO, i.e., no play in $\arena$ has a cost strictly smaller than $\payoff{\pi}$. To avoid plays with incomparable costs, it is enough to consider plays $\pi'$ in $\arena_2$ which correspond to the same valuation $val_X$. Such a play $\pi'$ represents $val_X$ and then some valuation $val_Y$. Since $val_X$ is a positive instance to the \sigmaQBFneg{} problem, for each such play $\pi'$, there exists $i \leq k$ such that $\pi'$ does not visit $\reach{C_i}$, as no valuation of $Y$ together with $val_X$ satisfies $\varphi$. It follows that $\payoff{\pi'}$ is incomparable to $\payoff{\pi}$, as $\pi$ visits $\reach{C_i}$ but $\pi'$ does not, and $\pi'$ visits $\reach{1}$ but $\pi$ does not. We conclude that $\pi$ is PO play that does not visit $\reach{0}$, i.e., with $\cost{0}{\pi} > c =0$.

Let us now assume that there exists a PO play $\pi$ in $\arena$ not visiting $\reach{0}$. It is thus a play in $\arena_1$ that corresponds to a valuation $val_X$ of the variables of $X$ in $\varphi$. And there is no play in $\arena_2$ with a strictly smaller cost. It follows that all play in $\arena_2$ corresponding to the same valuation $val_X$ do not visit at least one target $\reach{C_i}$. Therefore, $val_X$ is a positive instance of the \sigmaQBFneg{} problem, as for all valuations $val_Y$ of $Y$, together $val_X$ and $val_Y$ do not satisfy $\varphi$.
\end{proof}

\subsection{Lower bound for the CPS and UNCPV problems}

We here prove that both \prob{UNCPV} and \prob{CPS} problems are \pspaceHard{} (lower bound of \cref{theorem:pareto-results}.\ref{thm-UNCPV}-\ref{thm-CPS}). 

\begin{theorem} \label{theorem:pspaceHard}
    The \probname{}{C}{P}{S} problem and the \probname{U}{NC}{P}{V} problem are both \pspaceHard{}{}.
\end{theorem}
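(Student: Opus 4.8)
The plan is to reduce from \QBF{} -- which is \pspaceComplete{} and which, as in the proof of \cref{theorem:piHard} (\cref{fig:PRV-fig-QBF}), we may take with a matrix of the form $\neg\varphi$ for a $3$CNF formula $\varphi$. I would first give the reduction to the \prob{CPS} problem; the reduction to the \coprob{UNCPV} problem is identical up to a one-vertex change in the placement of $\reach{0}$ -- using for $\machine{0}$ the trivial Mealy machine that allows every $\strategyfor{0}\in\Sigma_0$, so that by \cref{problem:complement-PO} the \coprob{UNCPV} instance asks exactly for some $\strategyfor{0}$ and some $\strategyfor{0}$-fixed PO play $\pi$ with $\cost{0}{\pi} > c$ -- and it yields \pspaceHard{}ness of \prob{UNCPV} since \pspace{} is closed under complement. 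As all weights will be $0$, the bound already holds for qualitative reachability, and I set $c = 0$.

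Given $\psi = \exists x_1 \cdots \exists x_p\, \forall x_{p+1} \cdots Q_n x_n\, [\neg\varphi]$ with $\varphi = C_1\wedge\cdots\wedge C_k$ in $3$CNF (leading $\exists$-block possibly empty, later quantifiers alternating), I would build a reachability game with players $0$ and $1$, target sets $\reach{1},\reach{C_1},\dots,\reach{C_k}$ for player $1$ and $\reach{0}$ for player $0$, structured as follows. First a \emph{shared prefix} of player-$0$ vertices assigns $x_1,\dots,x_p$ through literal vertices $\ell_i^\top,\ell_i^\bot$; being deterministic, $\strategyfor{0}$ fixes these values. Then a player-$1$ vertex splits into a \emph{reference part} -- a single sink $R$ placed in $\reach{0}$ and in every $\reach{C_j}$ -- and a \emph{checking part}, which begins with a vertex $K\in\reach{1}$ followed by an \emph{alternating binary tree} assigning $x_{p+1},\dots,x_n$, the vertex deciding $x_i$ being owned by player $0$ when $x_i$ is existential and by player $1$ when it is universal, and where passing through the literal vertex for ``$x_i$ true'' (resp.\ ``false'') visits $\reach{C_j}$ for each clause $C_j$ containing $x_i$ (resp.\ $\neg x_i$); leaves loop on themselves. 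The key facts are: a deterministic $\strategyfor{0}\in\Sigma_0$ encodes a value of the leading $\exists$-block together with Skolem functions for the existential variables among $x_{p+1},\dots,x_n$ (the choice deciding $x_i$ may depend on $x_1,\dots,x_{i-1}$, which occur earlier on the branch); the plays consistent with $\strategyfor{0}$ are the unique reference play plus one checking play per universal valuation, the latter having cost $\bar 0$ on the clause coordinates iff its Skolemized valuation satisfies $\varphi$; and the reference play has cost $+\infty$ on $\reach{1}$ and $0$ on every $\reach{C_j}$, so a checking play dominates it exactly when its valuation satisfies $\varphi$. Hence the reference play is $\strategyfor{0}$-fixed PO iff every universal valuation falsifies $\varphi$ against the Skolem functions of $\strategyfor{0}$, i.e.\ iff $\strategyfor{0}$ witnesses $\psi$; and since the reference play always visits $\reach{0}$ and is the only consistent play with $\cost{0}{\pi}\le 0$, the \prob{CPS} instance is positive iff $\psi$ holds. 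For \coprob{UNCPV} one simply moves $\reach{0}$ from $R$ to $K$, so that the reference play becomes the unique consistent play with $\cost{0}{\pi} > 0$, and the same argument applies.

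I expect the main obstacle to be verifying the correctness of this gadget, namely showing that it simultaneously (i) makes a deterministic player-$0$ strategy range faithfully over ``value of the leading $\exists$-block $\times$ Skolem functions'' without allowing ill-formed valuations -- guaranteed by the tree shape; (ii) makes the Pareto-optimality test capture precisely ``$\forall$ universal valuation, $\varphi$ is falsified'' rather than being spuriously satisfied by a checking play that does satisfy $\varphi$ -- this is the role of the incomparability coordinate $\reach{1}$, on which the reference play is worse but is strictly better on the violated clause coordinate, exactly as the $\arena_1/\arena_2$ split and the matching targets did in the proof of \cref{theorem:piHard}; and (iii) ensures the outer existential ``$\exists\pi$'' in the definition of \prob{CPS} (and \coprob{UNCPV}) grants no extra power, which holds because the reference play is unique given $\strategyfor{0}$ and is the only consistent play with the prescribed value of $\cost{0}{\pi}$. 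Once these points are checked, the two equivalences above go through and establish \pspaceHard{}ness of both \prob{CPS} and \prob{UNCPV}.
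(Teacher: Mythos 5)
Your reduction is essentially the one in the paper: the paper also reduces from the ``QBF is false'' variant (its \coQBF{} problem with a 3CNF matrix), builds a game in which an initial player-$1$ vertex chooses between a reference branch (a sink $v_1$ belonging to $\reach{0}$ and to every clause target) and a checking branch that simulates the whole quantifier alternation (with $\reach{1}=\{Q_1\}$ as the incomparability coordinate), argues that the reference play is $\strategyfor{0}$-fixed PO iff every consistent checking play misses some clause target, and obtains the \prob{UNCPV} lower bound by the same one-vertex change (moving $\reach{0}$ into the checking branch). Your $R$, your $K$, and your modification for \coprob{UNCPV} play exactly these roles, and setting $c=0$ with all weights $0$ matches the paper's qualitative reduction.

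The one point that needs care is your extra shared prefix for the leading existential block, which the paper does not have. As written, the rule ``passing through a literal vertex visits the clause targets it satisfies'' is attached only to the alternating tree over $x_{p+1},\dots,x_n$; then a checking play gets no credit for clauses whose only true literal involves $x_1,\dots,x_p$, your key fact (``cost $\bar 0$ on the clause coordinates iff the Skolemized valuation satisfies $\varphi$'') is false, and the backward direction of the equivalence breaks: the reference play could be $\strategyfor{0}$-fixed PO even though some universal valuation satisfies $\varphi$ via a prefix variable, so a positive \prob{CPS} instance would no longer certify $\psi$. The fix is immediate: either also place the prefix literal vertices $\ell_i^\top,\ell_i^\bot$ in the clause targets they satisfy, or drop the shared prefix altogether as the paper does, putting the split before any variable is assigned; the leading existential choices are then simply part of player~$0$'s strategy inside the checking branch, and the reference play still visits every clause target through the sink. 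With that clarification, both equivalences go through and the argument matches the paper's proof.
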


We first prove the theorem for the \prob{CPS} problem. We then explain what to modify to get the reduction for the \coprob{UNCPV} problem.
For this purpose, we need to define a variant of the \QBF{} problem: the \emph{\coQBF{} problem with 3CNF formulas}. An instance of this problem is in the form
\[
    Q_1 x_1 \dots Q_n x_n ~ [\varphi = 0]
\]
where $\varphi$ is a 3CNF formula over the variables $\{x_1,\dots,x_n\}$, and $Q_i \in \{\forall,\exists\}$ for all $i$.

\begin{lemma}
    The \coQBF{} problem with 3CNF formulas is \pspaceComplete{}.
\end{lemma}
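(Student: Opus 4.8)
The statement to prove is that the \coQBF{} problem with 3CNF formulas is \pspaceComplete{}. Membership in \pspace{} is routine: the standard alternating evaluation of a quantified Boolean formula runs in polynomial space, and negating the final acceptance condition (asking whether $\varphi$ evaluates to $0$ rather than $1$ under the quantifier prefix) is still a \pspace{} computation since \pspace{} is closed under complement. So the work is entirely in the hardness direction, and there the plan is to reduce from the ordinary \QBF{} problem with 3CNF formulas, which is well known to be \pspaceComplete{} (via \cite{STOCKMEYER19761}).

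\textbf{The reduction.} Given a \QBF{} instance $Q_1 x_1 \cdots Q_n x_n ~ [\varphi = 1]$ with $\varphi$ in 3CNF, I want to produce an equivalent \coQBF{} instance $Q_1' x_1 \cdots Q_m' x_m ~ [\psi = 0]$ with $\psi$ in 3CNF. The clean way is to flip everything: the negation of $Q_1 x_1 \cdots Q_n x_n\, \varphi$ is $\overline{Q_1} x_1 \cdots \overline{Q_n} x_n\, \neg\varphi$, where $\overline{\exists} = \forall$ and $\overline{\forall} = \exists$. This is the obvious candidate for the complementary instance, but $\neg\varphi$ is a 3DNF formula, not 3CNF, so a direct transcription is not yet in the required normal form. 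The first fix is to put $\neg\varphi$ back into CNF: since $\varphi$ is a conjunction of clauses (disjunctions of $\le 3$ literals), $\neg\varphi$ is a disjunction of terms, and I apply the Tseitin transformation, introducing a fresh variable $z_C$ for each clause $C$ of $\varphi$ together with short clauses encoding $z_C \leftrightarrow (\ell_{C,1}\vee\ell_{C,2}\vee\ell_{C,3})$ and a final clause $\bigvee_C z_C$ (the latter still needs splitting into 3CNF via more auxiliary variables, the standard trick). All of these Tseitin clauses have width $\le 3$. The auxiliary variables $z_C$ (and any introduced for clause-splitting) must be existentially quantified, and — crucially — placed \emph{innermost}, after $x_1,\dots,x_n$, so that they are functionally determined by the $x_i$'s and do not change the truth value of the quantified formula.

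\textbf{Putting it together.} So the \coQBF{} instance is $\overline{Q_1} x_1 \cdots \overline{Q_n} x_n \, \exists z_1 \cdots \exists z_r ~ [\psi = 0]$, where $\psi$ is the 3CNF formula just described, whose satisfying assignments (for fixed $x_1,\dots,x_n$) are exactly the assignments that set each $z_C$ to the truth value of clause $C$ \emph{and} witness that $\neg\varphi$ holds — i.e., $\exists z_1\cdots z_r\, \psi$ is equivalent to $\neg\varphi$ as a function of $x_1,\dots,x_n$. One then checks by induction on the quantifier prefix that $\overline{Q_1} x_1 \cdots \overline{Q_n} x_n\, \neg\varphi$ is true iff $Q_1 x_1 \cdots Q_n x_n\, \varphi$ is false; pushing the innermost existential block for the $z_i$'s in front of $\psi$ is harmless because, as noted, the $z_i$'s are uniquely determined. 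Hence $\psi$ under its prefix evaluates to $0$ iff the original \QBF{} is true. The transformation is clearly polynomial-time computable.

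\textbf{Main obstacle.} The only subtle point is the normal-form bookkeeping: one genuinely has to be careful that negating a 3CNF formula and re-CNF-ing it stays within width $3$ and that the newly introduced variables are quantified in the right place (innermost, existential) so the semantics are preserved; a careless placement of the auxiliary block would change the game's value. Everything else — \pspace{}-membership, the semantic induction over the quantifier alternation, the polynomial-time bound — is standard.
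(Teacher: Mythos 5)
There is a genuine gap, and it lies in the semantics of the target problem rather than in the Boolean manipulations (your \pspace{}-membership argument is fine). In this paper the bracket notation is matrix-level: exactly as in the \sigmaQBFneg{} problem, where $\exists X \, \forall Y~[\varphi = 0]$ asks whether there is a valuation of $X$ such that for every valuation of $Y$ the formula $\varphi$ is \emph{not} satisfied, an instance $Q_1 x_1 \dots Q_n x_n~[\varphi=0]$ of the \coQBF{} problem asks whether $Q_1 x_1 \dots Q_n x_n\,\neg\varphi$ holds, with the prefix read as written (this is also how the instance is used in the game reduction that follows the lemma). It does \emph{not} ask whether the QBF $Q_1 x_1 \dots Q_n x_n\,\varphi$ evaluates to $0$. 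Your reduction is built for the latter reading: you output the prefix $\overline{Q_1} x_1 \cdots \overline{Q_n} x_n\,\exists z_1 \cdots \exists z_r$ and a 3CNF $\psi$ with $\exists z_1 \cdots \exists z_r\,\psi \equiv \neg\varphi$, and you conclude by arguing that the quantified formula with matrix $\psi$ is false iff the original \QBF{} instance is true. Under the paper's semantics, however, the instance you produce asks whether $\overline{Q_1} x_1 \cdots \overline{Q_n} x_n\,\exists z_1 \cdots \exists z_r\,\neg\psi$ holds, and this is true for \emph{every} input: the auxiliary variables are existentially quantified and innermost, and $\psi$ contains the Tseitin clauses tying each $z_C$ to the value of clause $C$, so whatever the $x$-valuation is, the innermost player can falsify one such clause by giving some $z_C$ the wrong value. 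Every produced instance is therefore positive, so the map is not a reduction. (Two smaller slips even under your own reading: the final clause must be $\bigvee_C \neg z_C$, not $\bigvee_C z_C$, since $\neg\varphi$ says some clause of $\varphi$ is false; and the clause $\neg z_C \vee \ell_{C,1} \vee \ell_{C,2} \vee \ell_{C,3}$ has width $4$, so the claim that all Tseitin clauses have width at most $3$ needs the usual extra splitting.)

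Compare with the paper's proof, which needs no Tseitin machinery at all: it keeps the very same 3CNF $\varphi$ and only dualizes the prefix. The instance $\overline{Q_1} x_1 \cdots \overline{Q_n} x_n~[\varphi = 0]$ asks whether $\overline{Q_1} x_1 \cdots \overline{Q_n} x_n\,\neg\varphi$ holds, which is precisely the negation of $Q_1 x_1 \cdots Q_n x_n\,\varphi$; hence the produced \coQBF{} instance is positive iff the given \QBF{} instance is negative, and \pspaceHard{}ness follows because \pspace{} is closed under complement --- a fact you already invoke for membership but then do not exploit for hardness. If you insist on an answer-preserving reduction from \QBF{} itself, you would need $\neg\psi$ (not $\psi$) to express $\neg\varphi$ under the produced prefix, which your placement of the existential auxiliary block does not achieve; the answer-reversing reduction above is the simple fix.
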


\begin{proof}
    The \pspace{}-membership is clear by the same algorithm as for \QBF{}. The \pspaceHard{}ness is obtained through two consecutive reductions. For the first one, given some \QBF{} instance $Q_1 x_1 \dots Q_n x_n ~ [\varphi = 1]$ with $\varphi$ a 3CNF Boolean formula, we simply construct the negation of this instance:
    \[ Q'_1 x_1 \dots Q'_n x_n ~[\neg\varphi = 1], \]
    where $Q'_i = \forall$ (resp.\ $Q'_i = \exists$) if $Q_i = \exists$ (resp.\ $Q_i = \forall$). Notice that $\neg\varphi$ is a 3DNF formula. For the second reduction from the previous instance, we use the property ``$\varphi$ is true if and only if $\neg \varphi$ is false'', hence we construct 
    \[ Q'_1 x_1 \dots Q'_n x_n ~[\neg(\neg\varphi) = 0].\]
    This concludes the proof, by $\neg(\neg\varphi) = \varphi$ and by construction.
\end{proof}

\begin{proof}[Proof of \Cref{theorem:pspaceHard}]

We begin with a reduction from the \coQBF{} problem to the \prob{CPS} problem (with threshold $c=0$).
Let $Q_1 x_1 \dots Q_n x_n \; [\varphi = 0]$ be a \coQBF{} instance with $\varphi = C_1 \wedge \dots \wedge C_n$ a 3CNF formula on the set of variables $\{x_1,\dots,x_n\}$ such that $C_i = (\ell_{i,1} \lor \ell_{i,2} \lor \ell_{i,3})$ for all $i \leq n$. We construct an arena $\arena$ as shown in \cref{fig:qbf-coop-pareto-synthesis} with $n+1$ target sets $(\reach{1},\reach{C_1},\dots,\reach{C_n})$ for the environment: 
\begin{itemize}
    \item $V_0 = \{Q_i \mid Q_i = \exists\}$, $V_1 = V \ssetminus V_0$, and $v_0$ is the initial vertex,
    \item $\reach{0} = \{v_1\}$,
    \item $\reach{1} = \{Q_1\}$, $\reach{C_i} = \{v_1,\ell_{i,1},\ell_{i,2},\ell_{i,3}\}$ for all $i \leq n$.
\end{itemize}

\begin{figure}
    \centering
    \resizebox{.7\textwidth}{!}{%
        \begin{tikzpicture}
        
        \node[draw, rectangle, minimum size=0.8cm] (o) at (1.7,5){$v_0$};
        
        \node[draw, rectangle, minimum size=0.8cm] (b1bb) at (0,5){$v_1$};

        \draw[dashed, rounded corners] (2.9,3.25) rectangle (10.1, 6.75)  {};
        \node[] at (10.1 - 0.4, 6.75 - 0.3) {$\arena_1$};

        \node[draw, rounded corners, minimum size=0.8cm] (b1b) at (3.5,5){$Q_1$};
        \node[draw, rectangle, minimum size=0.8cm] (x11b) at (4.5,4){$\neg x_1$};
        \node[draw, rectangle, minimum size=0.8cm] (nx11b) at (4.5,6){$x_1$};
        \node[draw, rounded corners, minimum size=0.8cm] (b2b) at (5.5,5){$Q_2$};
        \node[minimum size=0.8cm] (b2i1b) at (6.5,4){$\dots$};
        \node[minimum size=0.8cm] (b2i2b) at (6.5,6){$\dots$};
        \node[draw, rounded corners, minimum size=0.8cm] (bmb) at (7.5,5){$Q_n$};
        \node[draw, rectangle, minimum size=0.72cm, inner sep=0.05pt] (x1mb) at (8.5,4){$\neg x_n$};
        \node[draw, rectangle, minimum size=0.8cm] (nx1mb) at (8.5,6){$x_n$};
        \node[draw, rectangle, minimum size=0.8cm] (end1b) at (9.5,5){};

        \draw[-stealth, shorten >=1pt,auto] (b1b) to [] node []{} (x11b);
        \draw[-stealth, shorten >=1pt,auto] (b1b) to [] node []{} (nx11b);
        \draw[-stealth, shorten >=1pt,auto] (x11b) to [] node []{} (b2b);
        \draw[-stealth, shorten >=1pt,auto] (nx11b) to [] node []{} (b2b);
        \draw[-stealth, shorten >=1pt,auto] (b2b) to [] node []{} (b2i1b);
        \draw[-stealth, shorten >=1pt,auto] (b2b) to [] node []{} (b2i2b);
        \draw[-stealth, shorten >=1pt,auto] (b2i1b) to [] node []{} (bmb);
        \draw[-stealth, shorten >=1pt,auto] (b2i2b) to [] node []{} (bmb);
        \draw[-stealth, shorten >=1pt,auto] (bmb) to [] node []{} (x1mb);
        \draw[-stealth, shorten >=1pt,auto] (bmb) to [] node []{} (nx1mb);
        \draw[-stealth, shorten >=1pt,auto] (x1mb) to [] node []{} (end1b);
        \draw[-stealth, shorten >=1pt,auto] (nx1mb) to [] node []{} (end1b);

        \path (end1b) edge[loop below] (end1b);

        \draw[-stealth, shorten >=1pt,auto] (o) to [] node []{} (b1bb); 
        \draw[-stealth, shorten >=1pt,auto] (b1bb) to [loop left] node []{} (b1bb);

        \draw[-stealth, shorten >=1pt,auto] (o) to [] node []{} (b1b);

        \end{tikzpicture}
        }%
    \caption{The arena used in the reduction from \coQBF{}.}
    \label{fig:qbf-coop-pareto-synthesis}
\end{figure}
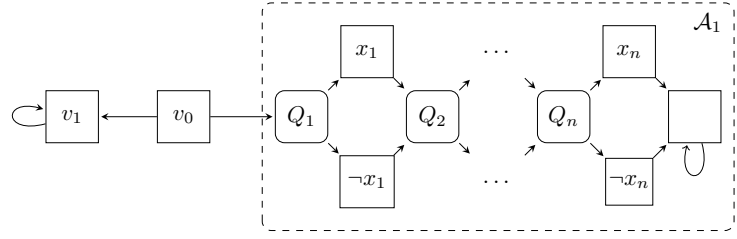

Suppose first that the \coQBF{} instance is positive. Let player~$0$ mimic the behavior of the existential player as a strategy $\strategyfor{0}$ in $\arena_1$ and let us show that the play $\pi = v_0(v_1)^\omega$ with $\cost{0}{\pi} = 0$ and $\payoff{\pi} = (+\infty,0,\dots,0)$, is $\strategyfor{0}$-fixed PO. Any play $\pi'$ in $\arena_1$ consistent with $\strategyfor{0}$ represents a valuation of $\varphi$. By hypothesis, it makes $\varphi$ unsatisfiable, i.e., there is an unsatisfied clause $C_i$, meaning that $\reach{C_i}$ is not visited by $\pi'$. The cost $\payoff{\pi'}$ is incomparable to $\payoff{\pi}$, as $\reach{C_i}$ is visited in $\pi$ and not in $\pi'$, while $\reach{1}$ is visited in $\pi'$ and not in $\pi$.

Let us suppose that the \prob{CPS} instance is positive. Then there exists $\strategyfor{0} \in \Sigma_0$ and a $\strategyfor{0}$-fixed PO play $\pi$ where $\cost{0}{\pi} = 0$. Necessarily, $\pi = v_0(v_1)^\omega$, as this is the only play with a zero cost for player~$0$. By hypothesis on $\strategyfor{0}$, for any play $\pi'$ consistent with $\strategyfor{0}$ and ending in $\arena_1$, it is not the case that $\payoff{\pi'} < \payoff{\pi}$. But since $\reach{1}$ is visited by $\pi'$ but not by $\pi$, $\payoff{\pi'}$ must be incomparable with $\payoff{\pi}$, i.e., it cannot visit $\reach{C_i}$, for some $i$. Therefore, for all such plays $\pi$', i.e., for all responses to the universal player in $\varphi$, there exists $i \leq n$ such that $C_i$ is not satisfied, thus making $\varphi$ unsatisfied.

Finally, let us adapt the previous proof to the \coprob{UNCPV} problem. We construct the same game except that $\reach{0} = \{Q_1\}$. The proof remains the same as the play $\pi = v_0(v_1)^\omega$ has now $\cost{0}{\pi} = +\infty$.
\end{proof}

\section{Characterization of Nash Equilibria outcomes}\label{appendix:ne-characterization}

For some results of this paper, we need to recall a characterization of NE outcomes for quantitative reachability~\cite{DBLP:journals/jcss/BrihayeBGT21}. This section aims to recall this characterization based on values of some particular two-player zero-sum games. See the survey~\cite{Bru21} for more details about this characterization for different kinds of objectives.

Given a reachability game $\game$ and a fixed strategy $\sigma_0$ of player~$0$, we denote by \ValStarFunc{}$: V \rightarrow \N \cup \{+\infty\}$ the function giving, for any vertex $v \in V_i$, $i \in \Players \ssetminus \{0\}$, the \emph{value} of player~$i$ in the zero-sum setting of player~$i$ opposed to the \emph{coalition} of the other players, which is the lowest cost, with resepct to $w_i$, that player~$i$ can ensure against this coalition. These values can be computed in polynomial time through a value iteration algorithm, and belong to $\{0,\dots,|V|W,+\infty\}$~\cite{DBLP:conf/lfcs/BrihayePS13,DBLP:journals/acta/BrihayeGHM17,DBLP:journals/mst/KhachiyanBBEGRZ08}.

Formally, to define the value $\ValPlayer{i}{v}$ of a vertex $v$ for player~$i$ against the coalition $\Players \ssetminus \{0,i\}$ denoted $-i$, we first need to define the \emph{lower value} $\lowerVal{i}{v}$ and the \emph{upper value} $\upperVal{i}{v}$ of a vertex $v$ in the following way:
\[\lowerVal{i}{v}=\sup_{\strategyfor{-i}}\; \inf_{\strategyfor{i}}\; \cost{i}{\outcomefrom{\sigma_0,\strategyfor{i},\strategyfor{-i}}{v}},
\\
\upperVal{i}{v}=\inf_{\strategyfor{i}}\; \sup_{\strategyfor{-i}}\; \cost{i}{\outcomefrom{\sigma_0,\strategyfor{i},\strategyfor{-i}}{v}}.
\]
Then, when for every vertex $v$, $\lowerVal{i}{v} = \upperVal{i}{v}$ holds, the game is said to be \emph{determined} and the value is $\ValPlayer{i}{v} = \upperVal{i}{v} = \lowerVal{i}{v}$. This is the case for our reachability games~\cite{DBLP:journals/jcss/BrihayeBGT21}. Then, for each vertex $v$, we define $\ValStar{v} = \ValPlayer{i}{v}$ for the player~$i$ such that $v \in V_i$.

\begin{definition}\label{def:visit-lambda-consistency}
A play $\pi=\pi_0\pi_1\dots$ is \emph{Visit \ValStarFunc{}-consistent for a set of players~$I \subseteq \Players$} if for all $i\in I$ and for all $n\in\N$, we have $\left(\pi_n \in V_i \land i \not\in \Visit{\pi_{<n}}\right) \Rightarrow \cost{i}{\pi_{\geq n}} \leq \ValStar{\pi_n}.$
\end{definition}

The next result is a direct corollary of \cite[Theorem~15]{DBLP:journals/jcss/BrihayeBGT21} stating the characterization of NE outcomes. We simply ignore values of player~$0$ to get the equivalent result for \fixedEquilibria{}s.

\begin{proposition}[\cite{DBLP:journals/jcss/BrihayeBGT21}]\label{proposition:fixedEquilibriaChar}
Let $\pi$ be a play. Then $\pi$ is the outcome of a \fixedEquilibria{} if and only if $\pi$ is Visit \ValStarFunc{}-consistent for the set of players~$\Players \ssetminus \{0\}$.
\end{proposition}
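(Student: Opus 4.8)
The plan is to obtain Proposition~\ref{proposition:fixedEquilibriaChar} as a corollary of the known characterisation of Nash equilibria outcomes for quantitative reachability, namely \cite[Theorem~15]{DBLP:journals/jcss/BrihayeBGT21}, which asserts that a play is an NE outcome of $\game$ if and only if it is Visit \ValStarFunc{}-consistent for the whole player set $\Players$. The only adaptation needed is that here player~$0$'s strategy is fixed and player~$0$ is excluded from the equilibrium condition. First I would recast the $\strategyfor{0}$-fixed setting as an ordinary game: let $\initgame{\strategyfor{0}}$ be obtained by unfolding $\game$ along $\strategyfor{0}$, i.e.\ at every history ending in $V_0$ only the move prescribed by $\strategyfor{0}$ is kept. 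In $\initgame{\strategyfor{0}}$ player~$0$ has no genuine choice, its strategy profiles are exactly the $\strategyfor{0}$-fixed profiles of $\game$, and the zero-sum value of player~$i\in\Players\ssetminus\{0\}$ against the coalition $\Players\ssetminus\{0,i\}$ coincides by construction with $\ValStar{\cdot}$ as defined in \cref{appendix:ne-characterization}. Applying \cite[Theorem~15]{DBLP:journals/jcss/BrihayeBGT21} to $\initgame{\strategyfor{0}}$ and dropping the (vacuous) constraints indexed by player~$0$ yields exactly the stated equivalence for $\Players\ssetminus\{0\}$.

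If a self-contained argument is wanted, I would reprove both implications directly, using that the zero-sum games defining $\ValStar{\cdot}$ (player~$i$ versus the coalition $\Players\ssetminus\{0,i\}$, player~$0$ playing $\strategyfor{0}$) are determined and admit optimal strategies, as recalled in \cref{appendix:ne-characterization}. For the forward implication, assume $\strategyProfile$ is a \fixedEquilibriaStrategy{} with outcome $\pi$ and that consistency fails at some $(i,n)$ with $\pi_n\in V_i$, $i\notin\Visit{\pi_{<n}}$ and $\cost{i}{\pi_{\geq n}}>\ValStar{\pi_n}$; then player~$i$ can deviate by following $\strategyfor{i}$ along $\pi_{\leq n}$ and then switching to an optimal strategy guaranteeing cost at most $\ValStar{\pi_n}$ from $\pi_n$, producing an outcome $\pi'$ with $\pi'_{\leq n}=\pi_{\leq n}$, hence $\cost{i}{\outcomefrom{\strategyProfile'}{v_0}}=w_i(\pi_{\leq n})+\cost{i}{\pi'_{\geq n}}<w_i(\pi_{\leq n})+\cost{i}{\pi_{\geq n}}=\cost{i}{\outcomefrom{\strategyProfile}{v_0}}$, contradicting the equilibrium property. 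For the backward implication, given $\pi$ Visit \ValStarFunc{}-consistent for $\Players\ssetminus\{0\}$, I would let player~$0$ play $\strategyfor{0}$ and give each follower a trigger strategy: follow $\pi$ while nobody deviates, and as soon as some follower $j$ first deviates at a position $n$ (necessarily $\pi_n\in V_j$), all members of $\Players\ssetminus\{0,j\}$ switch to a coalition strategy that from $\pi_n$ keeps player~$j$'s cost at least $\ValStar{\pi_n}$ whatever $j$ does; its player-$0$ component is $\strategyfor{0}$, so it is consistent with the announced strategy. Any such deviation then costs player~$j$ at least $w_j(\pi_{\leq n})+\ValStar{\pi_n}\geq w_j(\pi_{\leq n})+\cost{j}{\pi_{\geq n}}=\cost{j}{\pi}$ when $j\notin\Visit{\pi_{<n}}$, and cannot decrease its already-fixed cost otherwise, so the profile is a \fixedEquilibriaStrategy{} with outcome $\pi$.

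The step I expect to be the most delicate is the transfer of \cite[Theorem~15]{DBLP:journals/jcss/BrihayeBGT21} — and of the underlying determinacy and existence of optimal strategies — across the fixing of $\strategyfor{0}$: since $\strategyfor{0}$ may use infinite memory, $\initgame{\strategyfor{0}}$ is in general an infinite-state arena, so one must check that these results do not rely on the finiteness of the arena. They do not, since quantitative reachability objectives are Borel (each threshold condition $\cost{i}{\cdot}<d$ depends only on finite prefixes) and the corresponding value theory goes through on arbitrary arenas. The remaining care is pure bookkeeping: treating separately whether the deviating player has already visited its target before the deviation point, and, when it has not, splitting $\cost{i}{\pi}$ as $w_i(\pi_{\leq n})+\cost{i}{\pi_{\geq n}}$.
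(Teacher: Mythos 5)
Your forward implication essentially goes through (an optimal strategy of player~$i$ in the zero-sum game against the full coalition secures at most $\ValStar{\pi_n}$ against \emph{any} behaviour of the others, in particular against the actual profile), but your overall reading of the statement — and hence your backward implication — departs from what the paper proves. In the paper, \cref{proposition:fixedEquilibriaChar} is a direct corollary of \cite[Theorem~15]{DBLP:journals/jcss/BrihayeBGT21} applied to the original \emph{finite} game: the characterization of ordinary NE outcomes is invoked and the consistency constraints of player~$0$ are simply dropped. Crucially, $\strategyfor{0}$ is not fixed in advance: ``$\pi$ is the outcome of a \fixedEquilibria{}'' quantifies existentially over the whole profile, player~$0$'s strategy included, and $\ValStar{v}$ is the value of player~$i$ against the coalition of \emph{all} the other players on the finite arena — which is precisely what makes \ValStarFunc{} a vertex function computable by value iteration, as the CNS and (U)NCNV algorithms require. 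In the $\Leftarrow$ direction the equilibrium is built by letting every player, player~$0$ included, follow $\pi$ and switch to an optimal punishing strategy after a deviation; player~$0$'s component of this punishing profile \emph{is} the strategy $\strategyfor{0}$ whose existence the statement asserts.

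By fixing $\strategyfor{0}$ upfront and unfolding, you prove a different, $\strategyfor{0}$-relative statement, and the step that fails for the proposition as used in the paper is exactly your trigger construction: you need the coalition $\Players\ssetminus\{0,j\}$, together with the \emph{given} $\strategyfor{0}$, to force a deviating player~$j$ up to $\ValStar{\pi_n}$, which is false for an arbitrary announced $\strategyfor{0}$ (it may punish badly, so the $\strategyfor{0}$-constrained value can be strictly below $\ValStar{\pi_n}$). Relatedly, your claim that the values of $\initgame{\strategyfor{0}}$ ``coincide by construction'' with \ValStarFunc{} is unjustified: for a memoryful $\strategyfor{0}$ the values in the unfolding are history-dependent, not functions of the vertex of $\game$. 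Finally, the detour through the (possibly infinite-state) unfolding forces you to re-justify determinacy, attainment of the values and Theorem~15 itself beyond finite arenas — none of which is needed when, as in the paper, the known theorem is applied to the finite game $\game$ and the player-$0$ conditions are discarded.
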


\section{Complexity of the CNS and (U)NCNV Problems}\label{appendix:other-results-nash-2p-parikh}

We study in this section the \prob{(U)NCNV} and \prob{CNS} problems (\cref{theorem:Nash-results}.\ref{thm-NCNV}-\ref{thm-CNS}).

Some results are easily obtained as there already exist similar approaches for closely related problems. This is the case for the \prob{CNS} problem. For the particular case of qualitative reachability, it is proved in~\cite{ConduracheFGR16} that the \prob{CNS} problem is \npComplete{} (by using automata techniques). This result is extended in~\cite{grandmont23} for arenas where the weights on the edges are all equal to $1$, by using~\cref{proposition:fixedEquilibriaChar}.

We also show two proofs of hardness for the \prob{CNS} and \prob{UNCNV} problems, that already hold when the environment is limited to one player. Then, we see how to adapt the work from~\cite{DBLP:conf/mfcs/BriceRB23}, where the authors proved the \conpComplete{}ness of the \prob{(U)NCNV} problems when all weights on the edges are equal to~$1$. The adaptation is nontrivial and requires the use of Parikh automata for the same reason as in~\cref{section:pareto}.

\begin{theorem}\label{theorem:cns-np-complete}
    The \probname{}{C}{N}{S} problem for reachability games is \npComplete{}, even with one-player environments.
\end{theorem}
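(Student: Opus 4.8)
The plan is to establish membership in \np{} and \np{}-hardness separately. For the upper bound, I would follow exactly the three-step scheme laid out in \cref{section:pareto} for the Pareto problems, but specialized to NE rationality, where \cref{pareto-synthesis-step3} becomes unnecessary. Concretely: first guess a lasso $\pi = \mu(\nu)^\omega$ witnessing a \fixedEquilibriaStrategy{} outcome with $\cost{0}{\pi} \leq c$. The crucial point is that such a witness can be taken of polynomial size. Here we only need the case $\cost{0}{\pi} \leq c$ of \cref{lem:witness_size_pareto} (adapted to $t$ players rather than $t$ targets), so cycle elimination along $\pi$ between consecutive first visits to target sets yields $|\mu\nu|$ bounded polynomially by $(t+2)|V|$, with all relevant costs bounded by $(t+2)|V|W$ — hence representable in polynomial space even for binary-encoded weights. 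So, unlike the \coprob{(U)NCPV} case, no Parikh automata are needed and the lasso is guessed explicitly. Second, compute $\payoff{\pi}$, $\cost{0}{\pi}$ and check $\cost{0}{\pi} \leq c$ in polynomial time. Third, verify that $\pi$ really is a \fixedEquilibriaStrategy{} outcome: by \cref{proposition:fixedEquilibriaChar}, it suffices to check that $\pi$ is Visit \ValStarFunc{}-consistent for $\Players \ssetminus \{0\}$ — and this is a condition purely on $\pi$, requiring only the values $\ValStar{v}$, which are computable in polynomial time by value iteration and lie in $\{0,\dots,|V|W,+\infty\}$. Checking \cref{def:visit-lambda-consistency} along the lasso reduces to finitely many inequalities among these values and the accumulated costs along $\mu(\nu)$, all doable in polynomial time. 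A subtlety to address carefully: the strategy $\strategyfor{0}$ itself need not be guessed or exhibited — the existence of a \fixedEquilibriaStrategy{} outcome $\pi$ is equivalent to the existence of some $\strategyfor{0}$ with $\pi$ consistent with it and $\pi$ being an NE outcome, and the characterization of \cref{proposition:fixedEquilibriaChar} already encapsulates the required behavior of the coalition against each deviator; one just needs to observe that the witnessing $\strategyfor{0}$ can be reconstructed from the values (play along $\pi$, and off $\pi$ play according to the optimal coalition strategies in the relevant zero-sum games).

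For the lower bound, the cleanest route is to reuse a known reduction for the qualitative case: \cite{ConduracheFGR16} proves \np{}-hardness of \prob{CNS} for qualitative reachability, and qualitative reachability games are a special case of our setting (all weights zero, threshold $c = 0$). So \np{}-hardness is inherited immediately; the only thing to verify is that the reduction of \cite{ConduracheFGR16} — or the variant in \cite{grandmont23} — already uses a \emph{one-player} environment, which the theorem statement asserts. I would either cite that the existing reduction has this property, or, if it does not, give a self-contained reduction from \textsc{Sat}: build an arena where from the initial vertex the single environment player chooses a truth assignment by walking through variable gadgets, then the system verifies clause satisfaction; arrange the target sets and weights so that a \fixedEquilibria{} outcome giving the system cost $\leq c$ exists iff the formula is satisfiable. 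The main obstacle is precisely this bookkeeping in the hardness reduction — ensuring the one-player-environment restriction holds and that the NE constraint does not accidentally force or forbid assignments in unintended ways (one must check no environment deviation is profitable along the intended outcome, using \cref{proposition:fixedEquilibriaChar}). The upper bound, by contrast, is routine once \cref{lem:witness_size_pareto} and \cref{proposition:fixedEquilibriaChar} are in hand: the entire algorithm is ``guess a polynomial lasso, verify Visit \ValStarFunc{}-consistency and the cost bound in polynomial time.''
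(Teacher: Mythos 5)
Your upper bound is essentially the paper's proof: the paper also guesses a lasso of length polynomial in $|V|$ and $|\Players|$ (obtained by the same cycle-elimination argument as in \cref{lem:witness_size_pareto}), computes the values $\ValStar{v}$ in polynomial time, and checks the characterization of \cref{proposition:fixedEquilibriaChar} directly on the lasso, with no need to exhibit $\strategyfor{0}$ explicitly; your handling of that subtlety is exactly the intended one.

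The gap is in the lower bound, specifically in the clause ``even with one-player environments.'' Inheriting hardness from the qualitative result of~\cite{ConduracheFGR16} (or~\cite{grandmont23}) only gives \npHard{}ness for multi-player environments: those reductions use several environment players, and they cannot be made one-player, because with a single environment player and qualitative reachability there are only two target sets $\reach{0},\reach{1}$, and the cooperative question (find a Visit \ValStarFunc{}-consistent play reaching $\reach{0}$) reduces to a couple of graph reachability checks after computing player~$1$'s winning region, hence is solvable in polynomial time. So the one-player case is exactly where binary-encoded weights must be exploited, and your fallback is not optional but the heart of the matter. Moreover, the fallback as sketched (environment walks through variable gadgets, ``the system verifies clause satisfaction'') does not go through: with one environment player you cannot encode ``all clauses satisfied'' via target sets (there are only two), and an arena game has no memory of the chosen assignment, so clause verification would have to be pushed into the weights—at which point you are no longer reducing from \textsc{Sat} but from a number problem. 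This is what the paper does: it reduces from the bi-partition (\textsc{Partition}) problem on a small chain arena where, for each $a_i$, player~$0$ routes the weight $a_i$ either to his own cost or to player~$1$'s cost, player~$1$ has an initial escape to a vertex $L$ giving him cost $T/2$, the targets are $\reach{0}=\{R\}$, $\reach{1}=\{L,R\}$, and the threshold is $c=T/2$; the NE constraint then forces $\cost{0}{\pi}=\cost{1}{\pi}=T/2$, i.e., an equal-sum partition. To complete your proof you would need to replace the \textsc{Sat} sketch by such a weight-based reduction (or work out an equivalent one in full detail).
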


To prove this theorem, we will use the characterization of $0$-fixed NE outcomes from \cref{appendix:ne-characterization}.

\begin{proof}[Proof of the upper bound of \Cref{theorem:cns-np-complete}]
    We use the same approach as in \cref{lem:witness_size_pareto} to show that, given a Visit \ValStarFunc{}-consistent play $\pi$ for the set of players~$\Players \ssetminus \{0\}$, we can build a Visit \ValStarFunc{}-consistent play $\pi' = \mu(\nu)^\omega$ for~$\Players \ssetminus \{0\}$ where $|\mu\nu|$ is polynomial in $|V|$ and $|\Players|$. Thus, we first compute in polynomial time the value of each vertex. Then we guess a lasso $\pi' = \mu(\nu)^\omega$ as above and use \cref{proposition:fixedEquilibriaChar} to check in time polynomial in $|V|$ and $|\Players|$ whether there exists a strategy $\strategyfor{0}$ such that $\pi$ is a \fixedEquilibriaStrategy{} since values and weights along $\pi$ are given in binary.
\end{proof}

We now prove the lower bound of \Cref{theorem:cns-np-complete}, that already holds for one-player environments, with a reduction from the \emph{bi-partition problem} which is known to be \npComplete{}~\cite{Karp1972}. Given $S = \{a_1,\dots,a_n\}$ a subset of $\N$, the bi-partition problem asks whether there exists $A,B \subseteq S$ such that $A$ and $B$ is a partition of $S$ and $\sum_{a_i \in A} a_i = \sum_{a_i \in B} a_i$.
Notice that if the answer to this problem is positive, then $T = \sum_{a_i \in S} a_i$ is even. Therefore we can suppose that $T$ is always even, which allows us to work with $T/2 \in \N$. 

\begin{figure}[htbp]
  \centering
  \begin{minipage}[b]{0.48\textwidth}
    \centering
    \begin{tikzpicture}[automaton,scale=0.6,every node/.style={scale=0.6},node distance=1.2]
      \node[initial,initial where=above,state,environment] (v0)           {$v_{0}$};
      \node[state,system]      (T1)  [left=1.6cm of v0]    {$L$};
      \node[state,system]      (v1)  [right=of v0]  {$v_{1}$};
      \node                    (dots)[right=of v1]  {$\dots$};
      \node[state,system]      (T)   [right=of dots] {$R$};

      \path (T1)  edge[loop above] node[above] {}       (T1)
            (v0)  edge             node[above] {$(0,T/2)$}     (T1)
                  edge             node[above] {$(0,0)$} (v1)
            (v1)  edge[bend left]  node[above] {$(a_1,0)$} (dots)
                  edge[bend right] node[below] {$(0,a_1)$} (dots)
            (dots)edge[bend left]  node[above] {$(a_n,0)$} (T)
                  edge[bend right] node[below] {$(0,a_n)$} (T)
            (T)   edge[loop above] node[above] {}          (T);
    \end{tikzpicture}
    \caption{Reduction for the \npHard{}ness for the \prob{CNS} problem with a one-player environment}
    \label{fig:np-hardness-NCS-weighted-2p}
  \end{minipage}
  \hfill
  \begin{minipage}[b]{0.48\textwidth}
    \centering
    \begin{tikzpicture}[automaton,scale=0.6,every node/.style={scale=0.55},node distance=1.2]
      \node[initial,initial where=above,state,environment] (v0)           {$v_{0}$};
      \node[state,system]      (T1)  [left=2.1cm of v0]    {$L$};
      \node[state,system]      (v1)  [right=of v0]  {$v_{1}$};
      \node                    (dots)[right=of v1]  {$\dots$};
      \node[state,system]      (T)   [right=of dots] {$R$};

      \path (T1)  edge[loop above] node[above] {}       (T1)
            (v0)  edge             node[above] {$(0,T/2 \cdot (2n-1))$}     (T1)
                  edge             node[above] {$(0,0)$} (v1)
            (v1)  edge[bend left]  node[above] {$(a_1,T)$} (dots)
                  edge[bend right] node[below] {$(0,T-a_1)$} (dots)
            (dots)edge[bend left]  node[above] {$(a_n,T)$} (T)
                  edge[bend right] node[below] {$(0,T-a_n)$} (T)
            (T)   edge[loop above] node[above] {}          (T);
    \end{tikzpicture}
    \caption{Reduction for the \npHard{}ness for the \coprob{UNCNV} problem with a one-player environment}
    \label{fig:np-hardness-UNCNV-weighted-2p}
  \end{minipage}
\end{figure}

\begin{proof}[Proof of the lower bound of \cref{theorem:cns-np-complete}]
\Cref{fig:np-hardness-NCS-weighted-2p} shows the arena used for the reduction from the bi-partition problem to the \prob{CNS} problem with the threshold $c = T/2$. The target sets for player~$0$ and player~$1$ are respectively $\reach{0} = \{R\}$ and $\reach{1} = \{L,R\}$.

Suppose that the instance of the bi-partition problem is positive. Player~$0$ can set a strategy $\strategyfor{0}$ such that the play $\pi = v_0v_1\dots (R)^\omega$ satisfies $\cost{0}{\pi} = \cost{1}{\pi} = T/2$. Notice that $\cost{0}{\pi} \leq c$ and that $\pi$ is a \fixedEquilibriaStrategy{} outcome, because if player~$1$ deviates at $v_0$, then the resulting $\pi' = v_0 (L)^\omega$ has $\cost{1}{\pi'} = T/2$.

Suppose now that there exists a \fixedEquilibria{} outcome $\pi$ with $\cost{0}{\pi} \leq T/2$. Necessarily, it goes to the right and reaches $R$, since otherwise $\cost{0}{\pi} = +\infty$. Moreover, $\cost{1}{\pi} \geq T/2$ as $\cost{0}{\pi} + \cost{1}{\pi} = T$. Finally, since $\pi$ is a \fixedEquilibria{} outcome, player~$1$ has no incentive to deviate to the left from $v_0$, i.e., $\cost{1}{\pi} \leq T/2$. It follows that  $\cost{0}{\pi} = \cost{1}{\pi} = T/2$.
\end{proof}

With small adaptations to the previous reduction, we get the \conpHard{}ness of the \prob{UNCNV} problem.

\begin{corollary}
    The \probname{U}{NC}{N}{V} problem for reachability games is \conpHard{}, even with one-player environments.
\end{corollary}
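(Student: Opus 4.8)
The plan is to reduce the (\npComplete{}) bi-partition problem to the complement of the \prob{UNCNV} problem, which immediately yields \conpHard{}ness of \prob{UNCNV}; this mirrors the reduction just used for \prob{CNS}, with only the weights, the target set $\reach{0}$, and the threshold changed, and with player~$0$'s strategy choice now governed by the trivial one-memory-state nondeterministic Mealy machine $\machine{0}$ satisfying $\llbracket \machine{0} \rrbracket = \Sigma_0$ (so that at each gadget vertex player~$0$ may choose either successor). The complement instance asks whether there exist $\strategyfor{0} \in \llbracket \machine{0} \rrbracket$ and a \fixedEquilibriaStrategy{} $\sigma$ with $\cost{0}{\outcomefrom{\sigma}{v_0}} > c$. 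Given $S = \{a_1,\dots,a_n\} \subseteq \N$ with $T = \sum_i a_i$: since $T$ must be even for a positive bi-partition instance, I would map odd-$T$ inputs (and the trivial input $T = 0$) to fixed negative (resp.\ positive) instances of the complement problem, and otherwise assume $T \geq 2$ even, set $c = T/2 - 1 \in \N$, and build the game of \cref{fig:np-hardness-UNCNV-weighted-2p} (the arena of \cref{fig:np-hardness-NCS-weighted-2p} reweighted), with $v_0$ owned by player~$1$, all other vertices owned by player~$0$, and target sets $\reach{0} = \{L,R\}$ and $\reach{1} = \{L,R\}$.

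Next I would record the gadget arithmetic. Player~$0$'s strategies correspond bijectively to subsets $A \subseteq S$ (take the ``up'' edge of gadget $i$ iff $a_i \in A$); let $\pi_A$ be the play consistent with the corresponding $\strategyfor{0}$ that enters the chain, so that $\cost{0}{\pi_A} = \sum_{a_i \in A} a_i$ and $\cost{1}{\pi_A} = (n-1)T + \sum_{a_i \in A} a_i$ (each of the $n$ gadget edges contributes $T$ on the chosen ``up'' ones and $T - a_i$ on the ``down'' ones), while the escape play $\pi_L = v_0(L)^\omega$ has $\cost{0}{\pi_L} = 0$ (as $L \in \reach{0}$ is reached with zero accumulated $w_0$) and $\cost{1}{\pi_L} = (2n-1)T/2$. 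Since player~$1$'s only decision point is $v_0$ and the remainder of every play is forced by the announced $\strategyfor{0}$, we have $\ValStar{v_0} = \min\{(2n-1)T/2,\ (n-1)T + \sum_{a_i \in A} a_i\}$; applying the characterization of \fixedEquilibria{} outcomes (\cref{proposition:fixedEquilibriaChar}) at $v_0$ then shows, for a fixed $\strategyfor{0}$ with subset $A$, that $\pi_A$ is a \fixedEquilibriaStrategy{} outcome iff $\sum_{a_i \in A} a_i \leq T/2$ and that $\pi_L$ is one iff $\sum_{a_i \in A} a_i \geq T/2$.

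I would then conclude that the maximum of $\cost{0}{\pi}$ over \fixedEquilibriaStrategy{} outcomes $\pi$, for a given $\strategyfor{0}$ with subset $A$, equals $\sum_{a_i \in A} a_i$ when $\sum_{a_i \in A} a_i \leq T/2$ and equals $0$ otherwise, so it exceeds $c = T/2 - 1$ exactly when $\sum_{a_i \in A} a_i = T/2$. Quantifying over $\strategyfor{0} \in \llbracket \machine{0} \rrbracket$, the complement-\prob{UNCNV} instance is therefore positive iff some subset of $S$ has sum $T/2$, i.e., iff $S$ admits a bi-partition. This gives the claimed reduction and hence \conpHard{}ness of \prob{UNCNV}, already with a one-player environment.

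The step I expect to require the most care is the \fixedEquilibria{}-outcome analysis at $v_0$: the weights in \cref{fig:np-hardness-UNCNV-weighted-2p} are tuned so that player~$1$ switches from entering the chain to escaping to $L$ exactly at $\sum_{a_i \in A} a_i = T/2$, and is \emph{indifferent} between the two at that threshold --- which is precisely what keeps the ``bad'' play $\pi_A$ (with $\cost{0}{\pi_A} = T/2$) alive as a \fixedEquilibria{} outcome exactly when a valid partition exists. This also explains why the target of the reduction must be the \emph{universal} problem \prob{UNCNV} with a nondeterministic $\machine{0}$: bi-partition quantifies existentially over the subset $A$, and with a deterministic $\machine{0}$ that subset would already be pinned down by the reduction itself.
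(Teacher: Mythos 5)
Your proposal is correct and follows essentially the same reduction as the paper: the same reweighted bi-partition arena of \cref{fig:np-hardness-UNCNV-weighted-2p} with $\reach{0}=\reach{1}=\{L,R\}$ and $c = T/2-1$, with the equivalence ``some $\strategyfor{0}$ admits a \fixedEquilibriaStrategy{} outcome of player-$0$ cost $>c$ iff some subset sums to $T/2$'' established by the same indifference-at-$T/2$ analysis at $v_0$. Your extra explicit bookkeeping (the trivial nondeterministic Mealy machine with $\llbracket\machine{0}\rrbracket=\Sigma_0$, the classification of NE outcomes via $\ValStar{v_0}$, and the degenerate cases $T$ odd or $T=0$) only makes precise what the paper leaves implicit.
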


\begin{proof}
    We use the same kind of reduction as for the \prob{CNS} problem. The game used for the reduction to the \coprob{UNCNV} problem is illustrated in \cref{fig:np-hardness-UNCNV-weighted-2p},  such that $c = T/2 - 1$ and $\reach{0} = \reach{1} = \{L,R\}$.

    If there exists a partition $\{A,B\}$ of $S$ such that $\sum_{A} a_i = \sum_{B} a_i = T/2$, then we have a play $\pi$ going to the right where $\cost{0}{\pi} = T/2$, $\cost{1}{\pi} = nT - T/2 = T/2 (2n-1)$, i.e., $\pi$ is a \fixedEquilibria{} outcome, so we have finished. For the other direction, if there exists a \fixedEquilibria{} outcome $\pi$, this play must go to the right otherwise the cost of player~$0$ would be $0 \leq c$. Thus, $\cost{0}{\pi} \geq T/2$ and by definition of NE, $\cost{1}{\pi} \leq T/2 (2n-1)$. Let $a,b$ such that $\cost{0}{\pi} = a$ and $\cost{1}{\pi} = nT - b$, clearly we have $a + b = T$ by construction, and $a \geq T/2$ and $b \geq T/2$ by hypothesis, which means that $a = b = T/2$.
\end{proof}

The next result generalizes a result from~\cite{DBLP:conf/mfcs/BriceRB23}, where the authors showed the \conpComplete{}ness for the \prob{(U)NCNV} problem when the weights on the edges are all equal to $1$. Having integer weights encoding in binary induces a new difficulty: we cannot characterize \fixedEquilibriaStrategy{} outcomes by using polynomial-size lassos. Indeed, \cref{exemple:pareto-size-expo}, used with PO rationality, is also applicable in the context of NE rationality, with one-player environments.

To prove the result, we will guess a succinct representation of a solution $\pi$ of exponential size as we did in \Cref{section:pareto} thanks to Parikh automata. We will not dive into all tricky technical details and stick to an intuition based on the lasso-sufficiency result of such a solution as presented in~\cite{DBLP:conf/mfcs/BriceRB23}.

\begin{theorem}
    The (Universal) \probname{}{NC}{N}{V} problem for reachability games is \conpComplete{}.
\end{theorem}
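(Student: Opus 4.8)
The plan is to prove membership in \conp{} by the same recipe used for the \prob{CPS} and Pareto verification problems: reduce the verification question to guessing a small (succinctly represented) certificate and checking it in polynomial time. Since the \prob{(U)NCNV} problem asks whether \emph{all} \fixedEquilibria{} outcomes $\pi$ satisfy $\cost{0}{\pi} \le c$, its complement asks whether there \emph{exists} a \fixedEquilibriaStrategy{} outcome $\pi$ with $\cost{0}{\pi} > c$; I will show the complement is in \np{}, which gives \conp{} for the original. First I would pass to the product game $\game \times \machine{0}$ of polynomial size (recalled in \cref{appendix:exMealyMachine}), so that in the universal case player~$0$ may play any strategy compatible with $\machine{0}$ and in the deterministic case there is a unique such strategy; from now on work directly in this product game, again called $\game$.

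The key structural ingredient is the characterization of \cref{proposition:fixedEquilibriaChar}: a play $\pi$ is a \fixedEquilibriaStrategy{} outcome for \emph{some} $\sigma_0$ (with $\pi$ consistent with $\sigma_0$) if and only if $\pi$ is Visit \ValStarFunc{}-consistent for the players $\Players \ssetminus \{0\}$, where the values $\ValStar{\cdot}$ are computed in the zero-sum game of each follower~$i$ against the coalition of the others (with $\sigma_0$ regarded as part of that coalition). These values lie in $\{0,\dots,|V|W,+\infty\}$ and are computable in polynomial time by value iteration. So the algorithm is: compute all values $\ValStar{\cdot}$; then guess a lasso $\pi = \mu(\nu)^\omega$; then check in polynomial time that $\cost{0}{\pi} > c$ and that $\pi$ is Visit \ValStarFunc{}-consistent for $\Players \ssetminus \{0\}$; simultaneously, checking that $\pi$ is a \emph{valid} outcome of the product game and (in the deterministic case) consistent with the unique $\sigma_0$. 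Crucially, unlike the Pareto case, there is \emph{no} third step requiring a \conp{} oracle: Visit \ValStarFunc{}-consistency, once the values are fixed, is a purely local condition on $\pi$, and the existence of the witnessing $\sigma_0$ comes for free from \cref{proposition:fixedEquilibriaChar}.

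The remaining obstacle — and this is where \cref{exemple:pareto-size-expo} bites — is that the shortest \fixedEquilibria{} outcome with $\cost{0}{\pi} > c$ need not be a polynomial-size lasso, since the prefix up to exceeding the binary threshold $c$ may have exponential length. So, exactly as in the proof of the upper bounds of \cref{theorem:pareto-results}, I would not guess $\pi$ explicitly but guess a \emph{succinct representation} via a Parikh automaton: pick markers for the initial vertex, the first visits to the (at most $t$) target sets of the followers, and the first vertex of the cycle $\nu$; guess the accumulated weight tuples $q^{(i)}_j$ along each portion; verify in polynomial time that these sum consistently, that $w_0(\mu) > c$, and that the Visit \ValStarFunc{}-consistency inequalities $\cost{j}{\pi_{\ge n}} \le \ValStar{\pi_n}$ hold at the relevant positions (these only need to be checked at first visits to vertices owned by players who have not yet reached their target, which is captured by the marker/portion data); and finally invoke the \np{} non-emptiness algorithm for Parikh automata to certify that paths realizing the guessed markers and weight tuples actually exist in the appropriate subgraphs, plus a classical automaton check for the cycle $\nu$ visiting no new follower target. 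I expect the main delicate point to be showing that a bounded-length succinct certificate always exists, i.e.\ a lasso-sufficiency lemma for \fixedEquilibria{} outcomes (removing cycles that do not disturb $\Visit{\cdot}$, the values, nor the event $\cost{0}{\pi} > c$) analogous to \cref{lem:witness_size_pareto}; this is where the adaptation of the argument from~\cite{DBLP:conf/mfcs/BriceRB23} to binary weights is nontrivial, because cycle elimination must respect the Visit \ValStarFunc{}-consistency inequalities at the surviving positions. The lower bound (\conpHard{}ness) already holds for qualitative reachability by~\cite{grandmont23} (and for one-player environments by the corollary above), completing the \conpComplete{}ness.
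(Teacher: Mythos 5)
Your overall route is the one the paper takes: pass to the product game, reduce to showing the \coprob{(U)NCNV} problem is in \np{}, replace the Pareto-optimality check by the Visit \ValStarFunc{}-consistency characterization of \cref{proposition:fixedEquilibriaChar} (so no oracle is needed), represent the possibly exponentially long lasso succinctly via Parikh automata, and take the lower bound from the literature. However, there is a concrete gap in your verification step, namely in the parenthetical claim that the consistency inequalities are ``captured by the marker/portion data''. Your certificate contains only markers at the first visits to target sets (plus the initial vertex and the start of $\nu$), i.e., at most $t+2$ markers, together with one aggregate weight tuple per portion. But Visit \ValStarFunc{}-consistency must hold at every position $n$ with $\pi_n \in V_i$ and $i \notin \Visit{\pi_{<n}}$; because weights are nonnegative it suffices to check it at the \emph{first occurrence of each vertex} $v$ with $\ValStar{v} < +\infty$, yet these are still up to $|V|$ positions lying strictly inside your portions, and neither their location nor the suffix weight $\cost{i}{\pi_{\geq n}}$ from them is determined by your guessed data. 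Two paths realizing the same markers and the same portion weight tuples can differ on whether they pass through a small-value vertex early in a portion, so the Parikh non-emptiness check as you describe it can accept a certificate for which no Visit \ValStarFunc{}-consistent play with $\cost{0}{\pi} > c$ exists (and, symmetrically, checking the inequalities only at your markers misses violations inside portions). So the \np{} verifier, as written, does not decide the right property.

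The paper closes exactly this hole by introducing a \emph{second kind of marker}: besides the first visits to the target sets (including $\reach{0}$ when relevant), it marks the first occurrences of vertices $v$ with finite value $\ValStar{v}$, which keeps the number of markers polynomial (at most $|\Players| + |V|$); it guesses weight tuples for the portions between consecutive markers, checks the inequalities $\cost{i}{\pi_{\geq m}} \leq \ValStar{\pi_m}$ at the marked positions from this data, and restricts the subgraphs used in the Parikh-automaton existence checks so that no constrained vertex can occur unmarked inside a portion. Note also that the lasso-sufficiency issue you single out as the main delicate point is discharged in the paper simply by invoking~\cite{DBLP:conf/mfcs/BriceRB23} for the bound $\cost{i}{\pi} \in \{0,\dots,(2t+1)|V|W,+\infty\}$; the genuinely new work in this proof is the certificate design just described, which your proposal is missing.
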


\begin{proof}[Proof (sketch).]
    The lower bound is already proved in from~\cite{DBLP:conf/mfcs/BriceRB23}. Let us consider the upper bound. As for other verification problems, we use the complementary problem and show that the \coprob{(U)NCNV} problem belongs to \np{}. We suppose in the following that we already have a game $\game{}$ which is the product of the given game and the (non)deterministic Mealy machine of player~$0$.

    The main goal is to guess a play $\pi$ outcome of a \fixedEquilibriaStrategy{} such that $\cost{0}{\pi} > c$. By~\cite{DBLP:conf/mfcs/BriceRB23}, we deduce that it suffices to guess a lasso $\pi = \mu(\nu)^\omega$ such that $\cost{i}{\pi} \in \{0,\dots,(2t+1)|V|W,+\infty\}$ for every player~$i$. We use again Parikh automata to guess such a lasso in polynomial time (by using markers and portions as in the proofs of \Cref{subsec:UpperBounds}). 

    We need two kinds of markers. The first kind of markers are vertices that aim to represent the first occurrence of a target set along $\pi$ (including $\reach{0}$ if $\pi$ visits $\reach{0}$). To verify that $\pi$ is a \fixedEquilibriaStrategy{}, we also need markers that are vertices with a finite value, with the aim that every time $\pi_m \in V_i$ is a marker with $\ValStar{\pi_m} < +\infty$, we must have $\cost{i}{\pi_{\geq m}} \leq \ValStar{\pi_m}$ (see \Cref{proposition:fixedEquilibriaChar}). The total number of markers is polynomial as it is bounded by $|\Players| + |V|$.\footnote{Notice that, as in \Cref{subsec:UpperBounds}, we also need markers for the initial vertex and the first vertex of the cycle $\nu$.}

    The algorithm works as follows. First, it computes in polynomial time the value of each vertex. Then, it guesses whether the lasso will visit $\reach{0}$ or not, and guess its markers as explained above. For each portion between any two consecutive markers, it also guesses a tuple of weights in $\{0,\dots,(2t+1)|V|W,+\infty\}$ that aim to represent the player weights for this portion. Finally, the algorithm performs a series of checks in polynomial time: It verifies that the markers are vertices with finite values or belonging to target sets (including $\reach{0}$ if it guessed a visit to $\reach{0}$). It verifies the existence of the portions with the guessed weight tuples (excluding the visit to certain vertices to remain consistent with the markers). It finally checks that the characterization of \cref{proposition:fixedEquilibriaChar} holds and that the cost for player~$0$ is at least $c$ in case of a visit to $\reach{0}$.
\end{proof}

\section{Complexity of the NCNS Problem with One-Player Environments}\label{appendix:exp-algo-2p-ncns}

This section aims at proving \cref{theorem:ncns-2players-pspace-hard-subset-sum}, i.e., showing that the \prob{NCNS} problem is \pspaceHard{} and in \exptime{} when the environment is composed of only one player, player~$1$ (\cref{theorem:Nash-results}.\ref{thm-NCNS} with one-player environments). Imposing this restriction drastically simplifies the problem.

\begin{restatable}{theorem}{ncnstwoplayers}
\label{theorem:ncns-2players-pspace-hard-subset-sum}
    The \prob{NCNS} problem with a one-player environment is \pspaceHard{} and belongs to \exptime{}.
\end{restatable}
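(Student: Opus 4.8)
The plan is to handle the lower and upper bounds separately. For \pspaceHard{}ness I would give a (fairly standard) reduction from the \emph{subset-sum game}~\cite{game-subset-sum}, which is \pspaceComplete{}: two players alternately resolve a sequence of binary-encoded numerical choices, and the first player wins iff the numbers accumulated along the play sum to a prescribed target $t$. In the reachability game I build, player~$0$ makes the first player's choices and player~$1$ makes the second player's choices; a weight function $w_0$ accumulates the running sum, and after each round player~$1$ is offered ``escape'' edges into $\reach{1}$, calibrated --- in the spirit of the countdown-game reduction of \cref{theorem:ncns-3players-exptime-hard-countdown-game} but with one fewer player --- so that an escape is always a cost-minimizing move for player~$1$ (hence produces a \fixedEquilibria{}) and forces $\cost{0}$ above the threshold precisely when the running sum can no longer be steered to $t$. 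A routine case analysis then shows that $\strategyfor{0}$ is a solution to \prob{NCNS} (for a polynomially sized threshold) iff it is a winning strategy in the subset-sum game.

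The \exptime{} upper bound is the substantial part, and its key is that with a one-player environment \prob{NCNS} becomes a pessimistic Stackelberg question: $\strategyfor{0}$ is a solution iff \emph{every} cost-minimizing best response of player~$1$ to $\strategyfor{0}$ produces a play $\pi$ with $\cost{0}{\pi}\le c$. First I would establish the characterization: $\strategyfor{0}$ is a solution iff there is a play $\pi$ consistent with $\strategyfor{0}$ with $\cost{0}{\pi}\le c$ --- write $\lambda=\cost{1}{\pi}$ --- such that for every deviation $hv$ of player~$1$ along $\pi$ (i.e.\ $h$ a prefix of $\pi$ with $\last{h}\in V_1$, $v\in\succc{\last{h}}$, and $hv$ not a prefix of $\pi$), player~$0$ has a strategy from $v$ forcing the resulting play $\rho$ to satisfy $\cost{0}{\rho}\le c$ \emph{or} $\cost{1}{\rho}>\lambda$. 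The ``if'' direction uses the characterization of \fixedEquilibria{} outcomes (\Cref{proposition:fixedEquilibriaChar}): letting $\strategyfor{0}$ follow $\pi$ and switch, after a first deviation $hv$, to the guaranteeing strategy from $v$, any \fixedEquilibria{} outcome either is $\pi$ itself, or deviates and then has $\cost{0}\le c$, or has $\cost{1}>\lambda=\cost{1}{\pi}$ and therefore cannot be a best response since $\pi$ is also consistent with $\strategyfor{0}$. For the ``only if'' direction I would take $\pi$ to be the outcome of $\strategyfor{0}$ together with some best response of player~$1$, so that $\lambda$ is exactly player~$1$'s follower value; a deviation falsifying the condition would then provide a response of player~$1$ reaching a cost $\le\lambda$ --- hence exactly $\lambda$, hence itself a best response and a \fixedEquilibria{} outcome --- with $\cost{0}>c$, contradicting that $\strategyfor{0}$ is a solution.

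Given this characterization, the algorithm is as follows. By a cycle-removal argument in the spirit of \Cref{lem:witness_size_pareto}, if a solution exists the witness $\pi$ can be taken to be a lasso along which all relevant accumulated weights stay below a bound $B=(c+O(|V|))W$, so in particular $\lambda\in\{0,\dots,B,+\infty\}$. I would enumerate the (exponentially many) candidate values of $\lambda$ and, for each, build the expanded arena whose vertices record, besides the current vertex, the weight accumulated for player~$0$ capped at $c+1$ and for player~$1$ capped at $B+1$ --- an arena of single-exponential size --- and solve on it a two-player zero-sum game in which player~$0$ must drive a main path to $\reach{0}$ with $w_0$-weight $\le c$ while also winning, from every alternative successor of every vertex of player~$1$ on that path, the sub-game whose objective is ``reach $\reach{0}$ with $w_0$-weight $\le c$, or reach $\reach{1}$ with $w_1$-weight $>\lambda$''. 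This last objective is a positive Boolean combination of bounded reachability objectives, solvable in time polynomial in the size of the expanded arena, along the lines of \Cref{lem:dev-pareto-zero-sum-pspace} and the generalized-reachability games of that section; combining the branch winning strategies yields $\strategyfor{0}$, and the whole procedure runs in exponential time. I expect the characterization --- in particular, correctly identifying the best responses of player~$1$ against a fixed $\strategyfor{0}$ with the plays attaining his follower value $\lambda$, and converting ``all \fixedEquilibria{} outcomes are good'' into a per-deviation zero-sum condition with a Boolean-combination winning objective --- to be the main obstacle; the weight bookkeeping that keeps the expanded arena single-exponential, the enumeration of $\lambda$, and the \pspaceHard{}ness reduction are then routine.
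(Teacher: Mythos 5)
Your overall route coincides with the paper's. For the lower bound, the paper also reduces from the subset-sum game (it routes player~$1$ through a single escape edge of cost $T+1$ at the start and uses complementary weights so that every play reaching the final sink satisfies $\cost{0}{\pi}+\cost{1}{\pi}=2nT$, rather than escapes after every round, but the calibration idea is the same as yours). For the upper bound, your characterization is exactly the paper's notion of a $c$-witness (\Cref{lem:2p-witness-characterize-solutions}), proved the same way via the observation that the \fixedEquilibriaStrategy{} outcomes are precisely the consistent plays attaining the minimal player-$1$ cost (\Cref{lem:2p-ne-same-cost}); the algorithm likewise bounds that cost by cycle removal (\Cref{lem:lasso-sufficiency-generic}), enumerates it, and solves qualitative games on an extended arena with capped accumulated weights.

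One step in your algorithm as written would fail: for a fixed enumerated $\lambda$ you accept as soon as there is a main path $\pi$ with $\cost{0}{\pi}\le c$ all of whose player-$1$ deviations can be answered by forcing ``$\costfunc{0}\le c$ or $\costfunc{1}>\lambda$'', but you never tie $\cost{1}{\pi}$ to $\lambda$. If $\cost{1}{\pi}>\lambda$, the deviation guarantee is too weak: a deviating play whose player-$1$ cost lies strictly between $\lambda$ and $\cost{1}{\pi}$ while its player-$0$ cost exceeds $c$ can then be the unique best response, so the synthesized $\strategyfor{0}$ is not a solution. (Tiny example: player~$1$ chooses at $v_0$ between a branch reaching $\reach{0}$ on which his cost is $10$ and a branch avoiding $\reach{0}$ on which it is $5$; with $\lambda=3$ your test accepts, although no solution exists.) The missing check is already implicit in your own characterization, where $\lambda=\cost{1}{\pi}$: additionally require that the main path visits $\reach{1}$ with accumulated $w_1$-weight exactly $\lambda$ (at most $\lambda$ also suffices), which is precisely what the paper's algorithm verifies in the extended arena (a visited state with $c_1=d$ and $1\in F$), the case $\lambda=+\infty$ being handled separately by a plain check that player~$0$ can force cost at most $c$ against all behaviors. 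With that correction, your argument matches the paper's proof.
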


\subsection{PSPACE-hardness}

To show the \pspaceHard ness result stated in \Cref{theorem:ncns-2players-pspace-hard-subset-sum}, we use a reduction from a problem called the \emph{subset-sum game problem}. An instance of this problem is a formula $\psi$ as
\[
\forall P_{1} \in \{A_1,B_1\},\, \exists P_{2} \in \{E_1,F_1\}, \dots, \forall P_{2n-1} \in \{A_n,B_n\},\, \exists P_{2n} \in \{E_n,F_n\},\; \sum_{i = 1}^{2n} P_i = T,
\]
where $T,A_i,B_i,E_i$ and $F_i$ are natural integers encoded in binary, for all $i \in \{1,\dots,n\}$. We can view this formula as a game with two players, the existential and the universal players, where the existential player wins if the formula is satisfied, i.e., if the valuations of each picked variable $P_i$ sum up to $T$. This problem is \pspaceComplete{}~\cite{game-subset-sum}.

Notice that we can suppose w.l.o.g.\ that every integer $A_i,B_i,E_i$ and $F_i$ is $\leq T$. If there exists $i$ such that either $A_i > T$ or $B_i > T$, or $E_i > T$ and $F_i > T$, then the formula is never satisfied. Indeed either the universal player selects a valuation exceeding $T$ for his variable, or the existential player has no choice but to exceed $T$, and thus $\sum_{i = 1}^{2n} P_i > T$. And when $E_i > T$ and $F_i \leq T$ (resp.\ $E_i \leq T$ and $F_i > T$), the existential player must select $F_i$ (resp.\ $E_i$) to hope to win, in which case we modify $\psi$ by replacing $E_i$ by $F_i$ (resp.\ $F_i$ by $E_i$).

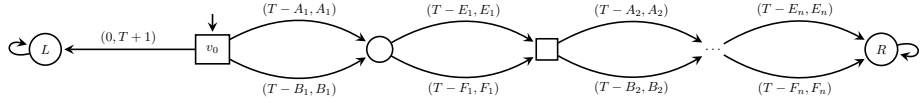
\begin{figure}
\centering
\begin{tikzpicture}[automaton,scale=0.6,every node/.style={scale=0.5},node distance=2.2]
  \node[initial,initial where=above,state,environment] (v00)           {$v_0$};
  \node[state,system]      (L)  [left=of v00]    {$L$};
  \node[state,system]      (v01) [right=of v00]  {~~~};
  \node[state,environment] (v10) [right=of v01]  {};
  \node                    (dots)[right=of v10]  {$\dots$};
  \node[state,system]      (R)   [right=of dots] {$R$};

  \path (L)   edge[loop left] node[above] {}       (L)
        (v00) edge             node[above] {$(0,T+1)$}     (L)
              edge[bend left]  node[above] {$(T-A_1,A_1)$} (v01)
              edge[bend right] node[below] {$(T-B_1,B_1)$} (v01)
        (v01) edge[bend left]  node[above] {$(T-E_1,E_1)$} (v10)
              edge[bend right] node[below] {$(T-F_1,F_1)$} (v10)
        (v10) edge[bend left]  node[above] {$(T-A_2,A_2)$} (dots)
              edge[bend right] node[below] {$(T-B_2,B_2)$} (dots)
        (dots)edge[bend left]  node[above] {$(T-E_n,E_n)$} (R)
              edge[bend right] node[below] {$(T-F_n,F_n)$} (R)
        (R)   edge[loop right] node[above] {}       (R);
\end{tikzpicture}
\caption{Reduction from the subset-sum game problem to the \prob{NCNS} problem (one-player env.).}
\label{fig:pspace-hardness-weighted}
\end{figure}

\begin{proof}[Proof of~\Cref{theorem:ncns-2players-pspace-hard-subset-sum} (Hardness)]
    We use a reduction from the subset-sum game problem and construct a game $\game$ as in \cref{fig:pspace-hardness-weighted} and set the threshold $c = (2n-1)T$. Player~$0$, owning the circle vertices, represents the existential player of the subset-sum game, while player~$1$, owning the square vertices, represents the universal player. The initial vertex is $v_0$ and the target sets are $\reach{0} = \{R\}$ and $\reach{1} = \{R,L\}$. The two weight functions are indicated on the edges. Notice that each weight is positive or zero, from our previous remark. At the initial vertex, player~$1$ either chooses the edge towards $L$ and gets a cost of $T+1$ while imposing an infinite cost to player~$0$, or he simulates with player~$0$ a play of the subset-sum game by alternatively picking an integer, until reaching the vertex $R$.

    If the existential player has a strategy to satisfy $\psi$, then player~$0$ can simulate this strategy in $\game{}$, which is a solution $\strategyfor{0}$ to the \prob{NCNS} problem. Indeed, any play consistent with $\strategyfor{0}$ either reaches $L$ with a cost of $T+1$ for player~$1$ or reaches $R$ with a cost of $T$ for player~$1$. So, the $\strategyfor{0}$-fixed NE outcomes are those reaching $R$, ensuring a cost of $(2n-1)T = c$ for player~$0$.

    For the other implication, notice that for any play $\pi$ reaching $R$, we have $\cost{0}{\pi} + \cost{1}{\pi} = 2nT$. Let $\strategyfor{0}$ be a solution to the \prob{NCNS} problem, that is, such that the outcome $\pi$ of every \fixedEquilibriaStrategy{} satisfies $\cost{0}{\pi} \leq (2n-1)T$. Then, by our remark, $\cost{1}{\pi} \geq T$. However, we cannot have $\cost{1}{\pi} \geq T+1$, since otherwise the play $\pi$ visiting $L$ would be the outcome of a \fixedEquilibriaStrategy{} with $\cost{0}{\pi} = +\infty$, in contradiction with $\strategyfor{0}$ being a solution to the \prob{NCNS} problem. We conclude that $\cost{1}{\pi}=T$, and thus $\cost{0}{\pi} = (2n-1)T$, i.e., the existential player can simulate the strategy $\strategyfor{0}$ to satisfy the formula $\psi$.
\end{proof}

\subsection{EXPTIME-Membership}

We now prove the \exptime{}-membership result stated in \cref{theorem:ncns-2players-pspace-hard-subset-sum}.

The heart of the decidability of the \prob{NCNS} problem with player~$0$, the system, and player~$1$, the environment, comes with the following result.

\begin{lemma}\label{lem:2p-ne-same-cost}
    Let $\strategyfor{0} \in \Sigma_0$ and $d = \min\{\cost{1}{\pi} \mid \pi$ play consistent with $\strategyfor{0}\}$. Then for all play $\pi$ consistent with $\strategyfor{0}$, $\pi$ is the outcome of a \fixedEquilibriaStrategy{} if and only if $\cost{1}{\pi} = d$.

    Moreover, when $d = +\infty$, every play consistent with $\strategyfor{0}$ is a \fixedEquilibriaStrategy{}.
\end{lemma}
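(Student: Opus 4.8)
The statement concerns a one-player environment (players $0$ and $1$). Fix a strategy $\strategyfor{0}$ of player~$0$ and set $d = \min\{\cost{1}{\pi} \mid \pi$ consistent with $\strategyfor{0}\}$ (the minimum is attained in $\N \cup \{+\infty\}$, since costs range over that set and $\min$ of a nonempty subset exists). The plan is to prove the equivalence ``$\pi$ is a $\strategyfor{0}$-fixed NE outcome $\iff \cost{1}{\pi} = d$'' by two implications, both leaning on the fact that a one-player environment is exactly a \emph{one-player game} once $\strategyfor{0}$ is fixed: in the residual arena where only player~$1$ decides, player~$1$'s best possible cost \emph{from the initial vertex} is precisely $d$.

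\textbf{The easy direction.}
Suppose $\cost{1}{\pi} > d$ for some $\pi$ consistent with $\strategyfor{0}$. Pick a play $\pi^*$ consistent with $\strategyfor{0}$ with $\cost{1}{\pi^*} = d < \cost{1}{\pi}$. Define a strategy $\tau_1$ for player~$1$ that simply reproduces the choices made along $\pi^*$ (formally, on any history $h \in \Histsigma{1}(v_0)$ that is a prefix of $\pi^*$, let $\tau_1(h)$ be the successor dictated by $\pi^*$; elsewhere define $\tau_1$ arbitrarily). Since player~$1$ is the only player of the environment, the pair $(\strategyfor{0}, \tau_1)$ has outcome exactly $\pi^*$, so $\cost{1}{\outcomefrom{\tau_1,\strategyfor{0}}{v_0}} = d < \cost{1}{\pi}$. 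Hence $\tau_1$ is a profitable deviation from any profile $(\strategyfor{0}, \strategyfor{1})$ whose outcome is $\pi$, so $\pi$ is not a $\strategyfor{0}$-fixed NE outcome. Contrapositively, every $\strategyfor{0}$-fixed NE outcome $\pi$ satisfies $\cost{1}{\pi} = d$ (it is consistent with $\strategyfor{0}$, so $\cost{1}{\pi} \geq d$, and by the above $\cost{1}{\pi} \leq d$).

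\textbf{The harder direction.}
Conversely, suppose $\pi$ is consistent with $\strategyfor{0}$ and $\cost{1}{\pi} = d$. We must exhibit a strategy $\strategyfor{1}$ of player~$1$ such that $\outcomefrom{\strategyfor{0},\strategyfor{1}}{v_0} = \pi$ and no deviation $\tau_1$ is profitable. Take $\strategyfor{1}$ that follows $\pi$ along prefixes of $\pi$, and on any history $h$ that has left $\pi$ (i.e.\ $h$ is not a prefix of $\pi$), plays a cost-optimal continuation for player~$1$ from $\last{h}$ against the fixed $\strategyfor{0}$ — such an optimal memoryless-type response exists because, once $\strategyfor{0}$ is fixed, from $\last{h}$ onward player~$1$ faces a one-player shortest-path problem (we can invoke the well-known fact, used implicitly in the paper via \cref{proposition:fixedEquilibriaChar}, that in one-player reachability games optimal values and optimal strategies exist and lie in $\N \cup \{+\infty\}$). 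Then $\outcomefrom{\strategyfor{0},\strategyfor{1}}{v_0} = \pi$. For any deviation $\tau_1$, let $hv$ be the first vertex where $\tau_1$'s outcome $\pi'$ departs from $\pi$ (if there is none, $\pi' = \pi$ and $\cost{1}{\pi'} = d$ is not an improvement). After $hv$, the remaining play is \emph{some} play consistent with $\strategyfor{0}$ from $v$; but any play consistent with $\strategyfor{0}$ from the root is also consistent with $\strategyfor{0}$, hence has player-$1$ cost $\geq d$. More carefully: $\pi'$ as a whole is consistent with $\strategyfor{0}$ (only player~$1$ deviated, and $\strategyfor{0}$ is unchanged), so $\cost{1}{\pi'} \geq d = \cost{1}{\pi}$, i.e.\ the deviation is not profitable. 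This is the step to be careful about — one must check that the deviating play is still globally consistent with $\strategyfor{0}$, which holds precisely because the environment is a \emph{single} player, so ``$\strategyfor{-1}$'' is just $\strategyfor{0}$ and is untouched by $\tau_1$.

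\textbf{The ``moreover''.}
If $d = +\infty$, then for every $\pi$ consistent with $\strategyfor{0}$ we have $\cost{1}{\pi} = +\infty = d$, so by the direction just proved each such $\pi$ is a $\strategyfor{0}$-fixed NE outcome; equivalently, any $\strategyfor{1}$ yields a $\strategyfor{0}$-fixed NE since player~$1$'s cost is $+\infty$ regardless and no deviation (also staying consistent with $\strategyfor{0}$) can do better. The main obstacle is really just the bookkeeping in the converse direction: ensuring the punishment/follow strategy $\strategyfor{1}$ is well-defined everywhere and that deviating plays remain consistent with the fixed $\strategyfor{0}$; once that is nailed down, the one-player structure makes everything collapse to the single inequality $\cost{1}{\pi'} \geq d$.
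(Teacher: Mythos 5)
Your proof is correct and follows essentially the same route as the paper's: both directions reduce to the minimality of $d$ together with the observation that, since the environment is a single player, every unilateral deviation of player~$1$ yields a play still consistent with $\strategyfor{0}$ and, conversely, every play consistent with $\strategyfor{0}$ is realizable by some strategy of player~$1$. You merely spell out the strategy constructions that the paper's terse proof leaves implicit (the off-path ``optimal continuation'' in your converse direction is in fact unnecessary, as you yourself note, since a deviation $\tau_1$ replaces $\strategyfor{1}$ entirely).
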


\begin{proof}
    Let $\pi$ be the outcome of a \fixedEquilibriaStrategy{}. By minimality of $d$, we have $\cost{1}{\pi} \geq d$, and by definition of an NE, $\cost{1}{\pi} \leq \cost{1}{\pi'}$ for all plays $\pi'$ consistent with $\strategyfor{0}$. In particular, it is also verified for a play $\pi'$ such that $\cost{1}{\pi'} = d$. Therefore, we get $\cost{1}{\pi} = d$.

    Conversely, let $\pi$ be such that $\cost{1}{\pi} = d$. By minimality of $d$, we have $\cost{1}{\pi'} \geq \cost{1}{\pi}$ for all plays $\pi'$ consistent with $\strategyfor{0}$, which means that $\pi$ is the outcome of an NE.

    The remark when $d = +\infty$ is a direct consequence of the minimality of $d$.
\end{proof}

\cref{lem:2p-ne-same-cost} indicates that we can associate a unique minimum $d \in \N\cup\{+\infty\}$ to each strategy $\strategyfor{0}$ of player~$0$. From now on, we will often do this link between such a strategy $\strategyfor{0}$ and its minimum cost $d$ for player~$1$, written explicitly $d_{\strategyfor{0}}$ when needed.

Another important result is a characterization of solutions to the \prob{NCNS} problem, thanks to the concept of \emph{witness}. This idea already appears in~\cite{NoncoopSynth_Meanpayoff_2players} for mean-payoff games with a one-player environment.

\begin{definition}[Witness]\label{def:2p-witness}
    Let $c\in \N$. A \emph{$c$-witness} $\pi$ is a play such that $\cost{0}{\pi} \leq c$ and for every deviation\footnote{We recall that $hv$ is a deviation of $\pi$ if $h$ is a prefix of $\pi$ but not $hv$.} $hv$ of $\pi$, $v$ is winning for player~$0$ in the zero-sum game $(\arena{},\Omega^{(hv)})$ such that
    \[
    \Omega^{(hv)} = \{\pi' \in \Plays(v) \mid \cost{0}{h\pi'} \leq c \text{ or } \cost{1}{h\pi'} > d\}
    \]
    with $d = \cost{1}{\pi}$.
\end{definition}
In other words, $\pi$ is a $c$-witness if $\cost{0}{\pi} \leq c$ and when player~$1$ is deviating from $\pi$, player~$0$ can either ensure that his own cost is smaller than $c$ or punish player~$1$ by imposing him a cost strictly greater than $d$.

In other words, $\pi$ is a $c$-witness if $\cost{0}{\pi} \leq c$ and when player~$1$ is deviating from $\pi$, player~$0$ can either ensure that his own cost is smaller than $c$ or punish player~$1$ by imposing him a cost strictly greater than $d$.\footnote{In the objective $\Omega^{hv}$, we need to subtract the weights $w_0(hv)$ and $w_1(hv)$ as we start to play from $v$, after history $h$, in the game $(\arena{},\Omega^{hv})$.}

\begin{lemma}\label{lem:2p-witness-characterize-solutions} Let $c \in \N$ be a threshold. Then there exists a solution $\strategyfor{0}$ to the \prob{NCNS} problem with the threshold $c$ if and only if there exists a $c$-witness.
\end{lemma}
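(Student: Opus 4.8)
The plan is to prove both directions by converting between a solution strategy $\strategyfor{0}$ and a $c$-witness play $\pi$. For the forward direction, suppose $\strategyfor{0}$ is a solution to the \prob{NCNS} problem with threshold $c$. Let $d = d_{\strategyfor{0}}$ be the minimum cost for player~$1$ among plays consistent with $\strategyfor{0}$, as given by \cref{lem:2p-ne-same-cost}. First I would pick $\pi$ to be any play consistent with $\strategyfor{0}$ achieving $\cost{1}{\pi} = d$; by \cref{lem:2p-ne-same-cost} this $\pi$ is the outcome of a \fixedEquilibriaStrategy{}, so since $\strategyfor{0}$ is a solution we have $\cost{0}{\pi} \leq c$. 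It remains to check the deviation condition: given any deviation $hv$ of $\pi$, I claim $v$ is winning for player~$0$ in $(\arena,\Omega^{(hv)})$, witnessed by the strategy $\strategyfor{0}$ restricted to continuations of $hv$ (i.e.\ $\strategyfor{hv}(g) = \strategyfor{0}(hg)$ as in the construction at the end of~\cref{lem:pareto-optimal-check-p1-p2}). Indeed, take any play $\pi' \in \Plays(v)$ consistent with $\strategyfor{hv}$; then $h\pi'$ is consistent with $\strategyfor{0}$. If $\cost{1}{h\pi'} \leq d$, then by minimality of $d$ we get $\cost{1}{h\pi'} = d$, so $h\pi'$ is a \fixedEquilibriaStrategy{} outcome by \cref{lem:2p-ne-same-cost}, hence $\cost{0}{h\pi'} \leq c$ because $\strategyfor{0}$ is a solution; otherwise $\cost{1}{h\pi'} > d$. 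Either way $h\pi' \in \Omega^{(hv)}$ (after the bookkeeping that subtracts the weights accumulated along $hv$), so $\strategyfor{hv}$ is winning and $\pi$ is a $c$-witness.

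For the converse, suppose $\pi$ is a $c$-witness, with $d = \cost{1}{\pi}$. I would build $\strategyfor{0}$ by first producing $\pi$ along the main line, and then, at every deviation $hv$ of $\pi$, switching to a winning strategy $\strategyfor{hv}$ for player~$0$ in the zero-sum game $(\arena,\Omega^{(hv)})$, which exists by the $c$-witness property (formally, $\strategyfor{0}(hvg') = \strategyfor{hv}(vg')$ for the longest prefix $h$ of $\pi$ of this form — again the stitching pattern from~\cref{lem:pareto-optimal-check-p1-p2}). Now I must show $\strategyfor{0}$ is a solution: let $\rho$ be the outcome of any \fixedEquilibriaStrategy{} $(\strategyfor{0},\strategyfor{1})$. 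If $\rho = \pi$ then $\cost{0}{\rho} = \cost{0}{\pi} \leq c$ by the first clause of the witness definition, and we are done. Otherwise $\rho$ splits off from $\pi$ at some deviation $hv$, so $\rho = h\rho'$ with $\rho'$ consistent with $\strategyfor{hv}$, hence $\rho \in \Omega^{(hv)}$, meaning $\cost{0}{\rho} \leq c$ or $\cost{1}{\rho} > d$. In the first case we are done. In the second case, I claim $\rho$ cannot be a \fixedEquilibriaStrategy{} outcome: since $\pi$ is consistent with $\strategyfor{0}$ and $\cost{1}{\pi} = d < \cost{1}{\rho}$, player~$1$ would have a profitable deviation from $\rho$ (namely, replaying the moves that produce $\pi$), contradicting the Nash condition; more cleanly, by \cref{lem:2p-ne-same-cost}, $d_{\strategyfor{0}} \leq \cost{1}{\pi} = d$ and every \fixedEquilibriaStrategy{} outcome has player-$1$ cost exactly $d_{\strategyfor{0}} \leq d$, contradicting $\cost{1}{\rho} > d$. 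So the second case is vacuous, and $\strategyfor{0}$ is a solution.

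The main obstacle I anticipate is handling the weight bookkeeping in $\Omega^{(hv)}$ correctly and uniformly with the definition: $\Omega^{(hv)}$ is phrased in terms of $\cost{0}{h\pi'}$ and $\cost{1}{h\pi'}$ (costs of the concatenation measured from the root), so I need to be careful that ``$v$ is winning in $(\arena,\Omega^{(hv)})$'' is interpreted as a game played from $v$ where the accumulated weights of $h$ (and the step $hv$) are pre-loaded, exactly as in the footnote after \cref{def:2p-witness} and as done in~\cref{lem:pareto-optimal-check-p1-p2}. A secondary subtlety is the case $d = +\infty$: there every play consistent with $\strategyfor{0}$ is a \fixedEquilibriaStrategy{} outcome (second part of \cref{lem:2p-ne-same-cost}), so the argument still goes through — in the forward direction $\cost{0}{\pi} \leq c$ forces the deviation games to be won trivially via $\strategyfor{0}$ itself, and in the converse direction the ``$\cost{1}{\rho} > d$'' disjunct is simply impossible, so $\cost{0}{\rho} \leq c$ always holds. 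Apart from these points, both directions are routine strategy-stitching arguments combined with \cref{lem:2p-ne-same-cost}.
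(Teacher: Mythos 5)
Your proof is correct and follows essentially the same route as the paper's: extracting a minimal-cost consistent play and using the restriction of $\strategyfor{0}$ as the winning strategy at deviations for one direction, and stitching the witness with the deviation-game winning strategies plus the Nash/minimality argument via \cref{lem:2p-ne-same-cost} for the other. The extra care you take with the weight bookkeeping in $\Omega^{(hv)}$ and the $d=+\infty$ case matches the paper's footnoted conventions, so nothing further is needed.
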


\begin{proof}
Suppose that there exists a $c$-witness $\pi$. We can construct a strategy $\strategyfor{0}$ such that it produces $\pi$ (i.e., $\pi$ is consistent with $\strategyfor{0}$), and for all deviations $hv$ of $\pi$, $\strategyfor{0}$ acts as a winning strategy of player~$0$ in the zero-sum game $(\arena{},\Omega^{(hv)})$ from the history $hv$. Let us show that this strategy $\strategyfor{0}$ is a solution to the \prob{NCNS} problem. For each \fixedEquilibriaStrategy{} $(\strategyfor{0},\strategyfor{1})$, we have by definition of NE that $\cost{1}{\outcomefrom{\strategyfor{0},\strategyfor{1}}{v_0}} \leq \cost{1}{\pi}$. Since $\pi$ is a $c$-witness, it follows that $\cost{0}{\outcomefrom{\strategyfor{0},\strategyfor{1}}{v_0}} \leq c$.

    Let us now suppose that there exists a solution $\strategyfor{0}$ to the \prob{NCNS} problem. We extract by \Cref{lem:2p-ne-same-cost} a \fixedEquilibriaStrategy{} outcome $\pi$ such that $\cost{0}{\pi} = d_{\strategyfor{0}}$ and claim that $\pi$ is a $c$-witness. First, notice that $\cost{0}{\pi} \leq c$ as $\strategyfor{0}$ is solution. Then, for all strategies $\strategyfor{1}$ of player~$1$ such that $\pi' = \outcomefrom{\strategyfor{0},\strategyfor{1}}{v_0}$ is different from $\pi$, we have $\cost{1}{\pi'} \geq d_{\strategyfor{0}}$ (again by \Cref{lem:2p-ne-same-cost}). So either $\cost{1}{\pi'} > d_{\strategyfor{0}}$ or $\cost{1}{\pi'} = d_{\strategyfor{0}}$, i.e., $\strategyfor{1}$ is a \fixedEquilibriaStrategy{}, and thus, $\cost{0}{\pi'} \leq c$ as $\strategyfor{0}$ is solution. Thus for any deviation $hv$ of $\pi$, $\strategyfor{0}$ is a winning strategy for player~$0$ in the game $(\arena,\Omega^{(hv)})$.
\end{proof}

Despite this useful characterization, notice that a $c$-witness is not necessarily the outcome of an NE, because player~$1$ might have a smaller cost elsewhere, but in that case, we keep a cost of player~$0$ smaller than $c$ along any other play profitable for player~$1$.

\begin{lemma}\label{lem:lasso-sufficiency-generic}
    Let $c \in \N$ and let $\pi$ be a $c$-witness. Then there exists a $c$-witness $\pi'$ such that either $\cost{1}{\pi'} = +\infty$ or $\cost{1}{\pi'} \leq 2|V|W$.
\end{lemma}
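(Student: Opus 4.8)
The plan is to take a $c$-witness $\pi$ and surgically remove cycles from it so that the cost for player~$1$ drops below the stated bound, while preserving the two defining properties of a $c$-witness: namely $\cost{0}{\pi'} \le c$ and, for every deviation $hv$ of $\pi'$, vertex $v$ is winning for player~$0$ in the zero-sum game $(\arena,\Omega^{(hv)})$ with the relevant threshold $d' = \cost{1}{\pi'}$. The first step is to dispose of the case $\cost{1}{\pi} = +\infty$: here we want to produce a $c$-witness $\pi'$ that still gives player~$1$ infinite cost but is ``minimal'' in the sense needed. Concretely, take a finite prefix $h$ of $\pi$ witnessing $\cost{0}{h}\le c$ with $|h|\le |V|$ after cycle removal (such a prefix exists by the usual argument of Lemma~\ref{lem:witness_size_pareto}, applied only up to the first visit to $\reach{0}$), and then continue from $\last{h}$ with a memoryless strategy of player~$0$ that is winning for avoiding $\reach{1}$ forever — which exists because, along $\pi$, player~$1$ never reaches $\reach{1}$, so from every vertex reached after $h$ player~$0$ can keep the play out of $\reach{1}$ against any behavior of player~$1$ (this is exactly the winning condition for $\Omega^{(hv)}$ when $d=+\infty$, since then $\cost{1}{h\pi'}>d$ is impossible and the condition reduces to $\cost{0}{h\pi'}\le c$; but we already have $\cost{0}{h}\le c$ so any continuation works). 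Thus in the infinite case the resulting lasso $\pi'$ trivially satisfies $\cost{1}{\pi'}=+\infty$, which matches the first disjunct of the statement.

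Now assume $d := \cost{1}{\pi} < +\infty$, so $\pi$ visits $\reach{1}$, say first at index $n$ with $w_1(\pi_{\le n}) = d$. The idea is to shorten the prefix $\pi_{\le n}$ by deleting cycles, because deleting a cycle from this prefix can only \emph{decrease} $d$, and decreasing $d$ only makes the ``punish player~$1$'' alternative of the $c$-witness condition \emph{easier} to satisfy. Precisely, I would first remove all cycles occurring strictly before the first visit to $\reach{0}$ along $\pi_{\le n}$ only if they also precede index $n$ — but to keep $\cost{0}{\pi'}\le c$ we must not shorten below the first $\reach{0}$-visit; so the safe move is to delete cycles that appear between the first $\reach{0}$-visit and index $n$, and cycles strictly before index $n$ that do not touch the $\reach{0}$-prefix. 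After this cleanup the prefix up to the first $\reach{1}$-visit has length at most $|V|$ plus the (already bounded, by the $c$-witness hypothesis combined with Lemma~\ref{lem:witness_size_pareto}) length of the $\reach{0}$-prefix; but since removing cycles only lowers $d$, we can iterate until $d' = \cost{1}{\pi'} \le |V|W$ for the portion after $\reach{0}$, and the portion before $\reach{0}$ contributes, by the standard lasso bound, at most another $|V|W$ — giving $d' \le 2|V|W$ as required. After the first $\reach{1}$-visit we close the play into a simple cycle of length $\le |V|$ exactly as in Lemma~\ref{lem:witness_size_pareto}; this cannot change $\cost{1}{\pi'}$ since $\reach{1}$ has already been reached, nor $\cost{0}{\pi'}$ if $\reach{0}$ has already been reached (and if not, it can only help).

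The remaining and genuinely delicate point — the one I expect to be the main obstacle — is verifying that the \emph{deviation} condition survives the surgery. When we delete a cycle $\pi_{[m,n[}$ to obtain $\pi'$, the deviations of $\pi'$ are in bijection with a subset of the deviations of $\pi$: a deviation $h'v$ of $\pi'$ corresponds to $hv$ where $h$ is obtained from $h'$ by reinserting the deleted cycles. Going from $hv$ to $h'v$ decreases both $w_0(h')\le w_0(h)$ and $w_1(h')\le w_1(h)$, and it decreases the threshold from $d$ to $d'\le d$. We then need: if $v$ was winning for player~$0$ for $\Omega^{(hv)} = \{\pi'' \mid \cost{0}{h\pi''}\le c \text{ or } \cost{1}{h\pi''}>d\}$, then $v$ is winning for $\Omega^{(h'v)}$. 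Rewriting as in Lemma~\ref{lem:pareto-optimal-check-p1-p2} in terms of the residual budgets $c - w_0(h)$ and $d - w_1(h)$ (with $+\infty$ arithmetic), the residual budget $c - w_0(h') \ge c - w_0(h)$ grows, which helps the first disjunct, but the residual $d' - w_1(h')$ could move in either direction since both $d$ and $w_1(h)$ shrank; the clean claim to prove is that $d' - w_1(h') \ge d - w_1(h)$, which holds precisely because the cycle we removed lies in the prefix up to the first $\reach{1}$-visit, so it was counted in $d$ but the deviation point $h$ is either before that cycle (then $w_1(h)=w_1(h')$ and $d' < d$, bad — so we must argue this configuration is handled by the first disjunct, using $c - w_0(h') = c - w_0(h)$ unchanged and the fact that player~$0$'s winning strategy for the \emph{old} game already ensured one of the two disjuncts, and the play reaching $\reach{1}$ is no longer forced) or after that cycle (then $w_1(h) - w_1(h')$ equals the removed cycle weight, which is exactly $d - d'$, so $d' - w_1(h') = d - w_1(h)$, and the same winning strategy works verbatim). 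Making this case split airtight — in particular showing that, for deviation points lying strictly before the first $\reach{1}$-visit, the old winning strategy for $\Omega^{(hv)}$ still yields a winning strategy for $\Omega^{(h'v)}$ with the smaller threshold — is where the real work lies, and it is the step I would write out in full detail.
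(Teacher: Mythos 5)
Your overall strategy (cycle removal plus transferring the deviation games) is the paper's, but two steps of your plan are genuinely broken. First, the ``delicate point'' you defer is analyzed with the inequality reversed. For a deviation $hv$ of $\pi'$ with $h$ lying \emph{before} the removed cycle, we have $h'=h$ and only the threshold changes, from $d$ to $d'\le d$; since $\{\rho \mid \cost{1}{h\rho} > d\} \subseteq \{\rho \mid \cost{1}{h\rho} > d'\}$, the old winning strategy for $\Omega^{(hv)}$ wins the new objective verbatim --- a \emph{smaller} punishment threshold is easier to exceed. The condition needed for the transfer is $d'-w_1(h') \le d-w_1(h)$, not $\ge$ as you claim; consequently the case you label ``bad'' is the trivial one, your proposed repair (``handled by the first disjunct\dots the play reaching $\reach{1}$ is no longer forced'') is not an argument, and the step you announce as ``the real work'' is left unproven. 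The genuinely relevant case is the one where $h$ contains the removed cycle, and there the paper's observation is that, because the cycle precedes both first target visits, the residual budgets for player~$0$ and player~$1$ shift by exactly the cycle weights, so the same strategy is winning for the rewritten objective (and for deviations occurring after the first visit to $\reach{0}$, the objective is trivially met since $\cost{0}{h\rho}\le c$ is already frozen).

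Second, your surgery does not actually yield the bound $2|V|W$. You decline to remove cycles inside the prefix up to the first $\reach{0}$-visit, on the mistaken ground that this could violate $\cost{0}{\pi'}\le c$ --- with nonnegative weights cycle removal can only decrease $\cost{0}$, so it is safe --- and yet your final accounting assumes that very prefix contributes at most $|V|W$ to $\cost{1}{\pi'}$. The appeal to ``the $c$-witness hypothesis combined with Lemma~\ref{lem:witness_size_pareto}'' to bound that prefix is unfounded: $\cost{0}{\pi}\le c$ bounds its $w_0$-weight (by a quantity that is anyway exponential in the input), not its length or its $w_1$-weight. Your removal rules also leave the case where $\reach{1}$ is visited before $\reach{0}$ uncovered. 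The paper instead splits on $k_0\le k_1$ versus $k_1<k_0$ (first visits to $\reach{0}$, $\reach{1}$) and removes all cycles before $k_0$ and between $k_0$ and $k_1$ in the first case, and all cycles before $k_1$ in the second, which gives the $2|V|W$ bound directly. Finally, a minor point: in the case $\cost{1}{\pi}=+\infty$ the lemma imposes no length bound, so $\pi'=\pi$ suffices; your alternative construction relies on an unjustified claim (that player~$0$ can force avoidance of $\reach{1}$ because the single play $\pi$ avoids it), which is a non sequitur even if unnecessary here.
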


\begin{proof}
Let $\pi=\pi_0\pi_1\dots$ be a $c$-witness such that $\cost{1}{\pi} = d$. Let $k_i = \inf\{n \in \N \mid \pi_n \in \reach{i}\}$ for $i \in \{0,1\}$, where $k_0 < +\infty$ by hypothesis. If $\cost{1}{\pi} = +\infty$, then $\pi' = \pi$. Otherwise, we have $k_1 < +\infty$.

Suppose first that $k_0 \leq k_1$. 
\begin{itemize}
    \item If there is a cycle $\pi_{[n,m[}$ before $k_0$, i.e., $m < k_0$, then we can remove it from $\pi$ and still keep a $c$-witness $\pi'$. Indeed, $\cost{i}{\pi'} = \cost{i}{\pi} - w_i(\pi_{[n,m]})$, thus $\cost{0}{\pi'} \leq c$ and $\cost{1}{\pi'} = d - w_1(\pi_{[n,m]})$. Moreover, consider any deviation $hv$ of $\pi$ such that $h$ contains the cycle $\pi_{[n,m[}$. Then, from a winning strategy for player~$0$ from $v$ in $(\arena,\Omega^{(hv)})$ with $\Omega^{(hv)} = \{\rho \in \Plays(v) \mid \cost{0}{h\rho} \leq c \text{ or } \cost{1}{h\rho} > d \}$ and for $h'v$ the deviation of $\pi'$ obtained by removing $\pi_{[n,m[}$ from $h$, we get that $v$ is winning in $(\arena,\Omega^{(h'v)})$ with $\Omega^{(h'v)} = \left\{\rho' \in \Plays(v) \mid \cost{0}{h'\rho'} \leq c - w_0(\pi_{[n,m]})\text{ or } \cost{1}{h'\rho'} > d - w_1(\pi_{[n,m]})\right\}$. 
    \item If there exists a cycle between $k_0$ and $k_1$, we can similarly remove it and again keep a $c$-witness. In this case, notice that $\Omega^{(hv)}$ is always satisfied as $h$ visits $\reach{0}$ and thus $\cost{0}{h\rho} \leq c$.
\end{itemize}
Suppose now that $k_1 < k_0$, then we remove cycles before $k_1$ as we did previously with cycles before $k_0$.

In both cases, by repeatedly removing cycles as explained above, we obtain a $c$-witness $\pi'$ such that $\cost{1}{\pi'} \leq 2|V|W$.
\end{proof}

Witnesses are useful, nevertheless, it would be easier to deal with qualitative objectives instead of quantitative ones. We will show that working with bounded objectives as $\Omega^{(hv)}$ amounts to working with qualitative objectives in a specific arena, called \emph{extended}, where we add accumulated weights in the vertices as well as players who visited their target set. This extended arena is defined for any number $t$ of players in the environment, but used with $t=1$ in the proof of the \exptime{} membership of~\Cref{theorem:ncns-2players-pspace-hard-subset-sum}.

\begin{definition}[Extended arena]
    Let $\game{}=(\arena, (\reach{i})_{i \in \Players})$ be a reachability game with $\arena = (V,E,\Players,(V_i)_{i\in\Players},(w_i)_{i\in\Players})$ and $B = (B_i)_{i\in \Players} \in \N^{t+1}$ be a family of bounds, one for each player. The extended arena $\arena{}_{B}$ of $\game$ is the tuple $(V_B,E_B,\Players,(V_{B,i})_{i \in \Players})$, where
    \begin{itemize}
        \item a vertex $v' \in V_B$ is in the form $v' = (v,c_0,\dots,c_t,F)$ with $v \in V$, $c_i \in \{0,1,\dots,B_i,+\infty\}$ for all $i \in \{0,\dots, t\}$, and $F \subseteq \Players$,
        \item $((v,c_0,\dots,c_t,F),(v',c'_0,\dots,c'_t,F'))\in E_B$ if
        \begin{itemize}
            \item $(v,v')\in E$,
            \item $c_i =
            \begin{cases}
                c_i               & \text{ if $i \in F$ or $c_i = +\infty$}\\
                c_i + w_i((v,v')) & \text{ if $i \not\in F$, $c_i < +\infty$ and $c_i + w_i((v,v')) \leq B_i$}\\
                +\infty           & \text{ otherwise,}
            \end{cases}$
            \item $F' = F \cup \{i \in \Players \mid v' \in \reach{i}\}$,
        \end{itemize}
        \item $V_{B,i} = \{(v,c_0,\dots,c_t,F)\in V_B \mid v\in V_i\}$ for all $i \in \{0,\dots,t\}$.
    \end{itemize}
    Moreover, if $v_0$ is the initial vertex of $\arena$, then $v_0'=(v_0,0,\dots,0,F)$ is the initial vertex of $\arena_B$ with $F = \{i \in \Players \mid v_0 \in \reach{i}\}$.
\end{definition}

In this definition, 
the component $c_i$ is the cumulative weight of player~$i$. It is set to $+\infty$ when it exceeds $B_i$. The set $F$ is the set of all players who visited their target set. When $i \in F$, the component $c_i$ is frozen and no longer changes. Notice that the new arena $\arena_B$ is not weighted.

We clearly have a bijection between $\Plays_\arena(v_0)$ and $\Plays_{\arena_B}(v'_0)$.
The useful difference is that the extended arena can express bounded objectives from the original game as qualitative objectives in the extended arena, e.g., saying that a play $\pi \in \Plays_\arena(v_0)$ has $\cost{i}{\pi}\leq d$, when $B \geq d$, is equivalent to saying that its corresponding play in $\Plays_{\arena_B}(v'_0)$ visits some vertex $(v,c_0,\dots,c_t,F)$ where $c_i \leq d$ and $i \in F$. For a play $\pi\in\Plays_{\arena_B}(v'_0)$, we denote $\pi_{|V}$ the corresponding play in $\Plays_\arena(v_0)$ by taking its projection on the $V$-component. We also denote by $|\arena_B|$ the size of the extended arena equal to $|V| \cdot 2^{t+1} \cdot \prod_{i\in \Players}(B_i+2)$.

We are now able to prove~\Cref{theorem:ncns-2players-pspace-hard-subset-sum}.

\begin{proof}[Proof of \Cref{theorem:ncns-2players-pspace-hard-subset-sum} (Easiness)]
By~\Cref{lem:2p-witness-characterize-solutions}, it is enough to check the existence of a $c$-witness $\pi$ in \exptime{}. Moreover, by \cref{lem:lasso-sufficiency-generic}, we can assume that $\cost{1}{\pi} \in \{0,\dots,2|V|W,+\infty\}$.

We first treat the particular case where there exists a strategy $\strategyfor{0}$ for player~$0$ such that all plays $\rho \in \Plays_\arena(v_0)$ consistent with $\strategyfor{0}$ satisfy $\cost{0}{\rho} < c+1$. The existence of such a strategy (winning in a two-player zero-sum game with a bounded reachability game) can be checked in polynomial time~\cite{DBLP:journals/mst/KhachiyanBBEGRZ08}.

If this algorithm concludes the nonexistence of such a strategy $\strategyfor{0}$, we know by~\Cref{lem:2p-ne-same-cost} that there is no solution $\strategyfor{0}$ to the \prob{NCNS} problem such that $d_{\strategyfor{0}} = +\infty$.
In that case, we consider the next second algorithm: Iterate for each $d \in \{0,\dots,2|V|W\}$ the following steps
\begin{enumerate}
    \item Compute the extended arena $\arena{}_{(c,d)}$ and build two objectives:
    \begin{itemize}
        \item $\Omega_0=\{\pi\in\Plays_{\arena_{(c,d)}}(v'_0) \mid \pi \text{ visits a state } (v,c_0,c_1,F) \text{ with } c_0 \leq c \text{ and } 0 \in F\}$,
        \item $\Omega_1=\{\pi\in\Plays_{\arena_{(c,d)}}(v'_0) \mid \pi \text{ visits a state } (v,c_0,c_1,F) \text{ with } c_1 > d\}$.
    \end{itemize}
    \item Consider the zero-sum game $(\arena_{(c,d)},\Omega)$ with $\Omega = \Omega_0 \cup \Omega_1$, and compute the set $W_0$ of vertices that are winning for player~$0$ for $\Omega$. As $\Omega$ is a (qualitative) reachability game, the set $W_0$ can be computed in time polynomial in $|\arena_{(c,d)}|$ (see e.g. \cite{DBLP:conf/dagstuhl/2001automata}), i.e., in time polynomial in $|V|$ and exponential in $c$ and $W$ since $c$ and $W$ are encoded in binary.
    \item Construct the subarena of $\arena_{(c,d)}$ restricted to $W_0$, and check the existence of a path $\pi$ (the required $c$-witness) visiting a vertex $(v,c_0,c_1,F)$ where $c_0 \leq c$ and $0 \in F$ as well as a vertex $(v',c_0',c_1',F')$ with $c_1' = d$ and $1 \in F'$. This can be done by a graph traversal algorithm in time polynomial in $|W_0|$, so in exponential time. If such a path exists, stop the algorithm, otherwise proceed to the next iteration.
\end{enumerate}

This completes the proof since every step is in \exptime{} and there is an exponential number of iterations.
\end{proof}

Notice that with a little more work, we could find the smallest threshold $c$ for which there exists a solution $\strategyfor{0}$ to the \prob{NCNS} problem if such a solution exists.

\section{Variant of the NCNS Problem}\label{appendix:2p-ncns-exptime-complete-bounded-reach}

To better understand how difficult the \prob{NCNS} problem is, we look at the variant where the rational NE responses of the multi-player environment aim to ensure costs bounded by a given threshold rather than minimizing these costs. This is a perspective studied in~\cite{DBLP:conf/ijcai/RajasekaranBansalV23} in the case of NEs for discounted-sum objectives. The authors call those objectives \emph{satisficing objectives}, establishing a duality with \emph{optimization objectives}. For some reward functions such as discounted-sum, players might deviate for arbitrarily minuscule rewards. Hence, satisficing objectives prevent this behavior since a player is either happy or unhappy, there is no more optimization.

We prove that, when the players have such bounded reachability objectives, the \prob{NCNS} problem becomes \exptimeComplete{}. This result holds already for a one-player environment.

\ncnsboundedexptime*

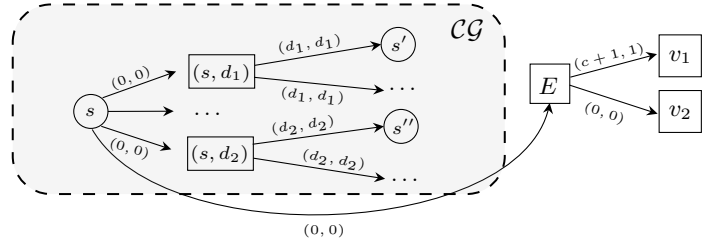
\begin{figure}
    \centering
    \begin{tikzpicture}[x=0.75pt,y=0.75pt,yscale=-1,xscale=1]
    \draw    (227.15,112.38) .. controls (263.54,175.23) and (440.75,167.12) .. (455.12,101.08) ;
    \draw [shift={(455.51,99.06)}, rotate = 99.34] [fill={rgb, 255:red, 0; green, 0; blue, 0 }  ][line width=0.08]  [draw opacity=0] (5.36,-2.57) -- (0,0) -- (5.36,2.57) -- (3.56,0) -- cycle    ;
    \draw   (445.73,79.48) -- (465.52,79.48) -- (465.52,99.27) -- (445.73,99.27) -- cycle ;
    \draw   (510.13,64.8) -- (531.2,64.8) -- (531.2,85.87) -- (510.13,85.87) -- cycle ;
    \draw   (510.13,92.8) -- (531.2,92.8) -- (531.2,113.87) -- (510.13,113.87) -- cycle ;
    \draw    (465.83,86.07) -- (506.17,75.26) ;
    \draw [shift={(509.07,74.48)}, rotate = 165] [fill={rgb, 255:red, 0; green, 0; blue, 0 }  ][line width=0.08]  [draw opacity=0] (5.36,-2.57) -- (0,0) -- (5.36,2.57) -- (3.56,0) -- cycle    ;
    \draw    (465.43,90.07) -- (506.55,103.55) ;
    \draw [shift={(509.4,104.48)}, rotate = 198.15] [fill={rgb, 255:red, 0; green, 0; blue, 0 }  ][line width=0.08]  [draw opacity=0] (5.36,-2.57) -- (0,0) -- (5.36,2.57) -- (3.56,0) -- cycle    ;
    \draw  [fill={rgb, 255:red, 155; green, 155; blue, 155 }  ,fill opacity=0.1 ][dash pattern={on 4.5pt off 4.5pt}][line width=0.75]  (187.47,68.57) .. controls (187.47,58.04) and (196.01,49.5) .. (206.54,49.5) -- (413.99,49.5) .. controls (424.53,49.5) and (433.07,58.04) .. (433.07,68.57) -- (433.07,125.79) .. controls (433.07,136.33) and (424.53,144.87) .. (413.99,144.87) -- (206.54,144.87) .. controls (196.01,144.87) and (187.47,136.33) .. (187.47,125.79) -- cycle ;
    \draw   (217.93,103.31) .. controls (217.93,98.6) and (221.76,94.78) .. (226.47,94.78) .. controls (231.18,94.78) and (235.01,98.6) .. (235.01,103.31) .. controls (235.01,108.03) and (231.18,111.85) .. (226.47,111.85) .. controls (221.76,111.85) and (217.93,108.03) .. (217.93,103.31) -- cycle ;
    \draw    (232.34,96.44) -- (266.21,84.22) ;
    \draw [shift={(269.03,83.2)}, rotate = 160.15] [fill={rgb, 255:red, 0; green, 0; blue, 0 }  ][line width=0.08]  [draw opacity=0] (5.36,-2.57) -- (0,0) -- (5.36,2.57) -- (3.56,0) -- cycle    ;
    \draw    (231.68,110.78) -- (263.79,121.45) ;
    \draw [shift={(266.63,122.4)}, rotate = 198.39] [fill={rgb, 255:red, 0; green, 0; blue, 0 }  ][line width=0.08]  [draw opacity=0] (5.36,-2.57) -- (0,0) -- (5.36,2.57) -- (3.56,0) -- cycle    ;
    \draw    (235.01,103.31) -- (264.83,103.21) ;
    \draw [shift={(267.83,103.2)}, rotate = 179.8] [fill={rgb, 255:red, 0; green, 0; blue, 0 }  ][line width=0.08]  [draw opacity=0] (5.36,-2.57) -- (0,0) -- (5.36,2.57) -- (3.56,0) -- cycle    ;
    \draw   (275.43,74.8) -- (308.23,74.8) -- (308.23,91.6) -- (275.43,91.6) -- cycle ;
    \draw    (308.23,80.14) -- (368.41,70.09) ;
    \draw [shift={(371.37,69.6)}, rotate = 170.52] [fill={rgb, 255:red, 0; green, 0; blue, 0 }  ][line width=0.08]  [draw opacity=0] (5.36,-2.57) -- (0,0) -- (5.36,2.57) -- (3.56,0) -- cycle    ;
    \draw    (307.9,86.14) -- (369.92,92.49) ;
    \draw [shift={(372.9,92.8)}, rotate = 185.85] [fill={rgb, 255:red, 0; green, 0; blue, 0 }  ][line width=0.08]  [draw opacity=0] (5.36,-2.57) -- (0,0) -- (5.36,2.57) -- (3.56,0) -- cycle    ;
    \draw    (307.23,119.81) -- (369.79,111.24) ;
    \draw [shift={(372.77,110.83)}, rotate = 172.2] [fill={rgb, 255:red, 0; green, 0; blue, 0 }  ][line width=0.08]  [draw opacity=0] (5.36,-2.57) -- (0,0) -- (5.36,2.57) -- (3.56,0) -- cycle    ;
    \draw    (307.57,126.81) -- (370.4,136.73) ;
    \draw [shift={(373.37,137.2)}, rotate = 188.97] [fill={rgb, 255:red, 0; green, 0; blue, 0 }  ][line width=0.08]  [draw opacity=0] (5.36,-2.57) -- (0,0) -- (5.36,2.57) -- (3.56,0) -- cycle    ;
    \draw   (371.37,69.6) .. controls (371.37,64.89) and (375.19,61.06) .. (379.9,61.06) .. controls (384.62,61.06) and (388.44,64.89) .. (388.44,69.6) .. controls (388.44,74.31) and (384.62,78.14) .. (379.9,78.14) .. controls (375.19,78.14) and (371.37,74.31) .. (371.37,69.6) -- cycle ;
    \draw   (372.77,110.83) .. controls (372.77,106.12) and (376.59,102.3) .. (381.3,102.3) .. controls (386.02,102.3) and (389.84,106.12) .. (389.84,110.83) .. controls (389.84,115.55) and (386.02,119.37) .. (381.3,119.37) .. controls (376.59,119.37) and (372.77,115.55) .. (372.77,110.83) -- cycle ;
    \draw   (274.77,116.13) -- (307.57,116.13) -- (307.57,132.93) -- (274.77,132.93) -- cycle ;

    \draw (448.67,85.13) node [anchor=north west][inner sep=0.75pt]  [font=\small] [align=left] {$E$};
    \draw (513.73,71.4) node [anchor=north west][inner sep=0.75pt]  [font=\small] [align=left] {$v_{1}$};
    \draw (513.73,99.4) node [anchor=north west][inner sep=0.75pt]  [font=\small] [align=left] {$v_{2}$};
    \draw (465,76.31) node [anchor=north west][inner sep=0.75pt]  [font=\tiny,rotate=-345.28] [align=left] {$(c+1,1)$};
    \draw (473.02,96.19) node [anchor=north west][inner sep=0.75pt]  [font=\tiny,rotate=-18.17] [align=left] {$(0,0)$};
    \draw (331.03,158.6) node [anchor=north west][inner sep=0.75pt]  [font=\tiny] [align=left] {$(0,0)$};
    \draw (405.02,56.13) node [anchor=north west][inner sep=0.75pt]  [font=\normalsize] [align=left] {$\mathcal{CG}$};
    \draw (222.33,100.53) node [anchor=north west][inner sep=0.75pt]  [font=\scriptsize] [align=left] {$s$};
    \draw (276.63,102.73) node [anchor=north west][inner sep=0.75pt]  [font=\footnotesize] [align=left] {$\dotsc $};
    \draw (275.77,78.13) node [anchor=north west][inner sep=0.75pt]  [font=\scriptsize] [align=left] {$(s,d_{1})$};
    \draw (274.97,119.13) node [anchor=north west][inner sep=0.75pt]  [font=\scriptsize] [align=left] {$(s,d_{2})$};
    \draw (373.9,90) node [anchor=north west][inner sep=0.75pt]  [font=\footnotesize] [align=left] {$\dotsc $};
    \draw (374.93,64.13) node [anchor=north west][inner sep=0.75pt]  [font=\scriptsize] [align=left] {$s'$};
    \draw (375.1,136) node [anchor=north west][inner sep=0.75pt]  [font=\footnotesize] [align=left] {$\dotsc $};
    \draw (375.6,105.73) node [anchor=north west][inner sep=0.75pt]  [font=\scriptsize] [align=left] {$s''$};
    \draw (317.79,67.57) node [anchor=north west][inner sep=0.75pt]  [font=\tiny,rotate=-351.2] [align=left] {$(d_{1} ,d_{1})$};
    \draw (314.54,108.33) node [anchor=north west][inner sep=0.75pt]  [font=\tiny,rotate=-351.92] [align=left] {$(d_{2} ,d_{2})$};
    \draw (233.78,86.54) node [anchor=north west][inner sep=0.75pt]  [font=\tiny,rotate=-339.05] [align=left] {$(0,0)$};
    \draw (235.84,114.28) node [anchor=north west][inner sep=0.75pt]  [font=\tiny,rotate=-17.09] [align=left] {$(0,0)$};
    \draw (320.23,89.2) node [anchor=north west][inner sep=0.75pt]  [font=\tiny,rotate=-5.43] [align=left] {$(d_{1} ,d_{1})$};
    \draw (331.29,120.3) node [anchor=north west][inner sep=0.75pt]  [font=\tiny,rotate=-9.12] [align=left] {$(d_{2} ,d_{2})$};
    \end{tikzpicture}
    \caption{Reduction from countdown games to the \prob{NCNS} problem with two players and bounded reachability objectives.}
    \label{fig:threshold-NCNS-2players}
\end{figure}

\begin{proof}
    For the \exptime{} upper bound, we only give a sketch of the proof. Given an objective $\reachBounded{d_i}{\reach{i}}$ for each player~$i \in \{1,\dots,t\}$ and $\reachBounded{c+1}{\reach{0}}$ for player~$0$, we can construct an adapted version of the B\"uchi tree automata from~\cite{ConduracheFGR16}. In short, their tree automaton works with qualitative reachability objectives. It guesses at each vertex $v \in V_i$ along a branch whether $v$ is in the winning region of player~$i$, and the move of a winning strategy from $v$. Then, it keeps track of deviations in case player~$i$ does not follow the guessed winning strategy. The objective of the tree automaton is a B\"uchi objective verifying whether every guess is done consistently with the infinite branch, i.e., when a player always followed a winning strategy, he actually wins; but it also verifies whether either player~$0$ visits $\reach{0}$ or a player~$i$ has a profitable deviation, i.e., player~$i$ saw the winning region, deviated and loses.

    The needed changes to shift to bounded reachability objectives are the extension of the tree automaton with accumulated weights, as it is done for the extended arena defined in \Cref{appendix:exp-algo-2p-ncns}, to keep track of players who are not able to satisfy their bounded reachability objectives because they exceed their bound $d_i$. Guesses of winning regions and winning strategies are done as before. Therefore, the constructed tree automaton has an exponential size, and deciding whether a B\"uchi tree automaton has an empty language can be done in time polynomial in this tree automaton. In other words, our algorithm is in \exptime{}.

    The lower bound is obtained by reduction from the countdown game problem to our problem with one player in the environment. Given a countdown game $\mathcal{CG}$ and a threshold $c$, we build a game $\game$ as illustrated in \Cref{fig:threshold-NCNS-2players} (where the grey part contains the countdown game), with two players~$0$ and~$1$ and where the objective of both players is $\reachBounded{c+1}{\{v_1,v_2\}}$. Player~$0$ owns the circle vertices and can decide to exit the grey part of the figure to go to the vertex $E$. Player~$1$ owns the square vertices, in particular $E$ from which he can go either to $v_1$ or to $v_2$. The weight function of each player is indicated on the edges. The initial vertex of $\game$ is the initial vertex of the countdown game.

    Suppose first that player~$0$ has a winning strategy $\strategyfor{0}$ in $\mathcal{CG}$. We can simulate it in $\game{}$ and make player~$0$ exit $\mathcal{CG}$ when his cost is exactly $c$. From $E$, if player~$1$ chooses to go to $v_1$, then he gets a cost of $c+1$, and thus his objective $\reachBounded{c+1}{\{v_1,v_2\}}$ is not satisfied. Therefore, every \fixedEquilibriaStrategy{} outcome $\pi$ necessarily goes to $v_2$, and thus we get $\cost{0}{\pi} = c$ showing that $\strategyfor{0}$ is solution to the modified \prob{NCNS} problem.

    Suppose now that player~$0$ has no winning strategy in the countdown game. Then, for every strategy $\strategyfor{0}$ in $\mathcal{CG}$, there is a play $\pi$ consistent with $\strategyfor{0}$ that never reaches a cost of exactly $c$ in $\mathcal{CG}$. Let us consider the possible cases in $\game$. If $\pi$ stays forever in $\mathcal{CG}$, we have $\cost{0}{\pi} > c$. If player~$0$ leaves $\pi$ to go to $E$, the current accumulated weight in $E$ is either $k < c$ or $k>c$.
    \begin{itemize}
    \item In $k < c$, going to $v_1$ or $v_2$ leads to two \fixedEquilibriaStrategy{} outcomes, as the objective of player~$1$ is satisfied in both outcomes. And when going to $v_1$, the cost for player~$0$ is $k+c+1 > c$.
    \item If $k > c$, we have again two \fixedEquilibriaStrategy{} outcomes when going to $v_1$ or $v_2$, but with the objective of player~$1$ being not satisfied. And for both outcomes, the cost for player~$0$ is $> c$. 
    \end{itemize}
    Hence there exists no solution to the modified \prob{NCNS} problem.
\end{proof}

\end{document}